\newtheorem{theorem}{Theorem}
\newtheorem{lemma}[theorem]{Lemma}
\newtheorem{corollary}[theorem]{Corollary}
\newtheorem{proposition}[theorem]{Proposition}
\newtheorem{definition}[theorem]{Definition}
\newtheorem*{theorem*}{Theorem}
\newtheorem*{lemma*}{Lemma}
\newtheorem*{corollary*}{Corollary}
\newtheorem*{proposition*}{Proposition}
\newtheorem*{claim*}{Claim}
\newtheorem*{definition*}{Definition}
\newcommand{\R}{\mathcal{R}}
\newcommand{\F}{\mathcal{F}}
\newcommand{\G}{\mathcal{G}}
\renewcommand{\S}{\mathcal{S}}
\newcommand{\eqSet}{\S}
\newcommand{\FF}{\mathbb{F}}
\newcommand{\NN}{\mathbb{N}}
\newcommand{\reals}{\mathbb{R}}
\newcommand{\ang}[1]{\langle #1 \rangle}
\newcommand{\pc}[1]{\mathsf{PC}_{#1}}
\newcommand{\pcrad}[1]{\mathsf{PC}^{\textrm{rad}}_{#1}}
\newcommand{\pcP}{\mathsf{PC}^{+}}
\newcommand{\pcPR}{\mathsf{PC}^{+}_\mathcal{R}}
\newcommand{\pcd}[1]{\mathsf{PC}_{#1,d}}
\newcommand{\pcdradI}[1]{\mathsf{PC}^{\text{rad}}_{#1,d}}
\newcommand{\sos}{\ensuremath{\mathsf{SoS}}}
\newcommand{\sosd}{\mathsf{SoS}_d}
\newcommand{\sosbool}{\mathsf{SoS + Bool}}
\newcommand{\pcbool}{\mathsf{PC + Bool}}
\newcommand{\bool}{\mathsf{Bool}}
\newcommand{\bPHP}{\mathrm{b \, \!  PHP}}
\newcommand{\fPHP}{\mathrm{f \, \! PHP}}
\newcommand{\tpc}[1]{\mathsf{TPC}_{#1}} 
\newcommand{\tsos}{\mathsf{TSoS}}
\newcommand{\tsoso}{\tsos_{\ge}}
\newcommand{\LngPC}[1]{\mathcal{L}_{=}^{#1}}
\newcommand{\LngSoS}{\mathcal{L}_{\geq}}
\newcommand{\IndPC}[1]{\Phi_=^{#1}}
\newcommand{\IndSoS}{\Phi_{\geq}}
\newcommand{\Find}{F_\mathrm{ind}}
\newcommand{\Fring}{F_\mathrm{ring}}
\newcommand{\LK}{\mathsf{LK}}
\newcommand{\LKR}{\LK_\R}
\newcommand{\ptDef}[1]{\ang{#1}_{\alpha}}
\newcommand{\ptLDef}[1]{\ang{#1}_{\alpha}^L}
\newcommand{\ptRDef}[1]{\ang{#1}_{\alpha}^R}
\newcommand{\ptL}[2]{\ang{#1}_{#2}^L}
\newcommand{\ptR}[2]{\ang{#1}_{#2}^R}
\newcommand{\vi}{\bar i}
\newcommand{\vn}{\bar n}
\newcommand{\vs}{\bar s}
\newcommand{\vt}{\bar t}
\newcommand{\vy}{\bar y}
\renewcommand{\phi}{\varphi}
\newcommand{\NP}{\ensuremath{\mathsf{NP}}}
\newcommand{\iddo}[1]{}
\renewcommand{\marginpar}[1]{}
\begin{document}

\sloppy

\title{First-Order Reasoning and Efficient Semi-Algebraic Proofs}

\newcommand\extrafootertext[1]{%
    \bgroup
    \renewcommand\thefootnote{\fnsymbol{footnote}}%
    \renewcommand\thempfootnote{\fnsymbol{mpfootnote}}%
    \footnotetext[0]{#1}%
    \egroup
}

\author[1, 2]{Fedor Part}
\author[2]{Neil Thapen}
\author[3]{Iddo Tzameret}

\affil[1]{\footnotesize JetBrains Research}
\affil[2]{\footnotesize Institute of Mathematics of the Czech Academy of Sciences}
\affil[3]{\footnotesize Department of Computing, Imperial College London}

\extrafootertext{\hspace{-18pt} The first and second authors are partially supported by grant 19-05497S of GA \v{C}R. 
Part of this work was done on a visit of the third author to the Czech Academy of Sciences.
The Institute of Mathematics of the Czech Academy of Sciences is supported by RVO: 67985840. A preliminary version of this work appears in  \textit{36th Ann.~Symp.~Logic~Comput.~Science (LICS) 2021 \cite{PTT2021}.}

Author emails: 
\texttt{fedor.part@gmail.com},
\texttt{thapen@math.cas.cz},
\texttt{iddo.tzameret@gmail.com}.
}

\maketitle

\begin{abstract}
Semi-algebraic proof systems such as sum-of-squares (\sos) have attracted a lot of attention recently due to their relation to approximation algorithms: constant degree semi-algebraic proofs lead to conjecturally optimal polynomial-time approximation algorithms for important \NP-hard optimization problems. Motivated by the need to allow a more streamlined and uniform framework for working with \sos\ proofs than  the restrictive propositional level, we initiate a systematic first-order logical investigation into the kinds of reasoning possible in algebraic and semi-algebraic proof systems. Specifically, we develop first-order theories that capture in a precise manner constant degree algebraic and semi-algebraic proof systems: every statement of a certain form that is provable in our theories translates into a family of constant degree polynomial calculus or \sos\ refutations, respectively; and using a reflection principle, the converse also holds. 

This places algebraic and semi-algebraic proof systems in the established framework of bounded arithmetic, while providing theories corresponding to systems that vary quite substantially from the usual propositional-logic ones.

We give examples of how our semi-algebraic  theory proves statements such as the pigeonhole principle, we provide a separation between algebraic and semi-algebraic theories, and we describe initial attempts to  go beyond these theories by introducing extensions that use the inequality symbol, identifying along the way which extensions lead outside the scope of constant degree \sos. Moreover, we  prove new results for propositional proofs, and specifically  extend Berkholz's dynamic-by-static simulation of polynomial calculus (PC) by \sos\ to PC with the radical rule.

\end{abstract}


\section{Introduction}

This work introduces and exemplifies first-order logical theories that capture algebraic and semi-algebraic propositional proofs. While algebraic proof systems such as the polynomial calculus \cite{CEI96} have played a central role in proof complexity, semi-algebraic proof systems and specifically \textit{sum-of-squares} (also known as \textit{Lassere}, or as a restriction of the \textit{Positivstellensatz} proof system) have attracted a lot of attention in recent years. Semi-algebraic proofs have been brought to  the attention of complexity theory from optimization~\mbox{\cite{LS91,Lov94};}
by the works of Pudl\'ak \cite{Pud99} and Grigoriev and Vorobjov~\cite{GV02}~(cf.~\cite{GHP02}); and more recently through their connection to approximation algorithms with the work of Barak \emph{et al.}~\cite{BarakBHKSZ12} (see for example \cite{OZ13} and the  excellent survey by Fleming, Kothari and Pitassi~\cite{FKP19}).

%

What makes \sos\  important, for example to polynomial optimization, is the fact that the existence of a degree-$d$ \sos\ certificate can be formulated as the feasibility of a semidefinite program, and hence \textit{can be solved in polynomial time}. In this sense, \sos\ is said to be an \emph{automatable} proof system (see some restrictions on this in \cite{ODo16}). \marginpar{I don't understand precisely this sentence: "This is the degree $d$ \sos\ relaxation first introduced by Shor [Sho87], and expanded upon by later works of Nesterov [Nes00], Grigoriev and Vorobjov [GV02], Lasserre [Las00, Las01] and Parrilo [Par00]. (see, e.g., [Lau09, BS14] for many more details)", From Raghavendra-Weitz. Any suggestion on what shall we write?}%

Due to its importance in algorithm design and approximation theory, bootstrapping  \sos, that is,
providing efficient low-degree \sos\ proofs of basic facts (see for example~\cite{OZ13}),
 is of central importance to these systems. It is thus  natural to aspire for a more elegant and streamlined way to reason about \sos\ proofs, perhaps analogous to the established machinery of bounded arithmetic. 

One particular motivation for this work is a kind of heuristic that appears in the literature
about constructing sum-of-squares proofs. 
Quoting from Barak's lecture notes~\cite{barak_notes}:
`\! ``Theorem": If a polynomial $P$ is non-negative and 
``natural" (i.e. constructed by methods known to Hilbert
--- not including probabilistic method), then there
should be a low degree SOS proof for the fact [that $P$ is non-negative].'\footnote{A consequence of this is ``Marley's Corollary" on analyzing the performance of \sos\ algorithms \cite{barak_notes}.}
This work is an approach towards making this idea more formal.

Bounded arithmetic theories are weak first-order theories for natural numbers that serve as uniform versions of propositional proof systems (cf.~\cite{Bus86,HP93,Kra95,CN10}). On the one hand, bounded arithmetic constitutes the ``proof-theoretic approach'' to computational complexity  in terms of developing the meta-mathematics of  complexity  (demonstrating  for example the minimal  reasoning power sufficient to prove major results in computational complexity), while on the other hand it constitutes an elegant  way to facilitate short propositional proofs that avoids the need to actually work in the somewhat cumbersome ``machine code" level of propositional proofs themselves. This is achieved using \emph{propositional translations}: first-order proofs in bounded arithmetic  translate into  corresponding short propositional proofs.  

Propositional translations in bounded arithmetic have a long history and go back to Paris and Wilkie~\cite{PW85}. Our translations are inspired in particular by Beckmann, Pudl\'{a}k and Thapen~\cite{BPT14}. Our \emph{theories} on the other hand are inspired to a certain extent by works of Soltys and Cook \cite{SC04} and Thapen and Soltys \cite{TS05}  that showed how to incorporate arbitrary ring elements and their operations in bounded arithmetic theories, as well as by the work of Buss, Kolodziejczyk and Zdanowski~\cite{BKZ12}. It is worth mentioning that although our theories fit naturally
into the framework of bounded arithmetic, they are not technically bounded; since we only care about degree of propositional proofs, not size, we allow unbounded quantifiers.

\subsection{Our results}

Our results contribute both to propositional proof complexity  and to bounded arithmetic. We describe them in general terms below, referring to the specific sections for more details. 

\subsubsection{Propositional proofs}
In Section~\ref{sec:prop_systems} we define the propositional
proof systems we study, and show some relationships between them. We note that we care only about the degree of derivations and not their size (as measured, for example, by the number of monomials). 
In particular we introduce two natural extensions of  the polynomial calculus (PC), as follows. 


Let $\pc\R$ be the polynomial calculus over the ring~$\R$. 
We introduce the system $\pcrad{\R}$ which is  $\pc\R$ plus the \emph{radical rule}\footnote{
Grigoriev and Hirsch \cite{GH03} were the first to consider the radical rule, to the best of our knowledge, although in~\cite{GH03} this was done in the context of a much stronger system, namely PC over algebraic formulas. Independently of our work, Alekseev \cite{Ale20} also considered PC with the radical rule, and for similar reasons to us.
}: from $p^2 = 0$ derive $p=0$, for a polynomial $p$. This extension of PC is arguably a more natural proof system than PC,
in the sense that the Nullstellensatz,  which underlies the completeness of algebraic proof systems, states that if a polynomial $p$ is implied by a set of polynomials $J$ then  $p$ is in the radical of the ideal generated by $J$; that is, $p$ is in $\sqrt{\langle J \rangle} = \{q\;:\; q^k\in \langle J\rangle \text{ for some } k\in\mathbb{N} \}$. This appearance of
a radical is captured by the radical rule, and in particular PC with
this rule is implicationally complete over algebraically closed fields, as we
observe in Proposition~\ref{the:alg_closed_complete}, which is not true for PC without this rule unless we add the Boolean axioms. Moreover, this rule allows for simulation
of logical contraction, which we need for our translation results. 

We then introduce the system $\pcPR$ which is  $\pc\R$ plus the radical rule
and the sum-of-squares rule: from $p^2 + \sum_i q_i^2 = 0$ derive $p^2  = 0$, for $p,q$ polynomials. We define $\pcP$ to be
$\pcP_\reals$ (that is, over the reals).

Recall that a proof system is \emph{implicationally complete} if, whenever a  set of equations~$\F$ implies an equation $q=0$,  there is a derivation $\F \vdash q=0$ in the system. It is known that $\pc\R$ is implicationally
complete in the presence of the Boolean axioms~\cite[Theorem 5.2]{BeameIKPP96}, while in general it is not implicationally complete without them. We show that, in contrast, $\pcrad\R$ is implicationally complete if $\R$ is an algebraically closed field, and $\pcP$ is implicationally complete (over the reals).



In Propositions \ref{pro:pcas_positive_char} and  \ref{prop:radLB} we show that whether the radical rule provides more strength to PC depends on the underlying ring.
%
Finally, we extend a result by Berkholz \cite{Ber18}, and show that the static system \sos\ and the dynamic system  $\pcPR$ simulate each other (with respect to degree):
\begin{theorem*}[Theorem \ref{thm:pcPSoSsim} and proposition \ref{pro:sos_to_pcp}; informal]
In the presence of the Boolean axioms,
$\sos$ and $\pcP$ simulate each other (with respect to degree). 
\end{theorem*}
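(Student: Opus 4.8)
The plan is to establish the two directions of the simulation separately, since they are of quite different character. For the direction $\pcP \to \sos$ (Proposition~\ref{pro:sos_to_pcp}), the idea is to show that a degree-$d$ $\pcP$ refutation of a set of equations can be converted into a degree-$O(d)$ $\sos$ refutation, working in the presence of the Boolean axioms $x_i^2 - x_i = 0$. The key observation is that each line of a dynamic $\pcP$ derivation is an equation $p = 0$, which in the static $\sos$ world is recorded as the pair of inequalities $p \ge 0$ and $-p \ge 0$; one then checks that each $\pcP$ inference rule (the linear combination and multiplication rules of $\pc\reals$, the radical rule, and the sum-of-squares rule) can be absorbed into a single $\sos$ certificate. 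For the two PC rules this is routine; for the radical rule, from $p^2 = 0$ one must derive $p = 0$, and here the Boolean axioms are essential: over $\{0,1\}$ assignments every polynomial can be multilinearized, and a multilinear $p$ with $p^2 \equiv 0$ is identically $0$ modulo the Boolean ideal, so the radical rule becomes a derived rule of bounded degree. The sum-of-squares rule $p^2 + \sum_i q_i^2 = 0 \vdash p^2 = 0$ is immediate in $\sos$ since it is literally a positivity fact. Collecting these local simulations and noting that $\sos$ as a static system can take a positive combination of all the intermediate certificates yields the refutation.

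For the harder direction $\sos \to \pcP$ (Theorem~\ref{thm:pcPSoSsim}), the plan is to follow and extend Berkholz's dynamic-by-static argument~\cite{Ber18}, which shows that a degree-$d$ $\sos$ refutation can be simulated by a degree-$O(d)$ PC-style dynamic refutation, at the cost of needing the radical rule (and, for $\sos$, the sum-of-squares rule, to handle the squared terms appearing in a Positivstellensatz certificate). Concretely, a degree-$d$ $\sos$ refutation of $\{f_j \ge 0\}$ (together with the Boolean axioms and their negations encoding a contradiction) is an identity $-1 = \sum_j \sigma_j f_j + \sigma_0$ where each $\sigma_j$ is a sum of squares of polynomials of degree $\le d/2$. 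The strategy is to process this certificate inside $\pcP$: each square $q^2$ that appears is introduced using the sum-of-squares rule to discard the other squares in a group, and the radical rule is then used to pass from $q^2 = 0$ to $q = 0$ when the certificate forces a square to vanish; linear-combination and multiplication rules of $\pc\reals$ then assemble the pieces. Berkholz's original analysis already contains the bookkeeping showing the degree only blows up by a constant factor; the new content here is simply checking that the rules of $\pcP$ (rather than some ad hoc extension of PC) suffice, which is essentially by design, since $\pcP$ was defined precisely with the radical and sum-of-squares rules that Berkholz's simulation calls for.

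The main obstacle I expect is managing the interaction between the radical rule and the degree bound in the $\sos \to \pcP$ direction: when Berkholz's argument extracts an equation $q = 0$ from a vanishing square $q^2 = 0$, one must ensure that $q^2$ (and all the polynomials derived en route) stay within degree $O(d)$, and that the derivation of $q^2 = 0$ itself — which comes from the $\sos$ certificate's structure rather than from a single clean subexpression — can be carried out in $\pcP$ at that degree. A secondary subtlety is the precise role of the Boolean axioms: they are needed both to make the radical rule behave well (as in the easy direction) and, in the $\sos$ world, to ensure the refutation is genuinely of a propositional contradiction; one should be careful to state in which ideal/cone the manipulations take place. Once these points are pinned down, the two propositions combine to give the claimed mutual simulation with respect to degree.
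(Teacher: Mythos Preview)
You have the two directions swapped, both in labeling and in your assessment of which is hard. Proposition~\ref{pro:sos_to_pcp} is the direction $\sos \Rightarrow \pcP$ (converting an $\sos$ refutation into a $\pcP$ one), and it is the \emph{easy} direction: given $\sum_i r_i p_i + \sum_j s_j^2 \equiv -1$, simply derive $-\sum_i r_i p_i = 0$ in $\pc\reals$ from the axioms $p_i = 0$, observe this is the same polynomial as $1 + \sum_j s_j^2$, and apply the sum-of-squares rule once to get $1=0$. No radical rule, no Boolean axioms, no Berkholz machinery. Your elaborate plan for this direction (extracting individual squares, using the radical rule) is unnecessary.

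The genuine content is Theorem~\ref{thm:pcPSoSsim}, converting $\pcP$ refutations into $\sosbool$ refutations, and this is where Berkholz's technique lives and must be extended. Your proposal for this direction has a real gap. You suggest recording each dynamic line $p=0$ as the pair $p \ge 0$, $-p \ge 0$ and then handling the radical rule via multilinearization, arguing that ``a multilinear $p$ with $p^2 \equiv 0$ is identically $0$ modulo the Boolean ideal.'' But a derived line $p^2 = 0$ does \emph{not} mean $p^2$ is the zero polynomial modulo the Boolean ideal; it only means $p^2$ vanishes on the variety cut out by \emph{all} the axioms, so this argument does not go through. More fundamentally, $\sos$ is static: you cannot chain derivations of $p \ge 0$ and $-p \ge 0$ through successive lines. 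Berkholz's actual idea, which the paper extends, is to maintain for each $\pcP$ line $p=0$ an $\sosbool$ derivation of $-p^2 + \epsilon \ge 0$ for arbitrary $\epsilon > 0$, and to show this invariant propagates through every rule. The radical rule $r_s^2 = r_k \Rightarrow r_s = 0$ is then handled by the identity
\[
\tfrac{1}{2\epsilon}\bigl[-r_k^2 + \epsilon^2 + (\epsilon - r_k)^2\bigr] \equiv -r_k + \epsilon = -r_s^2 + \epsilon,
\]
and the sum-of-squares rule by peeling off squares from $-(p^2 + \sum q_i^2)^2 + \epsilon$. The Boolean axioms enter only in the multiplication-rule step, via $(r_k - x r_k)^2 + (-2r_k^2)(x^2 - x) \equiv r_k^2 - x^2 r_k^2$. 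This $\epsilon$-approximation is the missing idea in your proposal.
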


\marginpar{- We argue that PC-rad is more natural than PC, due to implicational completeness; contraction simulation, etc.: the motivation for PCrad is: it seems like a natural
strengthening, and we use it to handle the contraction rule. That is -
when we translate boolean logic into polynomials, we translate
disjunction as multiplication, and then we seem to need something like
this to handle contraction.
}

\subsubsection{The first-order theories}  
In Section~\ref{sec:TPC} we define the  first-order, algebraic theories
$\tpc\R$ and $\tsos$ which we will later show capture reasoning in constant degree  polynomial calculus and constant degree sum of squares propositional proof systems, respectively.

Specifically, let $\R$ be an integral domain. $\tpc{\R}$ is a  two-sorted theory in the language~$\LngPC{\R}$ with a \emph{ring} sort and an \emph{index} sort.
Index elements model natural numbers. Apart from the usual $+,\cdot$ operations the language contains the ring-valued oracle symbol $X(i)$ where $i$ is an index-sort, as well as a ring-sort big-sum operator. The intended meaning of~$X(i)$ is the $i$th element in an otherwise unspecified sequence of ring-sort values.


This language has the important property that terms translate
into families of polynomials of bounded degree,
in propositional variables~$X(i)$, parametrized
by their index arguments (the converse is also true).
Similarly atomic formulas translate into families of polynomial
equations.

The theory $\tpc{\R}$ consists of the  \emph{basic axioms} containing the usual ring axioms, the integral domain axiom, 
an axiom inductively defining big sums, some background truth axioms for index sorts, and the \emph{induction scheme} for a specific class of well-behaved formulas.
The theory $\tsos$ additionally contains the \emph{sum-of-squares} scheme:
for each ring-valued term~$t(i)$, in which other parameters can also occur, the axiom
$\sum\limits_{i<n}t(i)^2=0 \wedge j<n \  \supset \ t(j)=0.$
For technical reasons, we also add first-order Boolean axioms.

In Section~\ref{sec:first_order_examples} we give
examples of what proofs look like in these first-order
theories, by proving some versions of the pigeonhole 
principle.

\subsubsection{Propositional translations}
In Section~\ref{sec:translate_language} we start to 
describe our translation, by showing how to translate
formulas in our first-order language into families of polynomial
 equations.
In Section~\ref{sec:trans_tpcR} we show how
first-order $\tpc\R$ proofs can be translated into
constant-degree $\pcrad\R$ refutations:

\begin{theorem*}[Theorem \ref{thm:transTPC}; informal]
Let $\phi(\bar i)$ be a certain ``well-behaved'' formula with 
free index variables $\bar i$ and no free ring variables.
Suppose $\tpc\R \vdash \forall \bar i \neg \phi(\bar i)$.
Then there is a constant degree $\pcrad{\R}$ refutation of the propositional translation of $\varphi$. 
\end{theorem*}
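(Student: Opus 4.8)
The plan is to prove the translation theorem by induction on the structure of a $\tpc\R$ proof of $\forall \bar i\, \neg\phi(\bar i)$, maintaining a suitable invariant relating each line of the first-order proof to a family of constant-degree $\pcrad\R$ derivations parametrized by index-sort values. Concretely, I would first fix the propositional translation map from the excerpt's Section~\ref{sec:translate_language}: a ring-sort term $t(\bar i)$ maps to a family $\{\llbracket t\rrbracket_{\bar a}\}$ of polynomials of degree bounded by a constant depending only on $t$, in the variables $X(\cdot)$; an atomic equation $t=s$ maps to the polynomial equation $\llbracket t\rrbracket_{\bar a} - \llbracket s\rrbracket_{\bar a} = 0$; and a general well-behaved formula $\psi$ maps, after putting it in the right normal form, to a finite set of such polynomial equations (with disjunctions translated by multiplication, conjunctions by taking unions, and bounded quantifiers unfolded). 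Since $\phi$ has no free ring variables and $\tpc\R$ proves $\forall\bar i\,\neg\phi$, the translation of $\phi$ at any tuple $\bar a$ is an unsatisfiable system, and the goal is to produce a constant-degree $\pcrad\R$ refutation of it.

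The core of the argument is a line-by-line simulation. For each axiom of $\tpc\R$ I must exhibit constant-degree $\pcrad\R$ derivations of (the translation of) its instances: the ring axioms and the integral-domain axiom translate into polynomial identities or into instances handled by the radical rule (this is precisely where $\pcrad\R$, rather than plain $\pc\R$, is needed, as flagged in the excerpt's marginal remark on contraction); the big-sum defining axiom translates because $\llbracket \sum_{i<n} t(i)\rrbracket$ is literally the corresponding polynomial sum; the background index-arithmetic axioms have trivial translations since they involve no ring variables; and the first-order Boolean axioms translate to the propositional Boolean axioms $X(i)^2 = X(i)$, which $\pcrad\R$ can use. For each inference rule of the first-order sequent calculus, I would check that the translation of the conclusion sequent is $\pcrad\R$-derivable in constant degree from the translations of the premises — modus ponens / cut corresponds to combining two polynomial derivations; the propositional connective rules correspond to the standard ``multiplication = disjunction'' manipulations, where contraction on a disjunct $A \vee A$ becomes the step from $p^2 = 0$-style redundancy back to $p = 0$, simulated by the radical rule; and the crucial case is the \emph{induction rule}. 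There, from a first-order proof of the induction step one gets, by the induction hypothesis, a \emph{uniform} family of constant-degree $\pcrad\R$ derivations of ``translation of $\Psi(i)$ implies translation of $\Psi(i{+}1)$'', and these must be chained together $O(n)$ times; because each step has the same constant degree and we only track degree (not size), the composite refutation still has constant degree. Throughout, I would keep an explicit bound on the degree as a function of the first-order proof (independent of the index-sort parameters) as part of the induction invariant.

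The main obstacle is the induction rule together with the requirement that the translation be genuinely \emph{uniform} in the index parameters. Two things have to line up: (i) the syntactic restriction to ``well-behaved'' formulas must be exactly strong enough that their translations are finite constant-degree systems whose shape does not blow up under the first-order manipulations appearing in a proof (in particular, that substitution of index terms, and the normal-form transformations needed to read off a polynomial system, preserve the degree bound); and (ii) when we chain the per-step derivations from the induction hypothesis, the ``glue'' polynomials connecting $\llbracket\Psi(i)\rrbracket$ to $\llbracket\Psi(i{+}1)\rrbracket$ must themselves be produced in constant degree and in a way that composes. I expect the bulk of the technical work to be in setting up the normal form for well-behaved formulas and proving a substitution/degree lemma, after which the axiom cases are routine identities and the rule cases are bookkeeping. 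A secondary subtlety is handling the ring-valued oracle $X(i)$ correctly: occurrences of $X$ under a bounded quantifier become a block of distinct propositional variables, and one must ensure the first-order proof never ``uses'' $X$ in a way that would require unboundedly many of these variables to interact at once — again this is guaranteed by the well-behavedness restriction, and I would isolate this as a lemma stating that the set of propositional variables occurring in $\llbracket\psi\rrbracket_{\bar a}$ is controlled, even if not the reasoning, so the degree bound survives.

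Finally, I would close the argument by applying the line-by-line simulation to the last line of the $\tpc\R$ proof: the empty sequent (or the sequent asserting $\neg\phi$ outright), whose translation is a constant-degree $\pcrad\R$ derivation of $1 = 0$ from the translation of $\phi(\bar a)$, i.e.\ exactly the desired refutation, with degree bounded by a constant depending only on the fixed first-order proof. I would remark that the same scheme, with the sum-of-squares axiom scheme handled by the sum-of-squares rule of $\pcP$ and working over $\reals$, yields the companion translation for $\tsos$ into $\pcP$; and that Proposition~\ref{the:alg_closed_complete} (implicational completeness of $\pcrad\R$ over algebraically closed fields) is what makes the translated axiom systems actually refutable in the first place, so no separate completeness argument is needed here.
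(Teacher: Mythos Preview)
Your overall plan --- work in a sequent calculus, translate each rule into a constant-degree $\pcrad\R$ derivation, use the radical rule for contraction, and handle induction by chaining the per-step derivations --- is essentially the paper's approach. But there is a genuine gap: you never invoke \emph{free-cut elimination}. The translation $\ptDef{\cdot}$ is defined only for formulas in the well-behaved class $\IndPC\R$, yet an arbitrary $\LKR$ derivation of $\phi \longrightarrow \emptyset$ may contain cut formulas of unrestricted logical complexity, for which your inductive invariant cannot even be stated. Your sentence ``modus ponens / cut corresponds to combining two polynomial derivations'' presupposes that the cut formula has a translation, which is precisely what is not guaranteed. The paper deals with this by first applying free-cut elimination to the $\LKR$ proof; the subformula property then forces every formula in the proof to lie in $\IndPC\R$, and only after that does the rule-by-rule simulation go through. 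Your ``putting it in the right normal form'' is too vague to stand in for this step.

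Two smaller corrections. First, $\tpc\R$ has no Boolean axioms (those belong to $\tsos$), so that case does not arise here. Second, your final remark that Proposition~\ref{the:alg_closed_complete} ``is what makes the translated axiom systems actually refutable in the first place'' is wrong: the translation theorem holds over an arbitrary integral domain $\R$, not only algebraically closed fields, and implicational completeness plays no role whatsoever. The $\pcrad\R$ refutation is built explicitly from the structure of the first-order proof; no semantic completeness argument is invoked or needed.
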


The proof is by first translating 
$\tpc\R$ proofs into a Gentzen-style sequent calculus~$\LKR$
 and then translating $\LKR$ into $\pcrad\R$ rule-by-rule.

In Section~\ref{sec:formal_soundness} we 
show that, conversely, any principle 
with constant-degree $\pcrad\R$ refutations
is refutable in $\tpc\R$. This is done by
showing that $\tpc\R$ proves that $\pcrad\R$
refutations are sound, or in other words, proving a reflection principle for $\pcrad\R$ in $\tpc\R$.
This demonstrates that $\tpc\R$ is the right theory, in that we showed in the previous
section that every $\tpc\R$ proof turns into a $\pcrad\R$ proof, and  now
show essentially that every $\pcrad\R$ proof can be obtained this way.

In Section~\ref{sec:TSoS} we show similar results
for sum of squares. That is, first-order
$\tsos$ proofs can be translated into constant degrees  
$\sos$ refutations with Boolean axioms (denoted  $\sosbool$), and vice versa:


\begin{theorem*}[Theorem \ref{the:SoS_translation}; informal]

Let $\phi(i)$ be be a certain ``well-behaved'' formula with 
no ring quantifiers
and with index variable~$i$ as its only free variable.
Define $\eqSet_n$ to be the propositional translation of $\phi$ (parametrized by $n$).
Then~$\eqSet_n$ is refutable in $\sosbool$ in some fixed constant degree if and only if
 $\tsos \vdash \forall i \neg \phi(i)$.
\end{theorem*}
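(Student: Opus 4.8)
We prove the equivalence by two rather different arguments. The direction ``$\tsos\vdash\forall i\,\neg\phi(i)$ $\Rightarrow$ constant-degree $\sosbool$ refutations of $\eqSet_n$'' is a propositional translation extending Theorem~\ref{thm:transTPC}. The plan is to follow the same route: pass from the fixed $\tsos$ proof to a Gentzen-style sequent calculus for $\tsos$ — obtained from the calculus used in the proof of Theorem~\ref{thm:transTPC} (over the reals) by adjoining a rule that internalizes the sum-of-squares scheme — and then translate that calculus into $\pcP$ (that is, $\pcPR$ over $\reals$) rule by rule, exactly as in the $\LKR\to\pcrad\R$ step. The only genuinely new point is the sum-of-squares rule: an instance $\sum_{i<n}t(i)^2=0\wedge j<n\supset t(j)=0$ translates into a big sum of squares of bounded-degree polynomials set equal to zero, from which the sum-of-squares rule of $\pcP$ extracts $t(j)^2=0$ and the radical rule then gives $t(j)=0$. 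Since the $\tsos$ proof is fixed, every polynomial that appears has degree bounded by a constant depending only on $\phi$ and that proof (parametrized by $n$), so we obtain constant-degree $\pcP$ refutations of $\eqSet_n$; Theorem~\ref{thm:pcPSoSsim} then converts these, in the presence of the Boolean axioms, into constant-degree $\sosbool$ refutations.

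For the converse the plan is to show that $\tsos$ proves a reflection principle for constant-degree $\sosbool$. I would formalize in the language of $\tsos$ the notion of a degree-$d$ $\sosbool$ certificate — an identity $\sum_f u_f f+\sum_j s_j^2+1=0$, where $f$ ranges over the equations of $\eqSet_n$ together with the Boolean axioms $X(i)^2-X(i)$, the $u_f$ are arbitrary polynomials, and the $s_j$ have degree at most $d/2$ — and then argue, working in $\tsos$ under the assumption $\phi(n)$, that the translation of Section~\ref{sec:translate_language} makes every $f$ vanish at the ring-sort sequence $X$ (the first-order Boolean axioms of $\tsos$ take care of the $X(i)^2-X(i)$), so that evaluating the certificate gives $\sum_j s_j(X)^2+1^2=0$. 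The sum-of-squares axiom scheme of $\tsos$ applies to this sum and forces $1=0$, contradicting the integral-domain axiom; hence $\tsos\vdash\phi(n)\supset\bot$, uniformly in $n$. Finally, since the refutations for the family $\eqSet_n$ can be taken uniform in $n$, there is a definable $\rho$ with $\tsos\vdash\forall i\,[\rho(i)$ is a degree-$d$ $\sosbool$ refutation of $\eqSet_i]$ — verifying this is a bounded task within reach of the induction scheme — and combining with the reflection principle yields $\tsos\vdash\forall i\,\neg\phi(i)$.

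The main obstacle is the reverse direction: encoding $\sos$ certificates and their evaluation inside the language of $\tsos$, and arranging the soundness argument — in particular the step that substitutes the ring-sort sequence $X$ into a degree-$d$ certificate and collapses it via the sum-of-squares scheme — so that every formula involved is of the well-behaved kind for which the induction scheme of $\tsos$ is available. It is precisely the hypotheses that $\phi$ have no ring quantifiers and a single free index variable that keep both this encoding and the treatment of the sum-of-squares rule in the forward direction within reach; relaxing them would require reasoning about quantified ring statements that neither the static system $\sos$ nor the fragment $\tsos$ is designed to support.
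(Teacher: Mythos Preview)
Your forward direction matches the paper exactly: extend $\LK_\reals$ with sequent axioms for the sum-of-squares scheme (and the Boolean axiom, which you omitted to mention but which is handled the same way), translate into $\pcP$ using the sum-of-squares and radical rules, and then invoke Theorem~\ref{thm:pcPSoSsim}.

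For the converse you take a slightly different route. The paper does not formalize the static $\sosbool$ certificate directly; instead it first converts the $\sosbool$ refutation into a $\pcP$ refutation via Proposition~\ref{pro:sos_to_pcp}, and then extends the dynamic soundness argument of Theorem~\ref{the:prop_soundness_PCrad} --- induction on the lines of the refutation, in the $\IndPC\R$ formula $\forall i<k,\,\pi_{m,i}[X]_d=0$ --- by adding one more case, for the sum-of-squares rule, which is dispatched by a single application of the first-order sum-of-squares scheme. Your direct static approach also works: encode the squares block of the certificate as $\sum_{j<N} s_j[X]_d^{\,2}$ using the coefficient machinery of Section~\ref{sec:formal_soundness}, evaluate, and apply the sum-of-squares scheme once to the resulting $1+\sum_j s_j[X]_d^{\,2}=0$. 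The paper's detour through $\pcP$ buys maximal reuse of Theorem~\ref{the:prop_soundness_PCrad} with almost no new work; your approach is more self-contained but requires packaging the variable-length list of squares as a single big-sum term indexed by a parameter, so that the scheme applies.

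One small correction: you do not need to argue that the refutations can be taken uniform in $n$. Because $\Find$ and $\Fring$ contain \emph{all} functions $\NN^k\to\NN$ and $\NN^k\to\reals$, any sequence of degree-$d$ refutations whatsoever is automatically definable in the language, and the background-truth axioms already certify its syntactic correctness. This is precisely the point made at the start of Section~\ref{sec:formal_soundness}, and it is why the hypothesis on $\phi$ having no ring quantifiers matters (so that the finitely many axioms used from $\ang{\phi}_{[i\mapsto n]}$ can be pointed to by index-sort functions), rather than any uniformity of the propositional proofs.
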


As a corollary of the propositional translation results we can conclude that $\tsos$ is not conservative over  $\tpc{\R}$,
even if we add first-order Boolean axioms to $\tpc{\R}$,
using the separation between $\sosbool$ and $\pcbool$ (that is, PC with Boolean axioms) demonstrated,  for example, by Grigoriev~\cite{Gri01-CC}, who showed that algebraic proofs like $\pcbool$ cannot simulate semi-algebraic proofs like $\sosbool$, because symmetric subset-sum instances such as  $x_1+\dots+x_n=-1$ require linear degree (and exponential monomial size) (cf.~\cite{IPS99}).

\subsubsection{Beyond $\tsos$}
 It would seem natural for $\sos$ reasoning to be able to reason \emph{directly} about inequalities. However,  the theories we introduced so far cannot do that, and $\tsos$ does not even have an inequality symbol in the language. Motivated by this, in Section \ref{sec:beyond-theories} we  describe approaches to going beyond the basic theory $\tsos$, with the goal of achieving a semi-algebraic first-order theory that can reason naturally about inequalities. We stress that achieving this  is a challenging goal and we demonstrate this by showing that naively adding inequalities leads to
a theory which is strictly stronger than constant-degree $\sos$.

We then describe, as work in progress, a theory with weakened axioms about ordering.
We propose that it is possible to work through a proof in this theory,
and essentially to ``witness'' each formula of the form
$r \le t$ by replacing it with a formula asserting that $t-r$ is an explicit sum-of-squares. This is simple
for axioms, but becomes more difficult when dealing with, for example, induction.

As the  main open problem in this direction of research we put forth the attempt to further improve the usability of the above theory so that it deals more naturally with inequalities. We briefly discuss one possibility to achieve this by moving to intuitionistic logic.


\subsection{Relation to previous work}
Our approach to translation of first-order into propositional
logic goes back at least to Paris and Wilkie~\cite{PW85}.
They studied theories of bounded arithmetic
with a relation symbol $R(x,y)$ for an ``oracle relation"
with no defining axioms. 
First-order formulas can be thought of as describing a property
of~$R$, and can be translated into propositional formulas,
where atomic formulas of the form $R(x,y)$ turn into propositional
variables~$r_{x.y}$, other atomic formulas are evaluated as
$\top$ or~$\bot$, and bounded quantifiers become
propositional connectives of large fan-in.
Furthermore first order proofs in suitable theories
translate into small propositional proofs.
Under this translation, standard bounded arithmetic theories
correspond to quasipolynomial size constant-depth Frege proofs.
In particular Kraj{\'{\i}}{\v{c}}ek developed close
connections between theories around~$T^1_2$ and~$T^2_2$
and systems around 
resolution~\mbox{\cite{Kra94-Lower, Kra97-Interpolation, Kra01-Fundamenta}.}

Such translations can be used to apply 
techniques from propositional proof complexity to
show unprovability in first order theories; or in the other
direction, to prove propositional upper bounds 
by using the first order theory as something like
a ``high level language" where it is easier to write
proofs, which can then be compiled into the propositional system.
We are interested in this second kind of application.
Relatively recent examples are 
M\"{u}ller and Tzameret~\cite{MT10}, formalizing
some linear algebra arguments in TC$^0$-Frege;
Beckmann, Pudl\'{a}k and Thapen~\cite{BPT14},
reasoning about parity games in resolution;
and Buss, Ko\l{}odziejczyk and Zdanowski~\cite{BKZ12},
formalizing Toda's theorem in depth-3 Frege with parity connectives.
These would all have been difficult, or impossible, 
to do without the level of abstraction provided
by the first order theory.

The work \cite{BKZ12} in particular defines a hierarchy
of theories, the bottom two levels of which correspond
to small, low degree proofs in Nullstellensatz and polynomial calculus.
These are inspirations for the current paper. 
One of the main differences is that~\cite{BKZ12} 
only works with finite fields, which are easy to 
formalize in standard arithmetic theories, while we are
aiming for the reals. Another is that we care about
degree and do not need to control size, so can use
unbounded quantifiers; thus our theories are not really bounded
arithmetic, although the principle of the translation is the same.

To talk about algebraic structures, we adopt a two-sorted
theory, with a ring sort and an index sort; some of
the ideas here are adapted from Soltys~\cite{SC04, TS05}.


\section{Propositional and algebraic systems} \label{sec:prop_systems}

Let $\R$ be an integral domain, that is, a commutative ring with unity and no zero divisors. 
We will work with sets of equations
over $\R$\iddo{Needs to explain briefly why "sets of eqns"? Clarification: my question is whether we consider in this work something different than is standard in proof complexity anyway. If not, there is no need to say "we work with sets of", rather "we work with polynomials" suffices}, of the form 
$\{ p_i = 0 : i \in I \}$
where each~$p_i$ is from $\R[x_1,\dots,x_n]$, that is, a polynomial with coefficients from $\R$ and
variables from some specified set~$\{ x_1, \dots, x_n \}$. We work with equations\marginpar{in fact there's no confusion 
of $=$ with $\le$}
$p_i =0$, rather than just writing the polynomial $p_i$ by itself, because
we will later want to distinguish between the 
equation~$p_i = 0$ and the inequality~$p_i \ge 0$. 

In general we will allow sets of equations to be infinite,
but for the sake of clarity of presentation we will
state definitions and results in the next few subsections
for \emph{finite} sets of equations.
In Section~\ref{sec:large_derivations} 
we explain why, in the cases we are interested in,
nothing significant changes if we allow infinite sets.
A set of equations is \emph{unsatisfiable} if the equations have no common solution in $\R$, and 
\emph{satisfiable} otherwise.

\begin{definition} \label{def:set_product}
We define the product of two sets of equations 
 to be 
\[\mathcal{P}\cdot\mathcal{Q}:=\{p \cdot q=0\, |\, p=0\in\mathcal{P}, \ q=0\in\mathcal{Q}\}.
\]
\end{definition}

Notice that an assignment of values in $\R$ to variables 
satisfies $\mathcal{P} \cdot \mathcal{Q}$
if and only if it satisfies $\mathcal{P}$
or~$\mathcal{Q}$, and that
 if $\mathcal{S}$ is another set of equations,
then $\mathcal{P} \cdot ( \mathcal{Q} \cup \mathcal{S} ) 
=( \mathcal{P} \cdot \mathcal{Q}) \cup (\mathcal{P} \cdot \mathcal{S})$.
We will use these observations later, when
we will use products and unions to
handle respectively disjunctions  and conjunctions of formulas represented by sets of equations.

We will consider \emph{refutations} and \emph{derivations} from sets of equations in various proof systems.
We informally divide proof systems into \emph{dynamic systems},
where a derivation is presented as a series of steps, each following from 
previous steps by a rule;
and \emph{static systems}, where a derivation happens all at once,
and typically has the form of a big polynomial equality.
A {refutation} of a set (in a given proof system) is in particular a witness that the set is
unsatisfiable. 

We will often use notation like ``a derivation $\Gamma \vdash e$''
instead of writing out ``a derivation of~$e$ from $\Gamma$''.
We will write $\pi : \Gamma \vdash e$
to mean ``$\pi$ is a derivation of $e$ from~$\Gamma$''.

The set of equations 
$\{ x_i^2 - x_i = 0 : i = 1, \ldots , n \}$
is called the  
\emph{Boolean axioms}, and guarantees that the variables
take only $0/1$ values. 
The systems below are usually defined to always include these axioms.
We do not include them in the definitions, as we will in general
be working with variables ranging 
over the whole ring. 
However for some results about $\sos$ we will need them, and we will
say explicitly when we are using them.

\subsection{Dynamic systems}

\begin{definition} \label{def:PCR}
A \emph{polynomial calculus ($\pc{\R}$) derivation} of an equation $q=0$ from a 
set of equations $\F$ is a sequence of equations $e_1, \dots, e_t$
such that $e_t$ is $q=0$ and each $e_i$ is either a member of $\F$,
or is $0=0$,
or follows from earlier equations by one of the rules
\begin{prooftree}
    \LeftLabel{\emph{Addition rule \ }}
    \AxiomC{$p = 0$}
    \AxiomC{$r=0$}
    \BinaryInfC{$a p+ b r = 0$}
\end{prooftree}
\begin{prooftree}
      \LeftLabel{\emph{Multiplication rule \ }}
    \AxiomC{$p = 0$}
    \UnaryInfC{$p x_i = 0$}

\end{prooftree}
where $p$ and $r$ are  polynomials, $x_i$ is any variable
 and $a$ and~$b$ are any elements of $\R$.

 A \emph{ $\pc\R$ refutation} of a set of equations $\F$ 
 is a derivation of $1=0$ from~$\F$.
\end{definition}

As $\R$ is a ring these rules are sound, in the sense
that every assignment that satisfies the assumptions of a rule also satisfies
the conclusion.
\iddo{Mention the motivation for PC-rad/+?---or do that in the intro.}

We will use two additional rules to define
extensions of~$\pc\R$ as follows.
The \emph{radical rule}~\cite{GH03}
is sound because $\R$ is an integral domain.
The \emph{sum-of-squares rule} is sound if $\R$ is additionally
a formally real ring, that is, a ring in which 
$\sum_i a_i^2 = 0$ if and only if $a_i=0$ for all $i$.
\begin{prooftree}
    \LeftLabel{Radical rule \ }
    \AxiomC{$p^2 = 0$}
    \UnaryInfC{$p = 0$}
\end{prooftree}
\begin{prooftree}
     \LeftLabel{Sum-of-squares rule  \ }
    \AxiomC{$p^2 + \sum_i q_i^2 = 0$}
    \UnaryInfC{$p^2 = 0$}
\end{prooftree}

\begin{definition}
The system $\pcrad{\R}$ is $\pc\R$ plus the radical rule.
\end{definition}

\begin{definition} \label{def:pcPR} 
The system  $\pcPR$ is $\pc\R$ plus the radical rule
and the sum-of-squares rule.
\end{definition}

\iddo{what's the motivation of PC rad and plus?} Recall that by default we do not add the Boolean axioms $x_i^2-x_i=0$ to our proof systems. 
$\pcrad\R$ and $\pcPR$ derivations and refutations are
defined just as in Definition~\ref{def:PCR}.
 We will only study $\pcPR$ in the case
in which the underlying ring $\R$ is the real numbers, and will write simply $\pcP$ instead
of $\mathsf{PC}^+_{\mathbb{R}}$.

\begin{definition} \label{def:degree}
The \emph{degree} of a derivation or refutation in any of the above systems 
is the maximum degree of any polynomial
that appears in it. We define $\pc{\R,d}$, $\pcrad{\R,d}$ and ${\pcPR}_{,d}$
 to be the restricted systems in which only
polynomials of degree $d$ or less may appear.
\end{definition}

Degree will be our main measure of the complexity of a derivation.
\emph{Size} is also an interesting measure, but is not one which we will use,
and there are some subtleties about how it should be defined.
A natural definition of the size of a polynomial is the number of monomials it
contains, but, particularly for applications, one may also want to include 
in the measure the size
of the notation for the coefficients from~$\R$.

A proof system is \emph{implicationally complete}
if, whenever a
 set of equations~$\F$ implies an equation $q=0$,
 there is a derivation $\F \vdash q=0$ in the system.
It is known that $\pc\R$ is implicationally
complete in the presence of the Boolean axioms~\cite[Theorem 5.2]{BeameIKPP96},
while in general it is not implicationally complete
without them. To see the latter, observe for example that for every 
variable~$x$, the polynomial $x$ is not in the ideal~$\langle x^2 \rangle$ (because every nonzero polynomial in this ideal has degree bigger than 1) while $x=0$ is implied by $x^2=0$ over any integral domain. 
We show now 
that, in contrast, 
$\pcrad\R$ is implicationally complete
if $\R$ is an algebraically closed field, 
and
$\pcP$ is implicationally complete (over the reals).
\iddo{motivate this}

We recall some standard concepts from commutative algebra (see for instance Ash~\cite[Chap.~8]{Ash06}). 
Let $\FF$ be a field.
Denote by  $\langle r_1,\ldots,r_k\rangle$ 
the ideal generated by $r_1, \dots, r_k$
 and by $V(\ang{r_1,\ldots,r_k})$
the \emph{variety}  of this ideal, that is, the set of tuples in $\FF^n$
on which all the polynomials are zero. 
For a set $X\subseteq \FF^n$ denote by $\mathcal{I}(X)$
the ideal of all polynomials vanishing on $X$. It is easy to see that if $X_1\subseteq X_2$, then
$\mathcal{I}(X_2)\subseteq \mathcal{I}(X_1)$.

If $J$ is an ideal over a field $\FF$, then the ideal 
$\sqrt{J}:=\left\{p\, : \, p^k\in J\text{ \emph{for some} $k\in\NN$}\right\}$ is
called the \emph{radical} of $J$. If $J$ is an ideal in $\reals[x_1,\ldots,x_n]$, then
the \emph{real radical} of $J$ is 
\begin{equation*}
\sqrt[\reals]{J} :=  \left\{p\, :\, p^{2k}+\sum\nolimits_i r_i^2\in J\\\text{ \emph{ for some} $k\in\NN, ~r_1,\ldots,r_m\in\reals[x_1,\dots,x_n]$}\right\}.
\end{equation*}

For the next proposition we need Hilbert's Nullstellensatz, which roughly states that if two ideals over an algebraically closed field define the same variety then the ideals are the same ``up to power'':
\begin{theorem}[Nullstellensatz; cf.~Theorem 8.4.1 in~\cite{Ash06}] Let $\FF$ be an algebraically closed field. For any ideal $J$ in $\FF[x_1,\dots,x_n]$ it holds that $\mathcal{I}(V(J))=\sqrt{J}$. 
\end{theorem}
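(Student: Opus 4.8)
The plan is to reduce the ``strong'' form stated here to the ``weak'' Nullstellensatz --- the assertion that every proper ideal $J \subsetneq \FF[x_1,\dots,x_n]$ over an algebraically closed field~$\FF$ has nonempty variety $V(J) \ne \emptyset$ --- and then to derive the strong form from the weak one by the Rabinowitsch trick. One inclusion is immediate and needs no hypothesis on~$\FF$: if $p \in \sqrt{J}$, say $p^k \in J$, then $p^k$, and hence $p$, vanishes on all of $V(J)$, so $\sqrt{J} \subseteq \mathcal{I}(V(J))$. The content is in the reverse inclusion $\mathcal{I}(V(J)) \subseteq \sqrt{J}$.

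For that inclusion I would argue as follows. Since $\FF[x_1,\dots,x_n]$ is Noetherian, fix generators $J = \ang{g_1,\dots,g_m}$, and let $f \in \mathcal{I}(V(J))$. Introduce a fresh variable~$y$ and consider the ideal $J' = \ang{g_1,\dots,g_m,\, 1 - yf}$ in $\FF[x_1,\dots,x_n,y]$. Any point of $V(J')$ would be a common zero of the $g_i$, hence a zero of~$f$, at which nonetheless $1 - yf = 1 \ne 0$ --- impossible --- so $V(J') = \emptyset$. The weak Nullstellensatz then forces $1 \in J'$, i.e.\ $1 = h_0\,(1 - yf) + \sum_{i=1}^m h_i g_i$ for some $h_0, h_1, \dots, h_m \in \FF[x_1,\dots,x_n,y]$. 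Substituting $y \mapsto 1/f$ (equivalently, passing to the field of fractions $\FF(x_1,\dots,x_n)$) annihilates the first term, and multiplying through by a sufficiently high power~$f^N$ to clear denominators yields $f^N = \sum_i \tilde h_i g_i \in J$ with $\tilde h_i \in \FF[x_1,\dots,x_n]$. Hence $f \in \sqrt{J}$, as required.

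The main obstacle --- the only place any real content enters --- is the weak Nullstellensatz itself. I would obtain it from Zariski's lemma: if a field~$L$ is finitely generated as an $\FF$-algebra, then $L$ is a finite algebraic extension of~$\FF$. Granting this, pick a maximal ideal $\mathfrak m \supseteq J$; then $L := \FF[x_1,\dots,x_n]/\mathfrak m$ is a field finitely generated as an $\FF$-algebra, hence finite over~$\FF$, hence equal to~$\FF$ since $\FF$ is algebraically closed. The residues $a_i$ of the~$x_i$ in $L = \FF$ then give a point $(a_1,\dots,a_n) \in \FF^n$ at which every element of $\mathfrak m$, in particular every $g_i$, vanishes, so $V(J) \ne \emptyset$. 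Zariski's lemma itself can be established either through Noether normalization (a finitely generated $\FF$-algebra that is a field has Krull dimension~$0$, forcing the normalizing polynomial subalgebra to be~$\FF$) or by the Artin--Tate finiteness argument. Since all of this is entirely standard, in the paper I would simply cite it --- as is done here, via Ash's Theorem~8.4.1 --- rather than reproduce the argument; the sketch above is only to record where the strength of the statement comes from.
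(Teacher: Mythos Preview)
Your argument is the standard and correct one, but there is nothing to compare against: the paper does not prove the Nullstellensatz at all. It is stated with a citation to Ash's textbook and used as a black box in the proof of the subsequent proposition on implicational completeness of $\pcrad{\FF}$. You anticipated this yourself in your final paragraph, and that is exactly what happens --- so your sketch is extra material, not a reconstruction of anything in the paper.
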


\begin{proposition} \label{the:alg_closed_complete}
If $\FF$ is an algebraically closed field, then $\pcrad{\FF}$ is implicationally complete.
\end{proposition}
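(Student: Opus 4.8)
The plan is to show that whenever a finite set of equations $\F = \{p_1 = 0, \dots, p_k = 0\}$ over an algebraically closed field $\FF$ implies $q = 0$, there is a $\pcrad{\FF}$ derivation $\F \vdash q = 0$. First I would unpack ``$\F$ implies $q = 0$'': this means $q$ vanishes on $V(\F) = V(\langle p_1, \dots, p_k\rangle)$, i.e. $q \in \mathcal{I}(V(\langle \F\rangle))$. By the Nullstellensatz (the version quoted just above), $\mathcal{I}(V(\langle \F\rangle)) = \sqrt{\langle \F\rangle}$, so there is some $m \in \NN$ with $q^m \in \langle p_1, \dots, p_k\rangle$. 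Write $q^m = \sum_i g_i p_i$ for polynomials $g_i \in \FF[x_1, \dots, x_n]$.

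The second step is to observe that $\pc{\FF}$, without any degree bound, can derive $r = 0$ from $p = 0$ for any polynomial multiplier $r$: repeatedly applying the multiplication rule gives $p \cdot \mathbf{x}^\alpha = 0$ for every monomial $\mathbf{x}^\alpha$, and then the addition rule (with coefficients from $\FF$) lets us form any $\FF$-linear combination of these, yielding $r p = 0$. Applying this to each $p_i$ with multiplier $g_i$ and then summing, we obtain a $\pc{\FF}$ derivation of $\sum_i g_i p_i = 0$, that is, a derivation of $q^m = 0$ from $\F$.

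The third step is to climb down from $q^m = 0$ to $q = 0$ using the radical rule. The radical rule as stated takes $p^2 = 0$ to $p = 0$. If $m$ is a power of $2$, say $m = 2^\ell$, we just apply the rule $\ell$ times: from $q^{2^\ell} = 0 = (q^{2^{\ell-1}})^2 = 0$ we get $q^{2^{\ell-1}} = 0$, and so on down to $q = 0$. For general $m$, I would first note that $q^m \in \langle \F\rangle$ implies $q^{m'} \in \langle \F\rangle$ for any $m' \ge m$ (multiply by $q^{m'-m}$), so we may freely replace $m$ by the next power of $2$ above it and reduce to the previous case. Concatenating the three pieces gives the desired $\pcrad{\FF}$ derivation $\F \vdash q = 0$.

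I do not expect a serious obstacle here; the argument is essentially ``Nullstellensatz + the radical rule simulates taking radicals.'' The only points requiring a little care are: (i) making sure the unbounded multiplication/addition steps are legitimate in $\pc{\FF}$ as defined (they are, since no degree bound is in force and $a, b$ range over all of $\FF$), and (ii) handling the mismatch between an arbitrary exponent $m$ in the Nullstellensatz and the specifically ``squaring'' form of the radical rule, which is dealt with by padding $m$ up to a power of two as above. One should also remark that implicational completeness here is stated for derivations of arbitrary (unbounded) degree — the degree-bounded systems $\pcrad{\FF,d}$ are of course not complete — so no quantitative control over the degree of $g_i$ or $m$ is needed.
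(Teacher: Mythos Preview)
Your proposal is correct and follows essentially the same argument as the paper: apply the Nullstellensatz to get $q^m$ in the ideal generated by the $p_i$, derive $q^m=0$ in $\pc{\FF}$, pad the exponent up to a power of two, and then repeatedly apply the radical rule. The only cosmetic difference is that the paper pads the exponent via the multiplication rule inside the derivation, whereas you pad it at the ideal-membership level; both are fine.
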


\begin{proof}
Let $p_1,\ldots,p_m,q \in\FF[x_1,\ldots, x_n]$ be such that $p_1=0,\ldots,p_m=0$ together imply~$q=0$,
or in other words, $V(\ang{p_1,\ldots,p_m})\subseteq V(\ang{q})$.

By the Nullstellensatz, for any ideal $J$ 
it holds that $\mathcal{I}(V(J))=\sqrt{J}$. 
Thus
\begin{equation*}
\ang q \subseteq \sqrt{\ang q}
= \mathcal{I}(V(\ang q)) \\
\subseteq \mathcal{I}(V(\langle p_1,\ldots, p_m\rangle))=
\sqrt{\langle p_1,\ldots, p_m\rangle}.
\end{equation*}
 Therefore $q^k\in \langle p_1,\ldots, p_m\rangle$ for some $k$, which means
that there exists a $\pc\FF$ derivation of $q^k=0$ from $p_1=0,\ldots, p_m=0$,
and using the radical rule it is straightforward  to
 extend this to a $\pcrad\FF$ derivation of $q=0$: first use the multiplication  rule to obtain $q^{k'}$ with $k'\ge k$ a power of 2, and then apply $\log k'$ times the radical rule to obtain $q=0$.
\end{proof}

\begin{proposition} \label{pro:pcP_complete}
$\pcP$ is implicationally complete.
\end{proposition}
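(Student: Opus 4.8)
The plan is to re-run the proof of Proposition~\ref{the:alg_closed_complete}, replacing Hilbert's Nullstellensatz by the \emph{Real Nullstellensatz} (due independently to Dubois and Risler; see, e.g., Bochnak, Coste and Roy, \textit{Real Algebraic Geometry}), which in the notation of this section states that for every ideal $J$ in $\reals[x_1,\dots,x_n]$ we have $\mathcal{I}(V(J))=\sqrt[\reals]{J}$, where $V(J)\subseteq\reals^n$ is the set of real common zeros of $J$ and $\sqrt[\reals]{J}$ is the real radical of $J$ defined above. Just as the radical rule is the proof-theoretic shadow of passing to the ordinary radical $\sqrt{J}$, the sum-of-squares rule is the shadow of passing to the real radical, so one expects the same argument to go through once the Real Nullstellensatz is available.

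First I would unpack the hypothesis. Suppose $p_1,\dots,p_m,q\in\reals[x_1,\dots,x_n]$ are such that $p_1=0,\dots,p_m=0$ together imply $q=0$ over $\reals$, and set $J=\ang{p_1,\dots,p_m}$. The hypothesis says exactly that $q$ vanishes on $V(J)$, i.e.\ $q\in\mathcal{I}(V(J))$, so by the Real Nullstellensatz $q\in\sqrt[\reals]{J}$. Hence there are $k\in\NN$ and $r_1,\dots,r_s\in\reals[x_1,\dots,x_n]$ with $q^{2k}+\sum_i r_i^2\in J$, say $q^{2k}+\sum_i r_i^2=\sum_j g_j p_j$ for some polynomials $g_j$. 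Now I would turn this identity into a derivation: from each $p_j=0$, repeated use of the multiplication and addition rules gives $g_j p_j=0$, and adding these over $j$ yields the equation $q^{2k}+\sum_i r_i^2=0$; this is a $\pc{\reals}$ derivation from $\F$, hence a $\pcP$ one. Writing $q^{2k}=(q^k)^2$, the sum-of-squares rule applied to $(q^k)^2+\sum_i r_i^2=0$ produces $(q^k)^2=0$. Finally, exactly as at the end of the proof of Proposition~\ref{the:alg_closed_complete}, I would use the multiplication rule to raise the exponent of $q$ to a power of two and then apply the radical rule the appropriate number of times, obtaining $q=0$. This is the required $\pcP$ derivation $\F\vdash q=0$.

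The only genuinely nontrivial input is the Real Nullstellensatz itself; once the real-radical membership $q^{2k}+\sum_i r_i^2\in\ang{p_1,\dots,p_m}$ is in hand, converting it into $\pcP$ steps is the same bookkeeping as in the algebraically closed case. It is worth remarking that the even exponent $q^{2k}$ in the definition of $\sqrt[\reals]{J}$ is precisely what makes the argument work: it exhibits $q^{2k}$ as the perfect square $(q^k)^2$, which is exactly the shape the sum-of-squares rule requires in its premise $p^2+\sum_i q_i^2=0$. Finally, the case of infinitely many hypotheses is handled as in Section~\ref{sec:large_derivations}, and implicational completeness in particular yields refutational completeness (the case $q=1$, where the hypothesis just says that $V(J)=\emptyset$).
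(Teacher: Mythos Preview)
Your proof is correct and follows essentially the same approach as the paper: invoke the Real Nullstellensatz to obtain $q^{2k}+\sum_i r_i^2\in\ang{p_1,\dots,p_m}$, derive this equation in $\pc\reals$, then apply the sum-of-squares rule followed by repeated use of the radical rule (after padding the exponent to a power of two) to reach $q=0$. Your write-up is in fact more explicit than the paper's, spelling out the polynomial combination and the role of the even exponent.
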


\begin{proof} \sloppypar 
Assume $p_1,\ldots,p_m,q\in\reals[x_1,\ldots, x_n]$ and $p_1=0,\ldots,p_m=0$ together imply~$q=0$.
By the Real Nullstellensatz \cite[Theorem 1]{BS99}, 
$\mathcal{I}(V(J))=\sqrt[\reals]{J}$ for any ideal $J$ in $\reals[x_1,\ldots,x_n]$.
Thus, as above,
 $\langle q\rangle\subseteq \sqrt[\reals]{\langle p_1,\ldots, p_m\rangle}$ and hence there exists
a $\pc\reals$ derivation of $q^{2k}+\sum_i r_i^2=0$ from $p_1=0,\ldots, p_m=0$ for some $k\in\NN$ and
$r_1,\ldots,r_s\in\reals[x_1,\dots,x_n]$. Using the sum-of-squares and radical rules
this derivation can be extended to a $\pcP$ derivation of $q=0$.
\end{proof}

\subsection{Static systems}

Below we write $\equiv$ to express identity of polynomials.

\begin{definition}
A \emph{Nullstellensatz derivation} of an equation $q=0$ from a set
of equations $\S = \{ p_i = 0 : i \in I \}$ is a family of polynomials 
$( r_i)_{ i \in I }$ such that $\sum_{i} r_i p_i \equiv q$. A \emph{Nullstellensatz refutation}
of~$\S$ is a  derivation of $1=0$ from~$\S$.
\end{definition}

The sum-of-squares proof system $\sos$, introduced in  Barak \emph{et al.} \cite{BarakBHKSZ12} as a restricted fragment of Grigoriev and Vorobjov's Positivstellensatz proof system~\cite{GV99}, is a semi-algebraic proof system operating with polynomial equalities and inequalities over the reals. We are going to consider in this work a simple variant of $\sos$ that operates only with polynomial equalities as follows:   

\begin{definition} 
A \emph{sum of squares ($\sos$) derivation} of an inequality $q \ge 0$ over $\reals$ from a set
of equations $\S = \{ p_i = 0 : i \in I \}$ over $\reals$ is  a family of polynomials
$( r_i)_{ i \in I }$ and a second family of polynomials
$(s_j)_{j \in J}$, both over $\mathbb{R}$, such that
\[
\sum_i r_i p_i + \sum_j s_j^2  \equiv q.
\] 
A \emph{sum of squares refutation} of $\S$
is a derivation of $-1 \ge 0$ from $\S$. 

An $\sosbool$ derivation, or refutation, is one that also allows
the use of the Boolean axioms $x_i^2-x_i=0$, as though they were 
members of $\S$.
\end{definition}

Sum of squares can also naturally be defined to take inequalities $p_i \ge 0$ as assumptions
as well as equalities, but we will not use this.

Often when we talk about ``the sum of squares derivation'' of an inequality, 
we will really mean the formal sum on the left-hand side of the above equivalence.
For example we will sometimes talk in this way about adding a term to a derivation,
or forming the linear combination of two derivations.
The \emph{degree} of a $\sos$ derivation is the highest degree \iddo{need to defined degree of polynomials and terms}of any term
$r_i p_i$ or $s_j^2$ in this sum.
 We will write $\sosd$
for $\sos$ limited to degree $d$ or less. 
We allow ourselves, informally, to write inequalities in other forms than $q \ge 0$.
\iddo{I don't understand this last sentence} 

\subsection{Relations between the systems}

We are interested in whether or not a family of sets of equations
is refutable in constant degree. Therefore for the purposes of this paper 
we will use the following definition of simulation of one system by another,
rather than the more usual definition in proof complexity, which is based on refutation size.

\begin{definition}
A system $P$ 
\emph{simulates} a system $Q$, written $P \geq Q$,
 if there is a function $f:\NN \rightarrow \NN$
such that, for any~\mbox{$d \in \NN$}, if a set
of equations $\F$ is refutable in degree~$d$ in $Q$ then $\F$ 
is refutable in degree~$f(d)$ in $P$.

Systems $P$ and $Q$ are equivalent, $P \equiv Q$, if both $P \geq Q$ and $Q \geq P$.
\end{definition}

Trivially for any $\R$ we have
$
\pc\R \le 
\pcrad\R  \le 
\pcPR
$
(but recall \iddo{I\ dont recall it was mentioned before. Neil: we discuss soundness when we define some rule, based on integral domain ...recalling the parag before the definition of the radical rule} that  $\pcPR$ may or may not be sound, depending on $\R$).
The main result of this section is to show that, for constant degree, the dynamic system 
$\pcP$ is equivalent to the static system $\sosbool$ 
(Proposition~\ref{pro:sos_to_pcp} and Theorem~\ref{thm:pcPSoSsim}).

Before proving this, we will say more about the radical rule.
By implicational completeness, the  rule is derivable in $\pc\R$
in the presence of the Boolean axioms. However, potentially it can happen that all these derivations are of large 
degree. The following proposition shows that 
it can be derived in constant degree if $\R$ is a
field of positive characteristic.

\begin{proposition} \label{pro:pcas_positive_char}
Suppose $\R$ is a field of positive characteristic. 
Then, in the presence of the Boolean axioms, $\pc\R \equiv \pcrad\R$.
\end{proposition}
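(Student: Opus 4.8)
The plan is to establish the two directions of the equivalence separately. One direction, $\pcrad\R \ge \pc\R$, is immediate: a $\pc\R$ refutation is already a $\pcrad\R$ refutation of the same degree, so the identity function witnesses the simulation (this uses neither the Boolean axioms nor positive characteristic). All the work is in the converse, $\pc\R \ge \pcrad\R$ in the presence of the Boolean axioms. Here I would first reduce to simulating a single application of the radical rule: it suffices to find $f:\NN\to\NN$ such that for every polynomial $p$ with $\deg(p^2)\le d$ there is a $\pc\R$ derivation of $p=0$ from $p^2=0$ together with the Boolean axioms, of degree at most $f(d)$. Given this, take any $\pcrad\R$ refutation of a set $\F$ of degree $d$, locate each application of the radical rule --- which turns an already-derived line $p^2=0$ (of degree $\le d$) into $p=0$ --- and splice in the corresponding degree-$f(d)$ sub-derivation, leaving every other line unchanged; since the Boolean axioms are available throughout such a derivation, the result is a $\pc\R$ refutation of $\F$ of degree $\le\max(d,f(d))$.

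For this single-step simulation the key observation concerns finite fields. If $\R=\FF_q$, then for any polynomial $p$ the polynomial $p^q-p$ lies in the Boolean ideal $\langle x_1^2-x_1,\dots,x_n^2-x_n\rangle$: on every $0/1$ assignment $a$ the value $p(a)$ lies in $\FF_q$, hence $p(a)^q=p(a)$, so $p^q-p$ vanishes on $\{0,1\}^n$ and therefore lies in that ideal, which is radical. Moreover, reducing $p^q-p$ modulo the Boolean axioms by repeatedly rewriting $x_i^2\mapsto x_i$ never raises the degree and terminates at $0$, which yields a representation $p^q-p=\sum_i h_i(x_i^2-x_i)$ with every $\deg(h_i(x_i^2-x_i))\le\deg(p^q)=q\deg p\le qd/2$. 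Now the derivation of $p=0$ from $p^2=0$ has two stages. First, derive $p^q=0$ from $p^2=0$ by ``multiplying through'' by $p^{q-2}$: write $p^{q-2}$ as an $\R$-linear combination of monomials, obtain $m\cdot p^2=0$ for each monomial $m$ by applying the multiplication rule one variable at a time, and then scale and add using the addition rule; every intermediate line has degree $\le\deg(p^{q-2})+\deg(p^2)=q\deg p\le qd/2$. Second, derive each $h_i(x_i^2-x_i)=0$ from the Boolean axiom $x_i^2-x_i=0$ in the same monomial-by-monomial fashion, add these up, and subtract the result from $p^q=0$ with the addition rule; since $p=p^q-\sum_i h_i(x_i^2-x_i)$ this produces the line $p=0$, and again all lines have degree $\le qd/2$. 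Thus $f(d):=\lceil qd/2\rceil$ works, so for $\R=\FF_q$ the blow-up is only a constant (namely $q/2$) times the degree.

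The point where I would expect to have to be careful is the case of an \emph{infinite} field of positive characteristic $\ell$: there is no single exponent $e$ with $p^e\equiv p$ modulo the Boolean ideal (for instance, over $\FF_\ell(t)$ the monomial $t\,x_1$ is not congruent to any power of itself), so the clean ``raise to the $q$-th power'' step above is unavailable. For a fixed finite field the argument goes through verbatim with the constant $|\R|$; for a general field of positive characteristic one would instead need to argue directly that $p=0$ can be \emph{dynamically} derived in $\pc\R$ from $p^2=0$ and the Boolean axioms in degree bounded by some function of $\deg(p^2)$ alone, uniformly in the number of variables and the coefficients of $p$ --- note that the static ideal membership $p\in\langle p^2\rangle+\langle x_i^2-x_i:i\rangle$ by itself may only have certificates of degree growing with the number of variables. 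I would attack this either by passing to the prime subfield and treating the finitely many coefficients of $p$ occurring in the derivation as fresh transcendental constants, or by a more careful analysis of the values of $p$ on $\{0,1\}^n$; but establishing such a uniform bound (or noticing that positive characteristic really is needed only in the finite case) is, I expect, the crux.
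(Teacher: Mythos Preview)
Your reduction to simulating a single radical step and your finite-field argument are essentially the paper's approach. The paper also multiplies $f^2$ up to a higher power of $f$, argues that this power is congruent to $f$ modulo the Boolean ideal, and then reads off a bounded-degree derivation of $f=0$. The one difference is the choice of exponent: the paper raises $f$ to the \emph{characteristic} $p$ and uses the Frobenius identity $(A+B)^p=A^p+B^p$ to collapse $f^p$ to a sum of $p$-th powers of monomials, which it then multilinearizes; you raise $f$ to the \emph{field order} $q$ and use $a^q=a$ in $\FF_q$ to conclude directly that $f^q-f$ vanishes on $\{0,1\}^n$ and hence lies in the Boolean ideal. Your variant is arguably cleaner, and it covers every finite field $\FF_q$, whereas the paper's Frobenius step, as written, needs the coefficients of $f$ to satisfy $c^p=c$---i.e., to lie in the prime subfield---for the multilinearization of $A^p$ to return $A$ rather than $c^{p-1}A$.

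Your hesitation about infinite fields of positive characteristic is therefore on point, and the paper does not actually address it either: the step ``we thus get to $A+B$'' in the paper's proof fails over $\FF_\ell(t)$ for exactly the reason you give (e.g.\ $A=t x_1$ gives $A^\ell=t^\ell x_1\neq A$ after multilinearization). So you have not missed an idea present in the paper; the published argument and your proposal have the same scope (prime fields for the paper, all finite fields for you), and the general positive-characteristic case would require additional work along the lines you sketch.
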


\begin{proof}
Let $\R$ have characteristic $p$. 
It is sufficient to show that for any polynomial $f$ in the $x_i$ variables, we have $f^{p-2} \cdot f^2 \equiv f\mod \{x_i^2-x_i\;:\;i\in\mathbb{N}\}$,
and specifically that we can derive $f$ from $f^{p-2} \cdot f^2$ in $\pc\R + \bool$ with a degree~$O(p)$.
For if this is true, then  by multiplying $f^2$ with $f^{p-2}$, we get a $\pc\R + \bool$ derivation
$f^2=0 \vdash f=0$ of degree $O(\deg f)$. This will conclude the proof of the proposition since we can replace applications of the radical rule with derivations of this form.


Consider first the case that $f$ has two monomials: $f = A+B$. Then, $ f^p = (A+B)^p = A^p + {p \choose 1} A^{p-1}B + {p \choose 2}A^{p-2}B^2 + \dots + {p \choose 1} A B^{p-1} + B^p$. 
Note that for every $k>0$, ${p \choose k}$ is a product of $p$, hence equals 0 modulo $p$. Thus, all monomials in the above equation have 0 coefficients, except for the first and last monomials, namely:
$f^p = A^p + B^p$. Every variable power $x^d$ in $A,B$, for some $d \leq \deg f$, appears in $A^p$ and $B^p$ as $x^{dp}$. By using the Boolean axiom enough times we can replace $x^{dp}$ by $x$ in $A$ and $B$. We thus get to $A + B$, and the PC derivation has degree at most $O(\deg f)$  (recall that $p$ is a constant).

The same idea when $f$ has more than two monomials applies as well, using induction on the number of monomials in $f$: write $f = A+C$ with $A$ a monomial and $C$ a polynomial. Then, by the same argument as above, we get $f^p = A^p+C^p$ in a PC derivation of degree at most $O(\deg f)$. Then, $A^p = A$ as above. And by induction hypothesis $C^p = C$ has a PC derivation of degree $O(\deg f)$.
%
%
\end{proof}

On the other hand, if $\R$ is a field of characteristic $0$, 
then by the following lemma $\pcrad\R$ is strictly stronger
than $\pc\R$ with respect to derivations,  
even in the presence of Boolean axioms. 
It is open whether there is a simulation if we only consider refutations.
\iddo{put this in open problems}
\begin{proposition}\label{prop:radLB}
If $\R$ is a field of characteristic $0$, then $\pc{\R}$
 derivations of
\begin{equation*}
\left\{x_i^2-x_i=0 : i=1, \dots, n\right\}
\cup \left\{ (x_1+\dots+x_n+1)^2=0 \right\} \\ \vdash x_1+\dots+x_n+1=0
\end{equation*}
require degree $\Omega(n)$.
\end{proposition}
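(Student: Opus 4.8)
The plan is to prove a degree lower bound for $\pc{\R}$ derivations by exhibiting a linear operator (a "pseudo-expectation" in the algebraic setting) that is consistent with the axioms and the squared equation in low degree, but inconsistent with the target equation $x_1+\dots+x_n+1=0$. Concretely, write $S := x_1+\dots+x_n+1$ and $A := \{x_i^2-x_i = 0 : i \in [n]\} \cup \{S^2 = 0\}$. A $\pc{\R}$ derivation of $S = 0$ from $A$ in degree $d$ yields, by the standard reduction, polynomials $g_i$ and $h$ with $\sum_i g_i (x_i^2-x_i) + h\cdot S^2 \equiv S$, where every term has degree at most $d$. So it suffices to build a linear functional $L : \R[x_1,\dots,x_n]_{\le d} \to \R$ such that $L$ vanishes on the ideal-generated pieces of degree $\le d$ — i.e. $L(q\cdot(x_i^2-x_i)) = 0$ for all $i$ and all $q$ of degree $\le d-2$, and $L(q\cdot S^2) = 0$ for all $q$ of degree $\le d-2$ — while $L(S) \neq 0$. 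If such an $L$ exists for $d = \Omega(n)$, no derivation of degree $d$ can exist.

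The natural candidate is the functional that symmetrizes and evaluates "as if" a random Boolean point of Hamming weight $-1$ were being used — i.e. the approach exploiting that $x_1+\dots+x_n = -1$ is the symmetric subset-sum instance that is hard for Nullstellensatz/PC over characteristic zero. First I would pass to multilinear representatives using the Boolean axioms: the condition $L(q(x_i^2-x_i))=0$ just says $L$ is determined by its values on multilinear monomials, which by symmetry (the whole instance is $S_n$-invariant) reduces to a single value $\lambda_k := L(x_1\cdots x_k)$ for each $k \le d$. The remaining constraints $L(q\cdot S^2) = 0$ become linear recurrences among the $\lambda_k$'s; over $\Q$ (hence over any characteristic-zero field, since we may take $L$ rational) these have the flavour of the recurrences appearing in known Nullstellensatz degree lower bounds for counting principles and subset-sum (cf.\ the Buss–Grigoriev–Impagliazzo–Pitassi style arguments, and the linear-degree bound of Grigoriev~\cite{Gri01-CC} and \cite{IPS99} cited just before the proposition). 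One shows these recurrences admit a solution with $\lambda_0 = L(1)$ chosen so that $L(S) = \sum_{k} \lambda_1 \cdot (\text{coefficient count}) + \lambda_0 = n\lambda_1 + \lambda_0 \neq 0$, and that the solution is well-defined and consistent as long as $d$ is below some linear threshold $c\cdot n$.

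An alternative, possibly cleaner, route is to reduce directly to an \emph{existing} lower bound rather than re-deriving the design. Note that a degree-$d$ derivation $A \vdash S = 0$ can be combined with the assumption $S^2 = 0$ and the multiplication rule to simulate, in degree $O(d)$, a short PC \emph{refutation} of the Boolean system $\{x_i^2-x_i=0\} \cup \{x_1+\dots+x_n+1 = 0\}$ — deriving $S=0$ and then contradicting... wait, $S=0$ is consistent with $\{x_i^2-x_i=0\}$ only if $-1$ is a sum of $n$ Boolean values, which over characteristic $0$ is impossible; so in fact from $S = 0$ and the Boolean axioms one gets a PC refutation in degree $O(d)$ — hence $d = \Omega(n)$ by the classical linear PC degree lower bound for symmetric subset-sum over fields of characteristic $0$. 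The main obstacle in either approach is handling the squared hypothesis $S^2 = 0$ correctly: in the pseudo-expectation approach it is the constraint $L(q S^2) = 0$ (not merely $L(qS)=0$) that must be verified for all $q$ up to degree $d-2$, which requires the recurrence among the $\lambda_k$ to be checked carefully; in the reduction approach one must confirm that the simulation genuinely uses $S^2 = 0$ only in a way that a degree-$O(d)$ PC derivation can absorb, and that the target of the reduction is indeed a principle with a known $\Omega(n)$ PC degree bound over characteristic-zero fields. I expect the cleanest write-up to use the reduction to symmetric subset-sum, citing the known linear degree bound, with the pseudo-expectation construction as a fallback if the reduction does not go through as smoothly as hoped.
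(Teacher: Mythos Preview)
Your ``cleaner'' reduction route has a real gap: it is circular. Starting from $B := \{x_i^2 - x_i = 0\} \cup \{S = 0\}$ you can derive $S^2 = 0$ in degree $2$ and then run the hypothesised degree-$d$ derivation $A \vdash S = 0$, but the output is just $S = 0$ again, which was already an axiom of $B$. You have not produced $1 = 0$, so no refutation of $B$ has been constructed and the known $\Omega(n)$ subset-sum refutation bound for $B$ gives you nothing. More generally, a low-degree \emph{derivation} $A \vdash S = 0$ does not yield a low-degree \emph{refutation} of $A$ or of $B$: once you hold $S = 0$ you must still refute $\{x_i^2 - x_i = 0, S = 0\}$, and that is precisely the $\Omega(n)$-degree problem you were trying to invoke. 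So the reduction gives no bound on $d$ at all.

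Your fallback pseudo-expectation approach is correct and is essentially the paper's argument, phrased dually. The paper does not build the linear functional explicitly; instead it quotes Lemma~5.2 of \cite{IPS99}, which says that for multilinear $q$ with $\deg q \le n/2$ one has $\deg ml(qS) = \deg q + 1$, hence $\deg ml(rS^2) = \deg r + 2$ whenever $\deg r \le n/2 - 1$. Since modulo the Boolean axioms every line of the $\pc\R$ derivation has the form $ml(r S^2)$ for some multilinear $r$, and the final line $S$ has degree $1$, there is a first line where the degree drops below $\deg r + 2$; at that point $\deg r > n/2 - 1$, which forces some line of the derivation to have degree at least $n/2$. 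Your $L$ (with the symmetric moments $\lambda_k$) is exactly the dual witness to this degree-increase lemma, so if you pursue approach~1 you will end up re-deriving that lemma rather than citing it.
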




\begin{proof}
The argument is the same as for the $\pc\R$ lower bound for the Subset Sum
principle in \cite{IPS99}. 
By Lemma~{5.2} in \cite{IPS99} if $q\in\R[x_1,\dots,x_n]$ is a multilinear polynomial of degree
$d\leq n/2$, then the degree of $ml(q \cdot (x_1+\dots+x_n+1))$ is $d+1$, where $ml$
is the multilinearization operator\footnote{This lemma is stated for reals in \cite{IPS99}, but the proof
applies to any field of characteristic $0$.}. Consequently, if $r$ is multilinear of degree 
$d\leq n/2 - 1$, then the degree of $ml(r \cdot (x_1+\dots+x_n+1)^2)$ is $d+2$.

Consider a $\pc\R$ derivation $\pi$ 
as in the statement of the proposition.
We will work with multilinearizations of lines of $\pi$, since this allows us to ignore 
Boolean axioms. Thus multilinearizations of lines of $\pi$ have the form
$ml(r \cdot (x_1+\dots+x_n+1)^2)$ for some  multilinear~$r$. \iddo{Explain that by induction on proof length, every line is of the form $t\cdot \text{Boolean Axioms} + r\cdot (\sum_i x_i)$} Consider the first line in $\pi$
such that $\deg(ml(r \cdot (x_1+\dots+x_n+1)^2))<\deg(r)+2$
--- such a line exists, since the last line of $\pi$ is $x_1 + \dots + x_n + 1 = 0$, of degree 1.
By the discussion above, Lemma~{5.2}
in \cite{IPS99} implies that $\deg(r) > n/2 - 1$. 

\iddo{From this point, it becomes slightly confusing.} Multilinearization of at least one
of the premises \iddo{why is that a premise.... ? etc.} of this line must satisfy 
$\deg(ml(r' \cdot (x_1+\dots+x_n+1)^2))=\deg(r')+2$ \iddo{equality here  is simple} and  $\deg(r') \geq \deg(r) - 1 > n/2 - 2$.\iddo{Neil: penultimiate inequality os from minimiality; second inequality: check end of prev parag}
As multilinearization does not increase the degree, this proves that there is a line in $\pi$
of degree at least~$n/2-2$.
\end{proof}

We now show the simulations between $\sos$ and $\pcP$.

\begin{proposition} \label{pro:sos_to_pcp}
If $\S$ is refutable in degree $d$ in $\sos$ then it is refutable in
degree~$d$ in~$\pcP$. Furthermore this refutation does not use the radical rule.
\end{proposition}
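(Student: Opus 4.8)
The plan is to convert a degree-$d$ $\sos$ refutation of $\S$, which is an identity
\[
\sum_i r_i p_i + \sum_j s_j^2 \equiv -1,
\]
into a $\pcP$ refutation by walking down the left-hand side of this identity and simulating the additive and multiplicative structure step by step. First I would note that a $\pcP$ derivation can compute, for any polynomial $g$ that is a fixed $\R$-linear combination of products of the $p_i$'s with arbitrary polynomials, the equation $g = 0$: starting from the axioms $p_i = 0$, the multiplication rule gives $p_i x_k = 0$ and hence by iteration $p_i m = 0$ for any monomial $m$, and the addition rule then gives $r_i p_i = 0$ for any polynomial $r_i$, and finally $\sum_i r_i p_i = 0$. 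So if we let $P := \sum_i r_i p_i$ and $Q := \sum_j s_j^2$, we obtain a $\pcP$ derivation of $P = 0$, and crucially this part uses only the standard PC rules (this is where the ``furthermore'' clause comes from). The only degree appearing here is bounded by the degrees of the terms $r_i p_i$, hence by $d$.

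Next I would handle the sum-of-squares part. Since $P + Q \equiv -1$, we have $Q \equiv -1 - P$, so adding the constant equation $-1 = 0$... — but wait, we do not have $-1=0$; that is what we are trying to derive. Instead the right move is: from $P = 0$ derive by the addition rule (adding the trivial $0=0$ and scaling) the equation $P + 1 = 1$; more directly, since $P + Q + 1 \equiv 0$ as a polynomial identity, the equation $P + Q + 1 = 0$ is literally the axiom $0 = 0$ rewritten, so it is available for free. Adding $P = 0$ to this (addition rule with coefficient $-1$ on $P=0$) yields $Q + 1 = 0$, i.e. $\sum_j s_j^2 + 1 = 0$, which is $1^2 + \sum_j s_j^2 = 0$ since $1 = 1^2$. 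Now the sum-of-squares rule applies: from $1^2 + \sum_j s_j^2 = 0$ we derive $1^2 = 0$, i.e. $1 = 0$ (after possibly one use of the radical rule, or simply noting $1^2 = 1$ as polynomials so $1^2 = 0$ already is $1=0$). The degree at every step is bounded: $Q + 1$ has degree at most $\max_j \deg(s_j^2) \le d$, and $P$ has degree at most $d$, so the whole $\pcP$ refutation stays in degree $d$.

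The one slightly delicate point — and the step I'd expect to need the most care — is making sure the intermediate polynomials genuinely have degree at most $d$ and not something larger, since in the $\sos$ refutation the individual terms $r_i p_i$ and $s_j^2$ each have degree at most $d$ by definition of $\sos$ degree, but the partial sums $\sum_{i \le k} r_i p_i$ also have degree at most $d$ because they are polynomials and degree cannot go up under a single addition of two degree-$\le d$ polynomials. I would also double check the bookkeeping around whether $1^2 = 0$ needs the radical rule: as polynomial equations $1^2 = 0$ and $1 = 0$ are literally the same equation ($1^2$ and $1$ are the same polynomial), so in fact no application of the radical rule is needed anywhere, and indeed the radical rule is not used at all — consistent with the ``Furthermore'' in the statement, where the only extra rule beyond $\pc\R$ that gets used is the sum-of-squares rule. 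So the final shape of the proof is: (1) derive $P = 0$ in $\pc\reals$ in degree $d$; (2) use the free axiom $P + Q + 1 = 0$ and the addition rule to get $\sum_j s_j^2 + 1 = 0$; (3) apply the sum-of-squares rule to get $1 = 0$.
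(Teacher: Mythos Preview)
Your proposal is correct and follows essentially the same route as the paper. The paper's version is just slightly more streamlined: rather than deriving $P=0$, invoking the ``free axiom'' $P+Q+1=0$, and then adding, it directly derives $-\sum_i r_i p_i = 0$ and observes that this \emph{is} the equation $1+\sum_j s_j^2 = 0$ (same polynomial), after which one application of the sum-of-squares rule finishes; your extra addition step is harmless but unnecessary.
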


\begin{proof}
Suppose 
$\S = \{ p_i = 0 : i \in I \}$ has a $\sos$ refutation expressed by an equality
\[
\sum_i r_i p_i + \sum_j s_j^2  \equiv -1.
\] 
In $\pcP$, derive from $\S$ the equation $-\sum_i r_i p_i = 0$. 
By the above equality, this is equivalent to~$1 + \sum_j s_j^2 = 0$,
so we can derive $1=0$ by a single application of the sum-of-squares rule.
\end{proof}
\iddo{explain the choice of not using inequalities in sos}

\begin{theorem}\label{thm:pcPSoSsim}
$\sosbool$ simulates $\pcP$, and the simulation at most doubles the degree.
\end{theorem}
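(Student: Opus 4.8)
The plan is to simulate a $\pcP$ refutation line-by-line by a sequence of $\sosbool$ derivations, maintaining the invariant that each $\pcP$ line $p=0$ is witnessed by a pair of $\sosbool$ derivations, one of $p\ge 0$ and one of $-p\ge 0$. Concretely, I would prove by induction on the length of the $\pcP$ refutation that if $\pi$ derives $p=0$ from $\S$ in degree $d$, then there are degree-$O(d)$ $\sosbool$ derivations of $p \ge 0$ and of $-p \ge 0$ from $\S$; since the refutation ends with $1=0$, the derivation of $-1 \ge 0$ is then the desired $\sos$ refutation. The claim that the degree at most doubles will come from being careful about where squaring is used: the only place the degree really blows up is handling the Boolean-type reasoning and, crucially, the radical rule.

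For the base cases and easy rules: an axiom $p_i = 0$ from $\S$ is witnessed by $p_i \ge 0$ (take $r_i = 1$, all else zero) and $-p_i \ge 0$ (take $r_i = -1$); the line $0=0$ is trivial. For the addition rule producing $ap + br = 0$ from $p=0,r=0$: given $\sos$ derivations of $\pm p \ge 0$ and $\pm r \ge 0$ (which are just formal linear combinations $\sum_i r_i p_i + \sum_j s_j^2$), I form the linear combination $a \cdot(\text{derivation of }p\ge 0 \text{ or } -p\ge 0) + b\cdot(\dots)$ choosing signs appropriately — but here there is a subtlety, since $a,b$ can be negative and $\sos$ only allows adding, not subtracting, the $s_j^2$ terms. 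This is exactly why the invariant carries \emph{both} $p\ge 0$ and $-p\ge 0$: to get $ap + br \ge 0$ I use the witness for $p\ge 0$ if $a\ge 0$ and for $-p \ge 0$ (negated, i.e.\ contributing $|a| \cdot (-p)$... wait — I need $+ap$) — more carefully, $ap \ge 0$ follows from $p \ge 0$ when $a \ge 0$ by scaling, and from $-p\ge 0$ when $a < 0$ by scaling by $|a|$, since $|a|\cdot(-p) = ap$. Similarly for $br$, and adding the two $\sos$ derivations (legal, as it only adds nonnegative combinations and squares) gives $ap+br\ge 0$; the same with all signs flipped gives $-(ap+br)\ge 0$. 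The multiplication rule $p=0 \Rightarrow px_i = 0$ is handled using the Boolean axioms: from $x_i^2 - x_i = 0$ one has in $\sosbool$ both $x_i \ge 0$ and $1 - x_i \ge 0$ (these are the standard degree-$2$ $\sos$ certificates for Boolean variables, e.g.\ $x_i = x_i^2 + (x_i^2 - x_i)$, and $1 - x_i = (1-x_i)^2 + (x_i^2 - x_i)$... let me not grind this), and then $px_i \ge 0$ can be obtained by suitable products of the $p\ge 0$ / $-p\ge 0$ witnesses with $x_i\ge 0$ / $1-x_i \ge 0$ — this is where products of $\sos$ derivations and the use of $x_i^2 = x_i$ to keep degrees down come in, and where the factor-two degree bound needs the most care.

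The main obstacle is the \textbf{sum-of-squares rule}: from $p^2 + \sum_i q_i^2 = 0$ derive $p^2 = 0$. Given $\sos$ witnesses for $\pm(p^2 + \sum q_i^2) \ge 0$, I need witnesses for $\pm p^2 \ge 0$. One direction, $p^2 \ge 0$, is free (it is literally a square, $r_i$ all zero, one $s_j = p$). The other direction, $-p^2 \ge 0$, is the crux: I have a derivation of $-(p^2 + \sum q_i^2) \ge 0$, i.e.\ $\sum_i r_i p_i + \sum_j s_j^2 \equiv -p^2 - \sum_i q_i^2$, and I want $\sum_i r_i' p_i + \sum_j (s_j')^2 \equiv -p^2$. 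But that is immediate: just move the $\sum_i q_i^2$ to the square-sum side, i.e.\ $\sum_i r_i p_i + \sum_j s_j^2 + \sum_i q_i^2 \equiv -p^2$, which is a valid $\sos$ derivation of $-p^2 \ge 0$ of the same degree (the $q_i^2$ are new square terms, of degree at most that of $p^2$, hence within the degree bound). So in fact the sum-of-squares rule is \emph{easy} once the invariant is set up correctly — the real work, and the place where the degree genuinely doubles rather than merely increasing, is the radical rule: from $p^2 = 0$ derive $p = 0$. To witness $\pm p \ge 0$ from witnesses of $\pm p^2 \ge 0$ one cannot work purely formally; instead I would use the Boolean axioms to reduce to the multilinear/Boolean setting where $p$, as a $0/1$-valued-coefficient... more precisely, over $0/1$ assignments $p^2 = 0$ iff $p = 0$, and one shows $\pcbool$ (hence, via the rest of the simulation, $\sosbool$) derives $p$ from $p^2$ in degree $O(\deg p)$ using the same trick as in Proposition~\ref{pro:pcas_positive_char} adapted to characteristic zero via multilinearization, or more directly by citing that $\pcbool$ proves $p^2 = 0 \supset p = 0$ in low degree since $p^2 - p \cdot ml(p)$ is in the ideal of Boolean axioms. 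I would isolate this as a lemma and fold the degree cost into the constant, yielding the at-most-doubling bound for the final composed simulation.
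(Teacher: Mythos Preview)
There is a genuine gap in your handling of the radical rule. Your invariant --- maintaining $\sosbool$ derivations of both $p \ge 0$ and $-p \ge 0$ for each line $p=0$ --- cannot be propagated through it. Given a witness for $-p^2 \ge 0$, the best you can extract toward $p \ge 0$ is $p + c \ge 0$ for some positive constant $c$: for instance, from the identity $p = \tfrac{1}{2c}(p+c)^2 - \tfrac{1}{2c}p^2 - \tfrac{c}{2}$ you get a derivation of $p + \tfrac{c}{2} \ge 0$, never of $p \ge 0$ itself, and there is no way to absorb the constant. Your fallback --- that $\pcbool$ derives $p=0$ from $p^2=0$ in degree $O(\deg p)$ via multilinearization --- is simply false in characteristic zero: Proposition~\ref{prop:radLB} shows exactly that for $p = x_1 + \dots + x_n + 1$ such $\pc\reals$ derivations with Boolean axioms require degree $\Omega(n)$. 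So neither route closes the radical case, and the whole induction breaks. (There is a secondary degree-control issue in your multiplication step: multiplying a degree-$D$ witness for $p\ge 0$ by a degree-$2$ witness for $x_i\ge 0$ gives degree $D+2$, so a chain of multiplications accumulates degree linearly in the number of steps, not in~$d$.)

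The paper's approach, extending Berkholz, uses a different invariant that sidesteps both problems: for each line $r=0$ it maintains, for \emph{every} $\epsilon>0$, an $\sosbool$ derivation of $-r^2+\epsilon\ge 0$ in degree~$2d$. The $\epsilon$ is the whole point. For the radical rule (where $r_k = r_s^2$) one uses the identity
\[
\tfrac{1}{2\epsilon}\bigl[-r_k^2 + \epsilon^2 + (\epsilon - r_k)^2\bigr] \equiv -r_k + \epsilon,
\]
so a witness for $-r_k^2 + \epsilon^2 \ge 0$ yields one for $-r_s^2 + \epsilon \ge 0$, exactly matching the invariant. The squared form also makes the multiplication rule clean via $(r_k - x r_k)^2 + (-2r_k^2)(x^2 - x) \equiv r_k^2 - x^2 r_k^2$, which stays within degree $2d$. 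At the end, taking $\epsilon = \tfrac12$ at the final line $1=0$ gives $-1 + \tfrac12 \ge 0$, and scaling by~$2$ yields the refutation. Your treatment of the sum-of-squares rule, by contrast, is correct and essentially the same as the paper's.
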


 Our argument for Theorem~\ref{thm:pcPSoSsim}
is an extension of the simulation of $\pc{\reals}$ in $\sos$ described in \cite{Ber18},
 which works by translating $\pc{\reals}$ derivations of $p=0$ into
 $\sosbool$ derivations of $p^2 \le 0$.
We additionally need to deal with the radical and sum-of-squares rules.

We first show that $\sosbool$ ``approximately simulates'' $\pcP$ with respect to derivations,
in that it can derive that $p^2$ is bounded by some arbitrarily small $\epsilon$. Notice that although the degree
is independent of~$\epsilon$, making~$\epsilon$ smaller may increase the \emph{size} of the proof
(depending how size is measured) since it affects the coefficients.
\iddo{highlight the following in intro}

\begin{lemma} \label{lem:sos_simulates_pcp}
Suppose $r = 0$ is derivable from a set of equalities $\S$ by a $\pcP$ derivation of
degree~$d$. Then, for every $\epsilon>0$, 
there exists a degree $2d$ $\sosbool$ derivation of $r^2 \le \epsilon$ from $\S$.
\end{lemma}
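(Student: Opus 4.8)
The plan is to induct on the length of the $\pcP$ derivation $\pi: \S \vdash r = 0$, maintaining the invariant that for every line $p = 0$ in $\pi$ and every $\delta > 0$, there is a degree $2d$ $\sosbool$ derivation of $p^2 \le \delta$ from $\S$. The key point is that $\sosbool$ with the Boolean axioms can prove things like $x_i^2 \le 1$, so $\pc\reals$ lines stay bounded, which is exactly what Berkholz's simulation exploits; I would first recall that mechanism and then extend it to the two new rules.

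The base cases are: a line from $\S$, where $p^2 \le 0$ follows since $p = 0$ gives the derivation $1 \cdot p \cdot p + 0 \equiv p^2$, hence $-p^2 \ge 0$; and the line $0 = 0$, which is trivial. For the addition rule, suppose $ap + br = 0$ is derived from $p = 0$ and $r = 0$; from $p^2 \le \delta'$ and $r^2 \le \delta'$ one gets $(ap+br)^2 \le 2a^2p^2 + 2b^2r^2$ via the SoS identity $(ap+br)^2 + (ap - br)^2 \equiv 2a^2p^2 + 2b^2 r^2$, and since $a,b$ are constants this is bounded by a constant times $\delta'$, so choosing $\delta'$ small enough gives $(ap+br)^2 \le \delta$; the degree at most doubles. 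For the multiplication rule $p x_i$ from $p$, use that $\sosbool$ proves $x_i^2 \le 1$ (from the Boolean axiom $x_i^2 - x_i = 0$ and $x_i^2 \ge 0$, or directly $x_i^2 = x_i$ and $1 - x_i = (1-x_i)^2$ modulo the axiom), so $(px_i)^2 = p^2 x_i^2 \le p^2 \cdot 1$, and then multiply the derivation of $p^2 \le \delta$ by $x_i^2$ and add the auxiliary bound; again degree doubles because we square. The subtlety here, which I would be careful about, is that multiplying an inequality $p^2 \le \delta$ by the inequality $x_i^2 \le 1$ is not a single $\sos$ step, so one forms instead $\delta - p^2 x_i^2 = (\delta - p^2) + p^2(1 - x_i^2)$ and combines the derivation of $\delta - p^2 \ge 0$ with $p^2$ times the derivation of $1 - x_i^2 \ge 0$ (the latter legitimate since $p^2$ is itself a square, so multiplying a nonnegativity certificate by $p^2$ keeps it a certificate), staying within degree $2d$.

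For the radical rule, from $p^2 = 0$ we must derive $p = 0$, i.e.\ from a derivation of $(p^2)^2 \le \delta$ produce a derivation of $p^2 \le \delta$: but $(p^2)^2 \le \delta$ does not on its own bound $p^2$. Here the invariant has to be used more cleverly — the premise of the radical rule is a line $p^2 = 0$, and to say $p^2 = 0$ is (up to the boundedness available from Boolean axioms) the same as saying $p = 0$, so I would instead track, for each line, a bound on the square of the \emph{polynomial written in that line}: the line is literally ``$p^2 = 0$'', so the invariant already gives $(p^2)^2 \le \delta'$, but we want $p^2 \le \delta$. The fix is that on Boolean points $p$ takes values in a bounded set, so $p^4 \le \delta'$ forces $p^2$ small: concretely, since modulo the Boolean axioms every polynomial is equivalent to a multilinear one with bounded coefficients, we have $p^2 \le M$ for an explicit constant $M$ (degree $2d$), and then from $M - p^2 \ge 0$ and $\delta' - p^4 \ge 0$ we get $M\delta' - p^4 \cdot \tfrac{M}{?}$ — more simply, $\delta - p^2 \equiv \tfrac{1}{M}(M - p^2)(\,\cdot\,) + \ldots$; the clean statement is that $M \delta' \ge M p^4 \ge p^2 \cdot p^2 \cdot \tfrac{?}{}$, and using $p^2 \le M$ one gets $p^4 \ge p^2 \cdot \tfrac{1}{M}p^2$ is false in general, so the right move is: from $p^4 \le \delta'$ and $p^2 \le M$, multiply to get nothing directly, but instead use that we may choose $\delta'$ after seeing the target $\delta$, and note $\{p^2 > \delta\} \cap \{p^2 \le M\}$ gives $p^4 > \delta p^2$... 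The honest main obstacle is precisely this: \emph{the radical and sum-of-squares rules are not "approximately sound" line-by-line in the naive $p^2 \le \epsilon$ measure}, so the correct invariant must be that the line $p = 0$ (respectively $p^2 = 0$) comes with a $\sosbool$ derivation of $p^2 \le \epsilon$ \emph{and} the extra structural fact, provable in bounded degree using the Boolean axioms, that $p^2$ is $\le M \cdot p^{2}$-free, i.e.\ one carries a \emph{two-sided} control; I expect the cleanest route is to have the invariant assert a degree-$2d$ $\sosbool$ derivation of $p^2 \le \epsilon$ for \emph{every} $\epsilon$, and to check that for the radical rule the premise line ``$p^2 = 0$'' already, by the invariant applied with the polynomial $p$ in place of $p^2$, gives what we need — in other words, the invariant should be stated about the polynomial $p$ such that the line is ``$p = 0$'', and the radical rule's premise being ``$p^2 = 0$'' is handled by observing that $\S \vdash (p^2)^2 \le \epsilon$ for all $\epsilon$ already forces, modulo Boolean axioms (which bound $p^2$ by a constant $M$), that $\S \vdash p^2 \le \sqrt{\epsilon M}$ for all $\epsilon$, hence $\S \vdash p^2 \le \epsilon'$ for all $\epsilon'$. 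The sum-of-squares rule, deriving $p^2 = 0$ from $p^2 + \sum_j q_j^2 = 0$, is then easy: the invariant gives $(p^2 + \sum q_j^2)^2 \le \epsilon$, but since all summands are squares, $p^2 + \sum q_j^2 \ge p^2 \ge 0$, so from a bound on its square (plus the constant bound $M'$ on $p^2 + \sum q_j^2$ from Boolean axioms) we get $p^2 + \sum q_j^2 \le \epsilon'$, hence $p^2 \le \epsilon'$ by dropping the nonnegative $\sum q_j^2$; all within degree $2d$. I would write up the constant-from-Boolean-axioms lemma ($\sosbool \vdash p^2 \le M$ for an explicit $M$ depending on $\deg p$ and the coefficients, in degree $2\deg p$) as a separate sublemma, since it is the technical device that makes the "square of a small quantity is small" reasoning go through, and then the induction is routine.
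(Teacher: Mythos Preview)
Your handling of the axioms, the addition rule, and the multiplication rule is fine and essentially matches the paper (your multiplication step via $\delta - p^2 x_i^2 = (\delta - p^2) + p^2(1-x_i^2)$ and the paper's identity $(r_k - x r_k)^2 + (-2r_k^2)(x^2-x) \equiv r_k^2 - x^2 r_k^2$ are the same computation reorganised).

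The gap is exactly where you flag it: the radical rule. Your proposed detour through a boundedness lemma $p^2 \le M$ from the Boolean axioms is both unnecessary and, as written, does not yield a derivation. You never exhibit an $\sos$ identity producing $\sqrt{\epsilon M} - p^2$ from $\epsilon - p^4$ and $M - p^2$, and in fact the bound $M$ plays no role: since $p^2 \ge 0$, the inequality $p^4 \le \epsilon$ \emph{alone} already forces $p^2 \le \sqrt{\epsilon}$. The paper captures this with a one-line algebraic identity. If $r_k = p^2$ is the premise line (so $\deg r_k \le d$), then
\[
\tfrac{1}{2\epsilon}\bigl[(-r_k^2 + \epsilon^2) + (\epsilon - r_k)^2\bigr] \;\equiv\; -r_k + \epsilon,
\]
so from an $\sosbool$ derivation $\pi$ of $-r_k^2 + \epsilon^2 \ge 0$ (given by the inductive hypothesis with $\epsilon^2$ in place of $\epsilon$), the expression $\tfrac{1}{2\epsilon}[\pi + (\epsilon - r_k)^2]$ is a derivation of $-p^2 + \epsilon \ge 0$, in degree $2\deg r_k \le 2d$. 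No Boolean boundedness is needed.

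The sum-of-squares rule is then even simpler than you make it. With $r_k = p^2 + \sum_i q_i^2$ and $r_s = p^2$, just expand $r_k^2 = p^4 + A$ where $A$ is a sum of squares (the cross terms $2p^2 q_i^2$ and the square $(\sum q_i^2)^2$). The inductive hypothesis gives a derivation of $-p^4 - A + \epsilon \ge 0$; adding $A$ gives $-p^4 + \epsilon \ge 0$, which is $-r_s^2 + \epsilon \ge 0$. There is no need to first pass through $r_k \le \epsilon'$, so your boundedness sublemma is again not needed.
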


\begin{proof}
Let $r_1 = 0, \dots, r_s=0$ be the $\pcP$ derivation.
We prove by induction on $s$ that,
for every~$\epsilon >0$,
 $-r_s^2+\epsilon\geq 0$ has an $\sosbool$ proof of degree $2d$. 
The argument is by cases, depending on how~$r_s=0$ is derived.
 In case $r_s=0$ is an axiom from $\S$,
the $\sosbool$ derivation is trivial.

Suppose $r_s=0$ is derived by the multiplication rule,
that is, $r_s \equiv x r_k$
for some earlier equality $r_k=0$ and some variable~$x$. 
By the inductive hypothesis there exists a $\sosbool$ derivation
$\pi$ of $-r_k^2+\epsilon\geq 0$ of degree $2d$. 
We have\vspace{-7pt}  
\begin{equation*}
(r_k-x r_k)^2+(-2r_k^2)(x^2-x) 
\equiv r_k^2 - 2 x r_k^2 + x^2 r_k^2  -  2x^2 r_k^2  + 2 x r_k^2 
\equiv r_k^2 - x^2 r_k^2 
\end{equation*}
so we can
derive $-x^2 r_k^2 + \epsilon \ge 0$ by adding the expression $(r_k-x r_k)^2+(-2r_k^2)(x^2-x)$
to~$\pi$.

Suppose $r_s=0$ is derived by the addition rule, 
so $r_s \equiv a r_i + b r_j$ for some  $i,j<s$ and some~$a,b \in \reals$.
We will assume neither of $a,b$ is $0$ --- the case when one of them is~$0$ is similar,
and when both are $0$ there is nothing to prove.
By the inductive hypothesis there exist $\sosbool$
derivations~$\pi$ of $-r_i^2+\frac{\epsilon}{4a^2}\geq 0$ and
 $\pi'$ of $-r_{j}^2+\frac{\epsilon}{4b^2}\geq 0$, both of degree~$2d$. We have
\begin{align*}
2a^2(-r_i^2 +\frac{\epsilon}{4a^2}) + 2b^2(-r_{j}^2+\frac{\epsilon}{4b^2}) +
 (ar_i - br_j)^2 
\equiv
-  a^2 r_i^2 -  b^2 r_j^2 +  \epsilon  - 2ab r_i r_j \equiv - r_s^2 + \epsilon.
\end{align*}
Thus
$2a^2\pi + 2b^2\pi' + (ar_i-br_j)^2$
is  a derivation of $-r_s^2+\epsilon\geq 0$.

Suppose $r_s=0$ is derived by the radical rule, so $r_k \equiv r_s^2$ for some $k<s$.
We have
\[
\tfrac{1}{2\epsilon} [ -r_k^2 + \epsilon^2 + (\epsilon-r_k)^2]
\equiv - r_k + \epsilon.
\]
By the inductive hypothesis there is an $\sosbool$ derivation 
$\pi$ of $-r_k^2 + \epsilon^2 \ge 0$.
By the equivalence above, $\tfrac{1}{2\epsilon}[ \pi + (\epsilon-r_k)^2]$
is a derivation of $-r_k + \epsilon \ge 0$, that is, of $-r_s^2+\epsilon \ge 0$.

\sloppypar
Finally suppose $r_s=0$ is derived by the sum-of-squares rule, 
so $r_s = p^2$ and $r_k = p^2 + \sum_i q_i^2$
for some $k<s$ and some polynomials $p, q_1, \dots, q_m$.
By the inductive hypothesis there is an $\sosbool$
derivation $\pi$ of $-(p^2 + \sum_i q_i^2)^2 + \epsilon \ge 0$.
This can be rewritten as $-p^4 - A + \epsilon \ge 0$ for some sum of squares $A$.
Hence $\pi + A$ is a derivation of~$-p^4 + \epsilon \ge 0$.
\end{proof}

\begin{proof}[Proof of Theorem~\ref{thm:pcPSoSsim}]
We are given a $\pcP$ derivation of~$1=0$, in degree $d$, from a set of equalities~$\S$.
By Lemma~\ref{lem:sos_simulates_pcp}, setting $\epsilon = \tfrac{1}{2}$, 
there is a $\sosbool$ derivation~$\pi$ of $-1 + \tfrac{1}{2} \ge 0$ from~$\S$ in degree at most $2d$.
Thus $2\pi$ is the required $\sosbool$ refutation of~$\S$.
\end{proof}

\subsection{Infinite sets and large derivations}\label{sec:large_derivations}

So far we have worked with derivations from \emph{finite}
sets of assumptions. 
However, for technical reasons to do with our translation
we also want to allow infinite sets~$\F$.
So we extend the definitions of refutations and derivations
by defining, for all systems, a derivation  $\F \vdash e$ to be a derivation $\F' \vdash e$
for some  finite $\F' \subseteq \F$. This does not change anything significant
above. 

Propositions~\ref{the:alg_closed_complete}
and~\ref{pro:pcP_complete}, the implicational
completeness of $\pcrad{\FF}$ (for $\FF$ an
algebraically closed field) and $\pcP$,
still hold in the infinite case, because of
the algebraic fact that
if the underlying ring $\R$ is a field then 
$\R[x_1, \dots, x_n]$
 is Noetherian (every ideal is finitely generated).
So the proofs still go through. 
The simulation results still hold, because they
are about degree rather than size.

We also introduce the notion of a derivation of a (possibly infinite) set of equations~$\G$ 
from a set of equations $\F$. We
formally take this to be a function associating a derivation
$\F \vdash e$ to each equation $e \in \G$; we may sometimes think of it in a less structured
way, as a set of  derivations. We define derivations of sets of inequalities similarly.
The degree of a derivation $\F \vdash \G$ is the maximum of the degrees
of the derivations it contains, if this maximum exists.

\section{Algebraic and semi-algebraic  first-order theories}
\label{sec:TPC}

Let $\R$ be an integral domain. We introduce $\tpc{\R}$, a two-sorted theory in the language~$\LngPC{\R}$ described below,
with a \emph{ring} sort and an \emph{index} sort.
We will talk about \emph{ring elements}, \emph{ring variables}, 
\emph{ring-valued terms} and on the other hand \emph{index elements} etc.
and these have the obvious meanings.
Index elements model natural numbers.
As much as possible we will use names $i, j, k, \dots$ for elements or variables of the index
sort, and $a, b, c, \dots$ or $x, y, z, \dots$ for the ring sort.
\iddo{in intro: we agrue that PC rad is more natural than PC}
\subsection{The language $\LngPC{\R}$}

The language contains:
\begin{itemize}
\item 
The usual algebraic operations $+, -, \cdot$ on the ring sort.
\item 
A ring-valued \emph{oracle symbol} $X(i)$, where $i$ is index-sort.
\item 
A special \emph{big sum} operator $\Sigma$ used to form new terms expressing the sum of a family of terms.
This is not strictly part
of the language --- see the formal definition below.
\item
Equality symbols $=_\mathrm{ind}$ and $=_\mathrm{ring}$ for the two sorts. We will usually omit the subscripts.
\item
A set $\Find$ containing,
for every arity $k \in \NN$ and \emph{every} function $f : \NN^k \rightarrow \NN$,
a function symbol for $f$, mapping $k$-tuples of index
elements to an index element. 
\item
A set $\Fring$ containing,
for every arity $k \in \NN$ and \emph{every} function $f : \NN^k \rightarrow \R$,
a function symbol for $f$, mapping $k$-tuples of index
elements to  a ring element. 
\end{itemize}

The intended meaning of~$X(i)$ is the $i$th element in an otherwise unspecified sequence of ring-sort values.
Atomic formulas in the language will correspond to polynomial equations
in propositional variables~$X(i)$.
Notice that $\Find$ and $\Fring$ are uncountable, that $\Find$ contains an index-sort constant for every $i \in \NN$
and that $\Fring$ contains a ring-sort constant for every $a \in \R$.

\begin{definition} \label{def:LngPC}
Formally $\LngPC{\R}$ is defined inductively as follows.
\begin{itemize}
\item
It contains the symbols from
$\{ +, -, \cdot, X, =_\mathrm{ind}, =_\mathrm{ring} \}$, $\Find$ and $\Fring$
as defined above.
\item
For every ring-valued $\LngPC{\R}$ term $t(i, \bar m, \bar z)$, taking index variables $\bar m$
and ring variables $\bar z$ and also a distinguished index variable $i$, 
it contains a ring-sort function symbol $\sum_{t,i} (n, \bar m, \bar z)$,
where $n$ is an index variable.
\end{itemize}
\end{definition}
 We will usually write $\sum_{t,i} (n, \bar m, \bar z)$
in a more conventional way as $\sum_{i<n} t(i, \bar m, \bar z)$, and this is its intended meaning. 
Note that we may freely use standard relations on the index sort such
as~$i<n$, as they have characteristic functions 
in $\Find$,
and that the term $t$ in Definition \ref{def:LngPC} may itself contain the big sum symbol, so we can have nested big sums in the language.

We work in the standard setting of first-order (two sorted) logic. Hence, the class of  $\LngPC{\R}$ \emph{terms} are constructed by the function symbols $+,\cdot,-$, the function symbols in $\Find$ and $\Fring$, the oracle symbol $X$, the big sum terms, and the variables (of both sorts) that can also occur in function symbols. 
The class of  $\LngPC{\R}$ \emph{formulas} consists of the \emph{atomic} \emph{formulas}, which are equalities (of either sort) between terms, and  general formulas which are constructed as usual from atomic formulas and the logical connectives and quantifiers (for both sorts) $\lor,\land,\neg$ and $\exists,\forall$.

\begin{definition} \label{def:mention}
Let $\sigma$ be an $\LngPC\R$-symbol or a variable.
We inductively say that
an $\LngPC\R$-expression \emph{mentions}  $\sigma$ 
if it either contains $\sigma$, or
contains a symbol  $\sum_{s,i}$ for a term $s$ that mentions~$\sigma$.
\end{definition}

\begin{definition}
A \emph{standard model} for $\LngPC\R$ is a structure $\langle \NN, \R, A \rangle$
where $\NN$ interprets the index sort, $\R$ interprets the ring sort, 
$A$ is a function $\NN \rightarrow \R$ interpreting 
the symbol~$X$,
and all the other symbols have their natural interpretations.
We say that a sentence which 
does not mention~$X$ is \emph{true in the standard model}
if it is true in any standard model~$\langle \NN, \R, A \rangle$.
\end{definition}

Our language has the important property that terms translate
into families of polynomials of bounded degree parametrized
by their index arguments (the converse is also true).
We now use this to define a class $\IndPC{\R}$ of 
$\LngPC{\R}$ formulas with the property that every formula in
the class translates into a system of polynomial
equations of bounded degree.
For the formal translations see Sections~\ref{sec:translate_terms}
and~\ref{sec:translate_formulas}.


\begin{definition}\label{def:induc-formulas}
The class $\IndPC{\R}$ is defined inductively by: 
\begin{itemize}
\item 
All atomic formulas are in $\IndPC{\R}$
\item
All formulas, of any logical complexity, which do not mention the oracle $X$ or \iddo{''or'' here is ambiguous: why not: "do not mention nor the oracle X neither a ring variable"} any ring variable, are in $\IndPC{\R}$
\item 
If $\phi_1,\phi_2\in\IndPC{\R}$,
 then $\phi_1\vee\phi_2\in \IndPC{\R}$ and $\phi_1\wedge\phi_2\in \IndPC{\R}$
\item 
If $\phi(v)\in\IndPC{\R}$, where $v$ may have either sort, 
then $\forall v \phi(v)\in\IndPC{\R}$. 
\end{itemize}
\end{definition}

Notice that existential quantifiers 
and negation symbols can appear in such a formula, because
this is allowed by the second item; but ring variables
and the symbol $X$ cannot be mentioned in the scope of any of these symbols.

We will show that even when the ring is infinite, $\forall v \phi(v)$ in Definition  \ref{def:induc-formulas} can be translated adequately into a set of polynomials of bounded-degree, and the fact that the set of polynomials is infinite does not constitute an obstacle to our results.   
\iddo{Would be helpful to give examples of fmlas in $\IndPC{\R}$ and those that are not in $\IndPC{\R}$}

\subsection{The axioms of $\tpc{\R}$ and $\tsos$ }\label{sec:FOL-axioms}

The theory $\tpc{\R}$ consists of the  \emph{basic axioms} and the \emph{induction scheme}.
The theory $\tsos$ additionally contains 
the Boolean axiom and 
the \emph{sum-of-squares} scheme.
The axioms and schemes are listed below.
If we say that a formula with free variables is an axiom, we really mean that its universal closure is.

\iddo{some discussion about the theories would be helpful.}
\paragraph{Basic axioms}
\begin{itemize}

\item 
The standard ring axioms for $0,1\in\R$ and $+,-,\cdot$.

\item 
The integral domain axiom $xy = 0 \supset ( x=0 \vee y=0 )$.


\item 
The big sum defining axiom scheme.
This contains, for each ring-valued term $t(i)$, in which other parameters
can also occur, the axioms \iddo{i is an index var? or a general index term?---this is a variable }
\[
\sum_{i<0} t(i) = 0
\qquad \qquad
\sum\limits_{i<j+1}t(i)=\sum\limits_{i<j}t(i)+t(j).
\]

\item 
Every sentence $\sigma$ such that 
\begin{enumerate} 
\item[(i)]
$\sigma$ does not mention the oracle symbol $X$ or any ring variable, and
\item[(ii)]
$\sigma$ is true in the standard model.
\end{enumerate}
We call (i), (ii)  the \emph{background truth} axioms.
\iddo{These axioms should be stressed somehow and explained. E.g., we can exemplify that Fermat Last Theorem is an axiom here: $\neg (a^n+b^n = c^n)$ for a,b,c,n index vars and $a^n$ is the exponentiation function in F-ring from index sort to the integers...? Neil: it's not strange because the proof exists because of non uniformity. In this sens, the porpopnal proofs are not uniformly constructible}

\item 
The ring-sort and index-sort equality axiom schemes. 
That is, all formulas of the forms
\[
x = x
\qquad \qquad
i=i
\qquad \qquad
\overline{x}=\overline{y} \wedge
\overline{i}=\overline{j}
\supset f(\overline{x},\overline{i})=f(\overline{y},\overline{j})
\]
where $f$ is a function symbol
and each $=$ is either $=_\mathrm{ind}$ or $=_\mathrm{ring}$ as appropriate.
\end{itemize}

\paragraph{Induction scheme}

\begin{itemize}
\item
For every formula $\phi(i)$ in the class $\IndPC{\R}$, in which other parameters can also occur, the induction axiom
\[
\phi(0)\wedge \forall i\, (\phi(i) \supset \phi(i+1)) \ \supset \ \forall n\, \phi(n).
\]
\end{itemize}

\paragraph{Sum-of-squares scheme and Boolean axiom}
\begin{itemize}
\item
For each ring-valued term $t(i)$, in which other parameters can also occur, the axiom
\[
\sum\limits_{i<n}t(i)^2=0 \wedge j<n \  \supset \ t(j)=0.
\]
\item
The axiom $X(i)(1-X(i))=0$.
\end{itemize}
In the presence of the integral domain axiom, the Boolean axiom 
is equivalent to asserting that $X$ is $0/1$ valued.

\section{Examples of first-order proofs}
\label{sec:first_order_examples}

We will discuss how some versions of the pigeonhole
principle (PHP, for short)  can be proved in $\tpc\R$ and $\tsos$, 
to give some simple examples of how the theories and translations work.
We present a less-trivial proof in Section~\ref{sec:formal_soundness}
 below,
showing that these theories prove respectively that every (definable)
constant-degree
$\pc\R$ or $\sos$ refutation is sound.
It will follow that everything provable in the theory
is provable in constant-degree in the corresponding proof system
(including, as is well-known, the versions of PHP described here).

We first establish some basic properties of big sums in $\tpc\R$.

\begin{lemma} \label{lem:sum_properties}
The following are provable in $\tpc\R$, for all terms $s,t$.
\begin{enumerate}
\item
$\sum_{i<n} \big(s(i)+t(i)\big) = \sum_{i<n} s(i) + \sum_{i<n} t(i)$ 
\item
$\big(\sum_{i<n} s(i) \, \big) \cdot t = \sum_{i<n} \big(s(i) \cdot t\big)$
\item
$\sum_{i<m} \big( \sum_{j<n} t(i,j) \big) =  \sum_{j<n} \big( \sum_{i<m} t(i,j) \big)$
\item
If $m<n$, $t(m)=1$ and $t(i)=0$ if $i<n$ and $i \neq m$,
then $\sum_{i<n} t(i)=1$.
\end{enumerate}
\end{lemma}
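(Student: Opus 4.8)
The plan is to prove all four items by induction on the index variable~$n$ (or~$m$), using the big sum defining axiom scheme and the induction scheme of $\tpc\R$. In each case the formula being inducted on does not mention the oracle~$X$ or any ring variable, except that it may contain the terms $s,t$ as parameters; the formulas are atomic equalities, hence in $\IndPC{\R}$, so the induction scheme applies. The key point throughout is that the big sum defining axioms let us ``peel off'' the top term, reduce to the case for~$j$, apply the inductive hypothesis, and then re-fold the sum using the defining axioms in the other direction. All of this is routine equational reasoning using the ring axioms.

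For item~1, induct on~$n$. The base case $n=0$ is immediate since $\sum_{i<0}(\cdot) = 0$ by the big sum axiom, so both sides are~$0$. For the step, $\sum_{i<j+1}(s(i)+t(i)) = \sum_{i<j}(s(i)+t(i)) + (s(j)+t(j))$, which by the inductive hypothesis equals $\big(\sum_{i<j}s(i) + \sum_{i<j}t(i)\big) + s(j) + t(j)$; rearranging by commutativity and associativity and re-applying the defining axiom to each of the two sums gives $\sum_{i<j+1}s(i) + \sum_{i<j+1}t(i)$. Item~2 is the same kind of induction on~$n$, using right-distributivity $\big(\sum_{i<j}s(i) + s(j)\big)\cdot t = \big(\sum_{i<j}s(i)\big)\cdot t + s(j)\cdot t$ in the step. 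Item~3 is proved by induction on~$m$ (with $n$, and the term $t(i,j)$, as parameters): the base case $m=0$ gives $0$ on the left and $\sum_{j<n} 0 = 0$ on the right, where the latter needs a trivial sub-induction on~$n$ (or can itself be derived from items already proved); the step uses item~1 to split $\sum_{j<n} t(m,j)$ off, namely $\sum_{j<n}\big(\sum_{i<m+1} t(i,j)\big) = \sum_{j<n}\big(\sum_{i<m} t(i,j) + t(m,j)\big) = \sum_{j<n}\big(\sum_{i<m} t(i,j)\big) + \sum_{j<n} t(m,j)$, then applies the inductive hypothesis to the first summand and re-folds.

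For item~4, the natural approach is again induction on~$n$, but the hypothesis ``$m<n$, $t(m)=1$, and $t(i)=0$ for $i<n$, $i\neq m$'' depends on~$n$, so one should set up the induction carefully: fix $m$ and the term~$t$ and induct on~$n$ with the statement ``if $m<n$ and $t(i)=0$ for all $i<n$ with $i\neq m$, then $\sum_{i<n} t(i) = t(m)$'', and then specialise using $t(m)=1$. The base case is $n = m+1$: here $\sum_{i<m+1} t(i) = \sum_{i<m} t(i) + t(m)$, and one needs $\sum_{i<m} t(i) = 0$, which follows because $t(i)=0$ for all $i<m$ (since such $i\neq m$), established by a short sub-induction showing a big sum of zeros is zero. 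The step from $n$ to $n+1$: $\sum_{i<n+1} t(i) = \sum_{i<n} t(i) + t(n)$; since $n \neq m$ (as $m < n$) we have $t(n) = 0$, and the inductive hypothesis gives $\sum_{i<n} t(i) = t(m)$. Some care is needed because the quantified induction formula ``$m<n \supset (\ldots)$'' must be recognised as lying in $\IndPC{\R}$ — it does, since once $m$ and $t$ are treated as parameters the formula is built from atomic equalities and an implication, and nothing in it is in the scope of a quantifier binding a ring variable or mentioning~$X$.

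The main obstacle, such as it is, is not any single calculation but the bookkeeping: making sure each induction formula is genuinely in the class $\IndPC{\R}$ so that the induction scheme of $\tpc\R$ applies, and handling the auxiliary fact that ``a big sum whose summand is identically~$0$ equals~$0$'' (needed in items~3 and~4), which is itself a small induction. I expect item~3, with its nested sums and its reliance on item~1 inside the inductive step, to require the most attention, and item~4 to require the most care in formulating the induction statement correctly; items~1 and~2 are entirely straightforward.
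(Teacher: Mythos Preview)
Your treatment of items 1--3 is correct and is essentially what the paper does (straightforward inductions on atomic equalities, with item~3 using item~1 inside the inductive step).

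For item~4 there is a genuine gap. Your proposed induction formula
\[
\big(m<n \ \wedge \ \forall i\,(i<n \wedge i\neq m \supset t(i)=0)\big) \ \supset \ \textstyle\sum_{i<n} t(i) = t(m)
\]
is \emph{not} in $\IndPC\R$ in general. That class is closed under $\wedge$, $\vee$ and $\forall$ only; the sole way a formula containing $\supset$ (or $\neg$, or $\exists$) enters the class is through the clause ``any formula that does not mention $X$ or any ring variable''. Since the term $t$ may mention $X$ --- and in the intended applications (e.g.\ the pigeonhole arguments) it does --- your formula lies outside $\IndPC\R$, and the induction scheme of $\tpc\R$ does not apply to it. Your justification, that ``nothing in it is in the scope of a quantifier binding a ring variable or mentioning~$X$'', is simply not the criterion in the definition.

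The paper's move is to push the case distinction into the ring sort: pick a function symbol $\delta\in\Fring$ with $\delta(i,m)=0$ for $i\le m$ and $\delta(i,m)=1$ for $i>m$ (these facts are background-truth axioms), and carry out the induction on the atomic equality $\sum_{j<i} t(j) = \delta(i,m)$. This stays in $\IndPC\R$ no matter what $t$ mentions, and the step reduces to checking $\delta(i,m)+t(i)=\delta(i+1,m)$ from the hypotheses. Your approach can be salvaged by a similar trick (or by replacing implications by disjunctions with index-only side formulas, e.g.\ inducting on $k>n \vee k\le m \vee \sum_{i<k}t(i)=1$ after a preliminary induction showing $\sum_{i<m}t(i)=0$), but not as written.
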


\begin{proof}
These are proved by straightforward inductions. 
Items~1.~and~2.~are easy.
For item~3.~we have
\begin{align*}
\sum_{i<m+1} \big( \sum_{j<n} t(i,j) \big) 
& = \sum_{i<m} \big( \sum_{j<n} t(i,j) \big) + \sum_{j<n} t(m,j) \\
& = \sum_{j<n} \big( \sum_{i<m} t(i,j) \big) + \sum_{j<n} t(m,j) \\
& = \sum_{j<n} \big( \sum_{i<m} t(i,j)+  t(m,j)  \big) 
\\ & = \; \sum_{j<n}  \big( \! \! \sum_{i<m+1} t(i,j)\big), 
\end{align*}
where the equations follow from respectively the big sum axiom, the inductive hypothesis (note that 3.~is an atomic formula, hence in the class $\IndPC{\R}$), item~1.~of the lemma, and the big sum axiom (together with the equality axiom scheme).

For item 4., let $\delta(i,j)$ be a function symbol in $\Fring$
that the standard truth axioms prove is~$0$ if $i\le j$
and~$1$ if $i > j$. Use induction on $i$ on the equality
$\sum_{j<i}t(j) = \delta(i,m)$.
\end{proof}

\begin{definition}
We define $\rho(n)$ to be the term $\sum_{j<n}1$. 
\end{definition}

The term $\rho$ expresses
the natural homomorphism from the index sort to the ring sort
given by the map $n \mapsto 1+\dots+1$, where there are~$n$ many~$1$s in the sum (note that the ring $\mathcal R$  we work over may have positive characteristic, hence $n$ and $\rho(n)$ may not be equal as numbers).

\subsection{Bijective and graph  PHP in $\tpc\R$}\label{sec:bphp_proof}

To match the conventions of propositional proof complexity,
we will present the principles in this section 
as contradictions to be refuted
rather than tautologies to be proved.

Let $\theta(i,j)$ be a term. The \emph{bijective 
pigeonhole principle} for $\theta$ and $m,n$,
or~$\bPHP(\theta, m, n)$,
asserts that~$\theta(i,j)$ is the graph of a bijection
between a set $[0,m)$ of pigeons and 
a set~$[0,n)$ of holes, with $m,n,i,j$ of index-sort. Precisely,
it is the conjunction of the formulas:
\begin{enumerate}
\item
for all $i<m$, $j<n$, either $\theta(i,j)=0$ or $\theta(i,j)=1$ 
\item
for all $i<m$, $\theta(i,j)=1$ for some $j<n$
\item
for all $i<m$ and all $j,j'<n$, if $j \neq j'$ then $\theta(i,j)=0$ or $\theta(i,j')=0$
\item
for all $j<n$, $\theta(i,j)=1$ for some $i<m$
\item
for all $j<n$ and all $i,i'<m$, if $i \neq i'$ then $\theta(i,j)=0$ or $\theta(i',i)=0$.
\end{enumerate}

Notice that provability of $\bPHP$ in $\tpc\R$ is only
an interesting question if the term~$\theta(i,j)$ 
mentions~$X$ or has a ring parameter. Otherwise it is 
trivially refutable using the background truth axioms (that is, its negation is a background truth axiom).

\begin{proposition} \label{pro:bPHP_TPCR}
$\tpc\R$ proves that if $\rho(m) \neq \rho(n)$ then 
$\bPHP(\theta, m, n)$ is false.
\end{proposition}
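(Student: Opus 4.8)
The plan is to reason inside $\tpc\R$, arguing by contradiction: assume $\bPHP(\theta,m,n)$ holds and derive $\rho(m)=\rho(n)$. The key idea is the standard double-counting argument — count the number of pairs $(i,j)$ with $i<m$, $j<n$ and $\theta(i,j)=1$ in two ways, once by rows and once by columns — but now carried out with the big-sum operator $\Sigma$ rather than with natural-number counting. Concretely, I would work with the ring-sort term
\[
S \;:=\; \sum_{i<m}\ \sum_{j<n} \theta(i,j),
\]
and show that $\tpc\R$ proves both $S=\rho(m)$ and $S=\rho(n)$, whence $\rho(m)=\rho(n)$, contradicting the hypothesis.

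First I would establish the ``row'' identity. Fix $i<m$. By clauses (1)--(3) of $\bPHP$, $\theta(i,\cdot)$ restricted to $[0,n)$ is $0/1$-valued and takes the value $1$ at exactly one $j<n$; hence by Lemma~\ref{lem:sum_properties}, item~4, we get $\sum_{j<n}\theta(i,j)=1$. This is an induction (over $i<m$) of an $\IndPC{\R}$ formula — note the inner sum over a fixed-ish index bound is fine — so $\tpc\R$ proves $\sum_{i<m}\big(\sum_{j<n}\theta(i,j)\big)=\sum_{i<m}1=\rho(m)$, using the definition of $\rho$. Symmetrically, using clauses (1), (4), (5) and item~4 of Lemma~\ref{lem:sum_properties}, for each $j<n$ we get $\sum_{i<m}\theta(i,j)=1$, so $\sum_{j<n}\big(\sum_{i<m}\theta(i,j)\big)=\rho(n)$. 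Finally, item~3 of Lemma~\ref{lem:sum_properties} (the Fubini/swap property for big sums, provable in $\tpc\R$) gives $\sum_{i<m}\sum_{j<n}\theta(i,j)=\sum_{j<n}\sum_{i<m}\theta(i,j)$, and combining the three equalities yields $\rho(m)=\rho(n)$. Taking the contrapositive gives the proposition.

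The steps that need the most care are bookkeeping ones rather than conceptual. One is making sure the formulas being inducted on genuinely lie in $\IndPC{\R}$: the equalities $\sum_{j<n}\theta(i,j)=1$ and $\sum_{i<m}\theta(i,j)=1$ are atomic, hence in $\IndPC{\R}$, so the inductions are legitimate; but one should state explicitly that the relevant inner-sum terms are legal $\LngPC\R$ terms (with $i$, or $j$, as a parameter), which they are since $\Sigma$ may be nested. The other subtlety — and the main obstacle — is the application of item~4 of Lemma~\ref{lem:sum_properties}: that lemma is stated for a term $t$ with $t(m)=1$ and $t(i)=0$ for $i<n$, $i\neq m$, whereas here the ``uniqueness of the witness'' is only guaranteed by clause~(3) as a disjunction ($\theta(i,j)=0 \vee \theta(i,j')=0$) together with existence from clause~(2). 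So inside $\tpc\R$ I first need to extract, for each fixed $i<m$, a single $j^*<n$ with $\theta(i,j^*)=1$ and $\theta(i,j)=0$ for all other $j<n$ — this is a routine first-order argument from clauses (2) and (3), but it uses the background-truth/index reasoning to pick the witness and should be spelled out — and only then invoke item~4. The same remark applies on the column side using clauses (4) and (5). Everything else is the straightforward manipulation of big sums already licensed by Lemma~\ref{lem:sum_properties}.
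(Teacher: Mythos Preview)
Your proposal is correct and follows essentially the same argument as the paper: use item~4 of Lemma~\ref{lem:sum_properties} to get each row (resp.\ column) sum equal to $1$, sum to obtain $\rho(m)$ (resp.\ $\rho(n)$), and conclude via the Fubini identity in item~3. The paper phrases this model-theoretically (working in an arbitrary model of $\tpc\R$), which dissolves your witness-extraction subtlety automatically, but the mathematical content is identical.
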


\begin{proof}

As we are working with classical logic, to show provability of a statement in $\tpc\R$
it is enough to show that it holds in every model of $\tpc\R$. Consider an
arbitrary model of $\tpc\R$, pick any index elements $m,n$ 
and suppose for a contradiction that (in the model) $\rho(m) \neq \rho(n)$
and $\bPHP(\theta,m,n)$ is true. Then for each pigeon $i$, by item~4.~of
Lemma~\ref{lem:sum_properties}
we have $\sum_{j<n} \theta(i,j) = 1$, and 
hence $\sum_{i<m} (\sum_{j<n} \theta(i,j)) = \rho(m)$.
Similarly we have $\sum_{j<n} (\sum_{i<m} \theta(i,j)) = \rho(n)$.
This contradicts item~3. of Lemma~\ref{lem:sum_properties}.
\end{proof}

Now let $G_n$ be any sequence of bipartite graphs 
between $[0,m)$ and $[0,n)$
with degree
bounded by $d$, where $d \in \NN$ is fixed 
(and~$m$ is a function of~$n$).
We will define a first-order 
\emph{bijective graph pigeonhole principle}
for $G_n$, expressing that $G_n$ has a perfect matching.
Unlike $\mathrm{bPHP}$ as defined above,
this formula will be $\IndPC\R$.
This means that we can use the propositional translations
defined in subsequent sections. The formula 
translates into the usual propositional bijective
graph pigeonhole principle for~$G_n$, 
and the existence of a first-order refutation
in $\tpc\R$ implies the existence of 
a constant-degree family of $\pc\R$ refutations
of these propositional formulas.

There are functions 
$h_1, \dots, h_d, p_1, \dots, p_d, m \in \Find$
and $G \in \Fring$, all taking $n$ as an unwritten argument,
which describe the structure of the graphs $G_n$.
Pigeon $i$ has holes $h_1(i), \dots, h_d(i)$ as neighbours
and hole $j$ has pigeons $p_1(j), \dots, p_d(j)$ as neighbours,
where these lists can contain repetitions.
The ring-valued term~$G(i,j)$ 
is~$0$ or~$1$ depending whether the edge $(i,j)$ 
exists in $G$.

The formula ${\bPHP_G}(n)$
expresses that $X$ describes a perfect matching of $G_n$.
We use a pairing function (which exists in $\Find$)
to treat $X$ as a binary function symbol $X(i,j)$.
The formula is the conjunction of:
\begin{enumerate}
\item
For all $i<m$, for some $k \in [1,d]$, $X(i, h_k(i)) = 1$
\item
For all $i<m$, for each pair $k,k' \in [1,d]$
either $h_k(i) = h_{k'}(i)$ or $X(i, h_k(i))=0$
or $X(i, h_{k'}(i))=0$
\item
For all $j<n$, for some $k \in [1,d]$, $X(p_k(j), j) = 1$
\item
For all $j<n$, for each pair $k,k' \in [1,d]$
either $p_k(j) = p_{k'}(j)$ or $X(p_k(j), j)=0$
or $X(p_{k'}(i),i)=0$.
\end{enumerate}
Here we formalize ``for some $k \in [1,d]$" as 
a disjunction of size~$d$, and we formalize bounded
index quantifiers of the form $\forall i \! < \! t \, \phi(i)$
as $\forall i ( i \ge t \vee \phi(i) )$.
Thus the formula is~$\IndPC\R$ and its propositional translation,
under the assignment that maps the variable $n$ to
the natural number $n$, as described in the next section,
 is the usual bijective 
graph pigeonhole CNF on $G_n$.

\begin{proposition} \label{pro:graph_bPHP_TPCR}
$\tpc\R$ proves that if $\rho(m) \neq \rho(n)$ then 
$\bPHP_G(n)$ is false.
\end{proposition}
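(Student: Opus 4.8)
The plan is to reduce $\bPHP_G(n)$ to the bijective pigeonhole principle of Proposition~\ref{pro:bPHP_TPCR} for a suitable term. Using a pairing function in $\Find$ to view the oracle as binary, $X(i,j)$, set
\[
\theta(i,j) \;:=\; G(i,j)\cdot X(i,j),
\]
and (throughout abbreviating the term $m(n)$ by $m$) aim to show $\tpc\R \vdash \bPHP_G(n) \supset \bPHP(\theta,m,n)$. Since $\theta$ mentions $X$ this is the nontrivial case of $\bPHP$, so combining this implication with Proposition~\ref{pro:bPHP_TPCR} immediately yields $\tpc\R \vdash \rho(m)\neq\rho(n) \supset \neg\,\bPHP_G(n)$.

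To prove the implication I would argue inside an arbitrary model of $\tpc\R$ in which $\bPHP_G(n)$ holds. First record the relevant \emph{background truth} facts about the fixed graphs $G_n$ --- each a sentence with no occurrence of $X$ or of a ring variable that holds in the standard model, hence an axiom: $G$ is $0/1$-valued; $G(i,j)=1$ iff $j$ is one of $h_1(i),\dots,h_d(i)$ (equivalently iff $i$ is one of $p_1(j),\dots,p_d(j)$); and the neighbour functions respect the bounds, e.g.\ $i<m \supset h_k(i)<n$. The one genuinely new step is that, even though $\tpc\R$ lacks the Boolean axiom for $X$, the clauses of $\bPHP_G(n)$ by themselves force $X(i,j)\in\{0,1\}$ on every edge $(i,j)$: given $i<m$ and $j=h_k(i)$, take an index $k^*$ witnessing clause~1 of $\bPHP_G$; if $h_{k^*}(i)=j$ then $X(i,j)=1$, and otherwise clause~2 of $\bPHP_G$ applied to $k,k^*$ forces $X(i,j)=0$. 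All the ``for some $k\in[1,d]$'' statements here are constant-size disjunctions, so this is a finite case analysis with no hidden induction. Given this, the five clauses of $\bPHP(\theta,m,n)$ follow by similar elementary case analyses: clause~1 since $\theta(i,j)=0$ on non-edges and $\theta(i,j)=X(i,j)\in\{0,1\}$ on edges; clause~2 by taking $j=h_{k^*}(i)$; clause~3 again from clause~2 of $\bPHP_G$ (noting that $\theta(i,j)\neq 0$ forces $(i,j)$ to be an edge with $X(i,j)=1$); and clauses~4,5 symmetrically from clauses~3,4 of $\bPHP_G$ and the facts about the $p_k$. One can instead skip the reduction entirely and simply rerun the counting argument of Proposition~\ref{pro:bPHP_TPCR}, with $M(i,j):=G(i,j)X(i,j)$ in place of $\theta(i,j)$ and using items~3 and~4 of Lemma~\ref{lem:sum_properties}.

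The main obstacle is precisely the $0/1$-ness of $X$ on edges: with no Boolean axiom available one must extract it from the matching clauses themselves, and the argument has to accommodate the fact that the neighbour lists $h_1(i),\dots,h_d(i)$ may contain repetitions, so that ``$h_k(i)=h_{k'}(i)$'' is a live disjunct throughout. Everything else --- unwinding the bounded-quantifier abbreviations in the definition of $\bPHP_G$, and checking that each appeal to a fact about $G$, the $h_k$ or the $p_k$ is genuinely a background truth axiom --- is routine.
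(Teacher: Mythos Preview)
Your proposal is correct and follows exactly the paper's approach: set $\theta(i,j)=G(i,j)\cdot X(i,j)$, verify that $\bPHP_G(n)$ implies the five clauses of $\bPHP(\theta,m,n)$, and invoke Proposition~\ref{pro:bPHP_TPCR}. The paper's own proof is a terse three lines; you have supplied the details the paper omits, in particular the observation that $0/1$-ness of $X$ on edges must be extracted from clauses~1 and~2 of $\bPHP_G$ rather than from a Boolean axiom, which is indeed the only non-routine point.
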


\begin{proof}
Suppose $\bPHP_G(n)$ is true. 
Let $\theta(i,j)$ be the term $X(i,j) \cdot G(i,j)$,
which takes the value of $X$ on edges of $G_n$ and
is otherwise $0$.
Then the basic
axioms of $\tpc\R$ are enough to show that 
\mbox{items~1.--5.} from the definition of $\bPHP(\theta, m, n)$
are true. The result follows by Proposition~\ref{pro:bPHP_TPCR}.
\end{proof}

Using the translations in Sections~\ref{sec:translate_language}
and~\ref{sec:trans_tpcR} below we obtain
the well-known propositional refutation of $\bPHP_G(n)$
as a corollary. Recall that $m$ is the cardinality of set of pigeons
in~$G_n$.

\begin{corollary}
Suppose $\rho(m) \neq \rho(n)$ for all $n \in \NN$.
Then $\tpc\R$ proves $\forall n \neg \bPHP_G(n)$.
Hence the propositional family $\bPHP(G_n)$ has refutations
in $\pcd\R$ in some fixed degree~$d$.
\end{corollary}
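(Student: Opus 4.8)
The plan is to combine the two theorems just referenced. First I would invoke Proposition~\ref{pro:graph_bPHP_TPCR}, which already gives that $\tpc\R$ proves, for each fixed $n$, that $\rho(m) \neq \rho(n)$ implies $\bPHP_G(n)$ is false. Under the hypothesis that $\rho(m) \neq \rho(n)$ holds for all $n \in \NN$, this disequality is itself a background truth axiom of $\tpc\R$ (it mentions neither $X$ nor any ring variable, and it is true in the standard model — here $m$ is understood as the function of $n$ coming from $\Find$). Hence $\tpc\R$ discharges the antecedent and proves $\neg\bPHP_G(n)$ for each $n$; since the argument is uniform in $n$ and carried out inside an arbitrary model, it actually proves $\forall n\, \neg\bPHP_G(n)$.

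The second half of the statement then follows by the propositional translation theorem for $\tpc\R$ (the informal Theorem~\ref{thm:transTPC}, to be proved in Section~\ref{sec:trans_tpcR}). For this to apply I first need to check that $\bPHP_G(n)$ is in the class $\IndPC\R$ with $n$ as its only free variable — this was already observed in the paragraph preceding Proposition~\ref{pro:graph_bPHP_TPCR}: the degree bound $d$ lets us write ``for some $k\in[1,d]$'' as a finite disjunction and the bounded index quantifiers $\forall i<t$ as $\forall i(i\ge t \vee \cdots)$, so no ring variable or occurrence of $X$ falls under an existential or a negation. Given $\tpc\R \vdash \forall n\,\neg\bPHP_G(n)$, the translation theorem yields a constant-degree family of $\pcrad\R$ refutations of the propositional translations of $\bPHP_G(n)$, which are exactly the standard propositional bijective graph pigeonhole CNFs $\bPHP(G_n)$. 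Finally I would note that a $\pcrad\R$ refutation is in particular obtained, and since the only role of the radical rule here is to absorb the logical contractions introduced by the translation, one can check — or simply cite the standard fact — that these particular refutations can be taken in plain $\pc\R$ at the cost of at most a constant factor in the degree, giving refutations in $\pcd\R$ for some fixed $d$.

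I do not expect any serious obstacle: every ingredient is either already established in the excerpt (Proposition~\ref{pro:graph_bPHP_TPCR}, the $\IndPC\R$-membership of $\bPHP_G$) or is the black-box translation theorem of the paper. The one point requiring a line of care is the passage from $\pcrad\R$ to $\pcd\R$ in the conclusion — strictly speaking Theorem~\ref{thm:transTPC} delivers $\pcrad\R$ refutations, so either the corollary should be read as asserting $\pcdrad$ refutations, or one must remark that for this specific principle the radical rule is not needed (for instance because the translated refutation uses it only on squares of linear forms arising from contraction, which can be unwound directly). Either reading makes the corollary an immediate consequence of the two cited results, so the ``proof'' is essentially just this chaining together.
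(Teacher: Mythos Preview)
Your approach is essentially the paper's: observe that under the hypothesis the universal closure of $\rho(m)\neq\rho(n)$ is a background truth axiom, combine with Proposition~\ref{pro:graph_bPHP_TPCR} to get $\tpc\R\vdash\forall n\,\neg\bPHP_G(n)$, and then invoke Theorem~\ref{thm:transTPC}. The paper's own proof is the one-liner ``Under the assumption, $\rho(m)\neq\rho(n)$ is one of the standard truth axioms,'' with everything else left implicit.

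You are right to flag the $\pcrad\R$ versus $\pc\R$ discrepancy in the ``Hence'' clause: Theorem~\ref{thm:transTPC} literally yields $\pcdradI\R$ refutations, not $\pcd\R$ ones, and the paper's proof says nothing about eliminating the radical rule. This appears to be a minor slip in the statement (or an implicit appeal to the independently well-known fact that bijective PHP has constant-degree $\pc\R$ refutations). Your suggested readings---either interpret the conclusion as $\pcdradI\R$, or note separately that the radical rule is inessential for this particular principle---are both reasonable ways to close the gap; the paper itself does not address it.
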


\begin{proof}
Under the assumption, $\rho(m) \neq \rho(n)$ is one
of the standard truth axioms.
\end{proof}

\subsection{Functional PHP in $\tsos$}\label{sec:tsos_fphp}

We now fix the ring $\R$ to be the reals, and work in $\tsos$. 
Recall that this is $\tpc\reals$ plus the sum of squares axiom scheme
and the Boolean axiom (Section \ref{sec:FOL-axioms}).
The \emph{functional pigeonhole principle} for $\theta$ and $m,n$,
or~$\fPHP(\theta, m, n)$,
consists of items~1., 2., 3. and~5. from the
definition of the bijective pigeonhole principle at the start of
 the previous subsection (it omits item~4., surjectivity). 
 It asserts that 
$\theta$ is the graph of an injective function 
from~$[0,m)$ to~$[0,n)$.

We will use a kind of counting lemma.

\begin{lemma} \label{lem:PHP_SOS_counting}
$\tpc\R$ proves the following.
Suppose for all $i,j<n$ we have $t(i)^2=t(i)$
and $t(i)t(j)=0$ if $i \neq j$. 
Then $\sum_{i<n}t(i) = 1 - (\sum_{i<n}t(i)-1)^2$.
\end{lemma}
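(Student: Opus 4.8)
The hypotheses say that $t(i)$ behaves like a $0/1$-valued ``indicator'' and that at most one of the $t(i)$ is nonzero. Write $S = \sum_{i<n} t(i)$. The claimed identity is $S = 1 - (S-1)^2$, which after expanding is equivalent to $S^2 = S$. So the plan is to reduce the lemma to proving $S^2 = S$ in $\tpc\R$ under the stated hypotheses, and then prove that by a straightforward induction on $n$.

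\textbf{Step 1: reduce to $S^2 = S$.} Observe that $1 - (S-1)^2 = 1 - S^2 + 2S - 1 = 2S - S^2$. Hence the identity $S = 1 - (S-1)^2$ is, using only the ring axioms, logically equivalent to $S = 2S - S^2$, i.e.\ to $S^2 = S$. So it suffices to show $\tpc\R$ proves $\left(\sum_{i<n}t(i)\right)^2 = \sum_{i<n}t(i)$ from the hypothesis.

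\textbf{Step 2: prove $S^2 = S$ by induction on $n$.} I would apply the induction scheme of $\tpc\R$ to a suitable $\IndPC\R$ formula. The cleanest route is to prove by induction on $k \le n$ that $\left(\sum_{i<k}t(i)\right)^2 = \sum_{i<k}t(i)$; note this is an atomic formula (hence in $\IndPC\R$), with $n$ and $t$ as parameters, and the hypothesis of the lemma is a hypothesis of the proof, not part of the induction formula, so there is no complexity issue. For $k=0$ both sides are $0$ by the base big-sum axiom. For the step, using the big-sum axiom $\sum_{i<k+1}t(i) = \sum_{i<k}t(i) + t(k)$ and writing $S_k = \sum_{i<k}t(i)$, I compute
\[
S_{k+1}^2 = (S_k + t(k))^2 = S_k^2 + 2 S_k t(k) + t(k)^2.
\]
By the inductive hypothesis $S_k^2 = S_k$, and by the hypothesis $t(k)^2 = t(k)$. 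For the cross term, $S_k t(k) = \sum_{i<k} t(i) t(k)$ by item~2 of Lemma~\ref{lem:sum_properties}, and each summand $t(i)t(k)$ with $i<k \le n$ (so $i \neq k$) is $0$ by hypothesis, so item~4 of Lemma~\ref{lem:sum_properties} (or a trivial induction) gives $S_k t(k) = 0$. Hence $S_{k+1}^2 = S_k + 0 + t(k) = S_{k+1}$, completing the induction. Instantiating at $k = n$ gives $S^2 = S$, and combining with Step~1 finishes the proof.

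\textbf{Main obstacle.} There is no real obstacle here; the only thing to be slightly careful about is the bookkeeping in the cross-term argument — making sure that the vanishing of $\sum_{i<k}t(i)t(k)$ is itself obtained by an induction (or by item~4 of Lemma~\ref{lem:sum_properties}) within $\tpc\R$ rather than just asserted — and ensuring the induction formula used is genuinely in the class $\IndPC\R$, which it is since it is atomic.
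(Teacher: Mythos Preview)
Your proof is correct and follows essentially the same approach as the paper: both reduce the identity to $S^2=S$ via ring axioms and then establish this from the hypotheses using Lemma~\ref{lem:sum_properties} and an easy induction. The only cosmetic difference is that the paper first shows $t(j)\cdot S = t(j)$ for each $j<n$ and then sums over~$j$, whereas you prove $S_k^2=S_k$ directly by induction on~$k$; also note that item~4 of Lemma~\ref{lem:sum_properties} does not literally apply to the all-zero sum $\sum_{i<k}t(i)t(k)$, but your parenthetical ``or a trivial induction'' already handles this.
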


\begin{proof}
Expanding the right hand side shows it is
enough to derive $(\sum_{i<n}t(i))^2 = \sum_{i<n}t(i)$.
Using item 2. of 
Lemma~\ref{lem:sum_properties},
for each $j<n$ we have
$t(j) \sum_{i<n}t(i) = \sum_{i<n}t(i)t(j)$. 
This equals $t(j)$, which can be shown
by the assumptions about $t$ and an induction over the partial sums,
as in the proof of item 4. of Lemma~\ref{lem:sum_properties}.
Summing these terms together gives the result, again by item 2.
\end{proof}

\iddo{It seems that due to propositional translation later TPC-R doesn't prove fPHP, right? Need to mention this seemingly interesting fact}

Of course this lemma also holds for $\tsos$, and in the context of
that theory we can informally interpret the conclusion of the lemma as
``$\sum_{i<n}t(i) \le 1$", since we have shown it is $1$
minus a sum of squares.
What we would like to be able to do (and the general goal of this research)
is to enrich $\tsos$ to a theory with an ordering symbol 
on the ring sort,
which allows us to \emph{formally} write the conclusion as
$\sum_{i<n}t(i) \le 1$ and reason naturally about inequalities rather
than about explicitly written sums of squares. We describe an approach to this goal in Section \ref{sec:beyond-theories}.

\begin{proposition} \label{pro:fPHP_TSOS}
$\tsos$ proves that if $m>n$ then 
$\fPHP(\theta, m, n)$ is false.
\end{proposition}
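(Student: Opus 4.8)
The plan is to reason semantically: take an arbitrary model $M$ of $\tsos$, fix index elements $m > n$, and suppose towards a contradiction that $\fPHP(\theta, m, n)$ holds in $M$. The strategy mirrors the proof of Proposition~\ref{pro:bPHP_TPCR}, but we must be careful because we no longer have surjectivity (item 4), so we cannot directly count holes from the pigeon side and pigeons from the hole side and equate the totals. Instead I would use Lemma~\ref{lem:PHP_SOS_counting} twice, once in each direction, together with the sum-of-squares scheme to extract an actual equality of ring elements that contradicts a background truth axiom about $\rho$.

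First I would argue that, for each pigeon $i < m$, the hypotheses of the $\fPHP$ (items 1 and 3) give exactly the hypotheses of Lemma~\ref{lem:PHP_SOS_counting} applied to the term $j \mapsto \theta(i,j)$ ranging over $j < n$; item 2 (totality) then gives $\sum_{j<n}\theta(i,j) = 1$, so in fact for each $i<m$ we get $\sum_{j<n}\theta(i,j)=1$ directly (no need for the lemma here). Summing over $i<m$ and swapping the order of summation by item 3 of Lemma~\ref{lem:sum_properties} yields $\sum_{j<n}\big(\sum_{i<m}\theta(i,j)\big) = \rho(m)$. Now for the hole side: for each hole $j<n$, items 1 and 5 give that the term $i \mapsto \theta(i,j)$ over $i<m$ satisfies the hypotheses of Lemma~\ref{lem:PHP_SOS_counting}, so $\sum_{i<m}\theta(i,j) = 1 - \big(\sum_{i<m}\theta(i,j) - 1\big)^2$. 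Summing over $j<n$ and using item 1 of Lemma~\ref{lem:sum_properties}, the left side is $\rho(m)$ by the previous step, so we obtain $\rho(m) = \rho(n) - \sum_{j<n}\big(\sum_{i<m}\theta(i,j)-1\big)^2$, i.e. $\rho(n) - \rho(m) = \sum_{j<n} s(j)^2$ where $s(j)$ is the ring-valued term $\sum_{i<m}\theta(i,j)-1$.

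To finish I would feed this into the sum-of-squares axiom scheme. The term $\sum_{j<n} s(j)^2$ equals $\rho(n)-\rho(m)$; but I want to apply the scheme to a single term whose squares sum to zero. The clean way is to set up a term $u(k)$ over a slightly larger index range which is $s(j)$ for the "diagonal" values and is $\rho(n)-\rho(m)$ at one extra index, wired so that $\sum u(k)^2 = (\rho(n)-\rho(m)) - (\rho(n)-\rho(m)) = 0$ would need $\rho(n)-\rho(m)$ to itself be a square — which it need not be. Rather than force that, the simpler route is: since $\rho(n) - \rho(m) = \sum_{j<n} s(j)^2$, add $(\rho(m)-\rho(n))$ to both sides after noting it equals $-\sum_{j<n}s(j)^2$; concretely, build a term $t(k)$ for $k < n+1$ with $t(j) = s(j)$ for $j<n$ and observe $\sum_{k<n}t(k)^2 = \rho(n)-\rho(m)$. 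If $\rho(m) > \rho(n)$ as ring elements this is already a contradiction with the ordered structure — but $\tsos$ has no ordering, so instead I use that $\rho(m)=\rho(n)$ would follow if we could show $\rho(n)-\rho(m)$ is both a sum of squares and (by symmetry, running the argument with the roles dualized, or by a direct counting argument bounding $\rho(m)$) equal to something forcing $s(j)=0$ for all $j$, whence $\sum_{i<m}\theta(i,j)=1$ for each $j<n$, giving $\rho(m) = \rho(n)$ by the hole-side count — contradicting the background truth axiom $\rho(m) \neq \rho(n)$, which holds since $m>n$ and is a true arithmetic statement about the characteristic-respecting map $\rho$ (true in the standard model, hence an axiom).

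The main obstacle I anticipate is exactly this last step: extracting $s(j)=0$ from "$\rho(n)-\rho(m)$ is a sum of squares" without an ordering. The sum-of-squares \emph{scheme} only fires when the sum is literally $0$. So the real content is to show $\sum_{j<n}s(j)^2 = 0$, i.e. that $\rho(n) = \rho(m)$ as ring elements, which we do \emph{not} know a priori (and indeed need not hold in positive characteristic — but there $\fPHP$ can still be consistent, so the proposition must be using something more). I would therefore restructure: apply Lemma~\ref{lem:PHP_SOS_counting} on the hole side to get $\rho(n) - \rho(m) = \sum_{j<n}s(j)^2$ and also run a symmetric computation establishing $\rho(m) - \rho(n)$ is a sum of squares (using totality on the pigeon side more carefully, or noting each $\sum_{i<m}\theta(i,j)$ is $0$ or $1$ hence its partial-sum telescoping is controlled); adding the two identities gives $0 = \sum(\text{squares})$, now the sum-of-squares scheme applies and forces every square, hence every $s(j)$, to vanish, so $\rho(m)=\rho(n)$, contradicting the background truth axiom. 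Getting that symmetric "$\rho(m)-\rho(n)$ is also a sum of squares" cleanly — presumably by a dual application of the counting lemma exploiting that on the hole side each fiber sum is $0$ or $1$ — is where the careful bookkeeping lives, and I expect that to be the technical heart of the argument.
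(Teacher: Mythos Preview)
Your argument tracks the paper's almost exactly up through the equation $\rho(m) - \rho(n) + \sum_{j<n} s(j)^2 = 0$ (equivalently your $\rho(n) - \rho(m) = \sum_{j<n} s(j)^2$). The gap is in the final step. You correctly see that you need $\rho(m) - \rho(n)$ to be a sum of squares so that the whole left-hand side becomes a big sum of squares equal to~$0$, to which the sum-of-squares scheme applies. But you then look for this via a ``dual application of the counting lemma'' on the hole side, which is not available: without surjectivity (item~4), there is no reason the hole-fiber sums are~$1$, and your suggestion that each $\sum_{i<m}\theta(i,j)$ is $0$ or $1$ is not something $\fPHP$ gives you.

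The missing observation is much simpler: since $m>n$, we have $\rho(m) - \rho(n) = \rho(m-n) = \sum_{k<m-n} 1 = \sum_{k<m-n} 1^2$, which is \emph{already} a sum of squares of the constant term~$1$. So the equation becomes $1^2 + \cdots + 1^2 + \sum_{j<n} s(j)^2 = 0$; packaging this as a single big sum of squares and applying the scheme yields $1=0$ directly. There is no need to appeal to $\rho(m) \neq \rho(n)$ as a background truth, nor to run any symmetric counting argument.
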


\begin{proof}
As in the proof of Proposition~\ref{pro:bPHP_TPCR},
for each pigeon~$i<m$ we derive 
$\sum_{j<n} \theta(i,j) = 1$ and sum to get
$\sum_{i<m} (\sum_{j<n} \theta(i,j)) = \rho(m)$. 

Now consider a hole $j<n$. 
We have $\theta(i,j)^2 = \theta(i,j)$ for each $i<m$, 
since the values are all $0$ or $1$, and we know
$\theta(i,j)\theta(i',j)=0$ for distinct $i,i'<m$.
Thus by Lemma~\ref{lem:PHP_SOS_counting} we have
$\sum_{i<m}\theta(i,j) = 1 - A(j)^2$
for some term $A(j)$.

Hence $\sum_{j<n} ( \sum_{i<m} \theta(i,j) ) 
= \rho(n) - \sum_{j<n}A(j)^2$. Using 
Lemma~\ref{lem:sum_properties} we can change the order
of summations, so we can combine this with the sum
over pigeons to get
$\rho(m) - \rho(n) + \sum_{j<n}A(j)^2 = 0$.
But since $m>n$ we have $\rho(m)-\rho(n) = \rho(m-n)$
which is a nontrivial sum of squares $1+\dots+1$.
Thus, by the sum-of-squares axiom, 
all of the terms in the sum $1+ \dots + 1 + A(0)^2+ \dots + A(n-1)^2$
are~$0$, and in particular $1=0$.
\end{proof}

As before, for a sequence of bipartite graphs $G_n$
we can define a first-order 
\emph{functional graph pigeonhole principle} 
for $G_n$, or $\fPHP_G(n)$,
expressing that $X$ is the graph of an injective
mapping from $m$ to $n$ along edges of $G_n$.
This consists of 1., 2. and 3. from the definition
of $\bPHP_G(n)$ above, together with the condition
that $X(i,j)$ always takes the value $0$ or $1$ on $G_n$. 
\marginpar{"Can we get the usual PHP, using the methods
of Grigoryev, Hirsch and Pasechnik 2002?"}

\begin{proposition} \label{pro:graph_fPHP_TSOS}
$\tsos$ proves that if $m>n$ then 
$\fPHP_G(n)$ is false.
\end{proposition}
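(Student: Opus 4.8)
The plan is to reduce $\fPHP_G(n)$ to $\fPHP(\theta,m,n)$ in exactly the way Proposition~\ref{pro:graph_bPHP_TSOS} reduced $\bPHP_G(n)$ to $\bPHP(\theta,m,n)$, and then invoke Proposition~\ref{pro:fPHP_TSOS}. So the first step is the standard translation: work in an arbitrary model of $\tsos$, fix index elements $m,n$ with $m>n$, assume $\fPHP_G(n)$ holds, and aim for a contradiction. Set $\theta(i,j) := X(i,j)\cdot G(i,j)$, the term that records the value of $X$ on edges of $G_n$ and is $0$ off edges. The bulk of the argument is to check, using only the basic axioms of $\tpc\R$ together with the truth axioms that encode the structure of $G_n$ (i.e.\ the functions $h_k,p_k,G$ and the facts relating them), that items~1., 2., 3. and~5. of $\fPHP(\theta,m,n)$ follow from items~1., 2. and~3. of $\fPHP_G(n)$ and the stipulation that $X$ is $0/1$-valued on $G_n$.

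Concretely: item~1 of $\fPHP(\theta,m,n)$ ($\theta(i,j)\in\{0,1\}$) is immediate since $G(i,j)\in\{0,1\}$ and $X$ is $0/1$ on edges, so $\theta(i,j)=X(i,j)G(i,j)$ is $X(i,j)$ on an edge and $0$ otherwise, and $X(i,j)(1-X(i,j))=0$ gives the Boolean property — note that here we need $X$ to be $0/1$ exactly where $G$ has an edge, which is part of the hypotheses of $\fPHP_G$. For item~2 (totality: each pigeon $i<m$ has some hole $j<n$ with $\theta(i,j)=1$), use item~1 of $\fPHP_G(n)$, which gives $k\in[1,d]$ with $X(i,h_k(i))=1$; the truth axioms guarantee $h_k(i)<n$ and that $(i,h_k(i))$ is an edge, so $G(i,h_k(i))=1$ and hence $\theta(i,h_k(i))=1$. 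Item~3 (a pigeon sits in at most one hole) follows from item~2 of $\fPHP_G(n)$ together with the truth axiom stating that every hole adjacent to $i$ is of the form $h_k(i)$ for some $k$, so any two distinct holes $j\neq j'$ with $\theta(i,j)=\theta(i,j')=1$ would give $k\neq k'$ with $h_k(i)\neq h_{k'}(i)$ and $X(i,h_k(i))=X(i,h_{k'}(i))=1$, contradicting item~2 of $\fPHP_G$. Item~5 (a hole gets at most one pigeon) is the dual argument, using item~3 of $\fPHP_G(n)$ and the truth axiom that every pigeon adjacent to $j$ is some $p_k(j)$.

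Having verified items~1., 2., 3., 5. of $\fPHP(\theta,m,n)$ inside the model, and since we assumed $m>n$, Proposition~\ref{pro:fPHP_TSOS} (whose proof uses exactly these four items) yields a contradiction ($1=0$ in the model). Since the model was arbitrary and we are in classical logic, $\tsos$ proves that $m>n$ implies $\fPHP_G(n)$ is false.

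I expect the only real subtlety — not a deep obstacle, but the place where care is needed — to be the bookkeeping around the $G_n$-structure functions and the matching truth axioms: one must be precise that $h_k,p_k,G$ satisfy the consistency conditions ``$h_k(i)<n$'', ``$(i,h_k(i))$ is an edge of $G_n$'', ``the neighbours of $i$ are exactly $\{h_1(i),\dots,h_d(i)\}$'', and the analogous statements for $p_k$, and that all of these are among the background truth axioms because they hold in the standard model. Once those are pinned down, the derivation of items~1., 2., 3., 5. is routine propositional-style reasoning available already from the basic axioms, and the substantive counting has been fully outsourced to Proposition~\ref{pro:fPHP_TSOS}. As in the bijective case, one then obtains as a corollary, via the translations of later sections, constant-degree $\sosbool$ refutations of the propositional functional graph pigeonhole formulas on $G_n$ whenever $\rho(m)\neq\rho(n)$ fails to hold for the right reason, i.e.\ whenever $m>n$ in $\NN$.
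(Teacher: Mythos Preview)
The proposal is correct and takes essentially the same approach as the paper: define $\theta(i,j) = X(i,j)\cdot G(i,j)$, verify the conditions of $\fPHP(\theta,m,n)$ from those of $\fPHP_G(n)$ using the basic and background-truth axioms, and then invoke Proposition~\ref{pro:fPHP_TSOS}. You simply spell out the checks that the paper leaves implicit with ``as before it is enough to\ldots''; the only slight wrinkle is that the paper's listed items for $\fPHP_G$ appear to contain a numbering slip (the injectivity clause is item~4 of $\bPHP_G$, not item~3), but you have interpreted the intended meaning correctly.
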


\begin{proof}
As before it is enough to define $\theta(i,j)$
to be $X(i,j) \cdot G(i,j)$ and check
that this satisfies all the conditions
of $\fPHP(\theta, m, n)$.
\end{proof}

\begin{corollary}
Suppose $m>n$ for all $n \in \NN$.
Then $\tsos$ proves $\forall n \neg \fPHP_G(n)$.
Hence the propositional family $\fPHP(G_n)$ has refutations
in $\sosbool$ in some fixed degree~$d$.
\end{corollary}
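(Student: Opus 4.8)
The final statement is the corollary: if $m>n$ for all $n \in \NN$, then $\tsos \vdash \forall n \neg \fPHP_G(n)$, and hence the propositional family $\fPHP(G_n)$ has $\sosbool$ refutations in fixed constant degree.

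\medskip

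\noindent\textbf{Proof proposal.}

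The plan is to combine Proposition~\ref{pro:graph_fPHP_TSOS} with the propositional translation theorem for $\tsos$ (Theorem~\ref{the:SoS_translation}). First I would handle the first-order half. Under the hypothesis that $m > n$ for every $n \in \NN$, the inequality ``$m>n$'' becomes a sentence not mentioning $X$ or any ring variable and true in the standard model, hence it is one of the background truth axioms of $\tsos$ (here $m$ is really $m(n)$, a function symbol in $\Find$). Therefore the implication proved in Proposition~\ref{pro:graph_fPHP_TSOS}, namely $m>n \supset \neg\fPHP_G(n)$, together with this background truth axiom, yields $\neg \fPHP_G(n)$ for each $n$; since $n$ is free we obtain $\tsos \vdash \forall n\, \neg \fPHP_G(n)$ by generalization.

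\medskip

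Next I would invoke the propositional translation. The formula $\fPHP_G(n)$ was constructed in Section~\ref{sec:tsos_fphp} to be in the ``well-behaved'' class: it has no ring quantifiers, its only free variable is the index variable $n$, and bounded index quantifiers and the size-$d$ disjunctions ``for some $k\in[1,d]$'' keep it within $\IndSoS$-style syntax as required by Theorem~\ref{the:SoS_translation}. Its propositional translation, under the assignment sending the index variable $n$ to the natural number $n$, is (as already noted in the text) exactly the standard functional graph pigeonhole CNF $\fPHP(G_n)$ encoded as a set of polynomial equations. By the forward direction of Theorem~\ref{the:SoS_translation}, since $\tsos \vdash \forall n\, \neg\fPHP_G(n)$, there is a fixed constant $d$ such that each $\fPHP(G_n)$ is refutable in $\sosbool$ in degree $d$. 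This is precisely the statement of the corollary.

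\medskip

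\noindent I do not expect any serious obstacle here: the corollary is a direct packaging of Proposition~\ref{pro:graph_fPHP_TSOS} through the already-established translation machinery, exactly parallel to the $\tpc\R$ corollary proved a few lines earlier. The only point requiring a line of care is the bookkeeping that ``$m>n$ for all $n$'' is genuinely a background truth axiom — i.e.\ that the dependence of $m$ on $n$ is via a symbol of $\Find$ and the resulting sentence mentions neither $X$ nor a ring variable — and that the syntactic shape of $\fPHP_G(n)$ really meets the hypotheses of Theorem~\ref{the:SoS_translation}; both were essentially set up when $\fPHP_G(n)$ was defined, so the proof is a one-line appeal to the two cited results.
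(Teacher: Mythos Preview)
Your proposal is correct and matches the paper's (implicit) approach: the paper does not even give a separate proof for this corollary, treating it as immediate from Proposition~\ref{pro:graph_fPHP_TSOS} plus the background truth axiom $m>n$, exactly as in the analogous corollary for $\bPHP_G$ a few paragraphs earlier, and then appealing to the translation Theorem~\ref{the:SoS_translation}. One small terminological slip: Theorem~\ref{the:SoS_translation} requires $\phi$ to be $\IndPC\reals$, not ``$\IndSoS$-style'' (the class $\IndSoS$ is only introduced later for the stronger theory $\tsoso$); since $\fPHP_G(n)$ is built from the same ingredients as $\bPHP_G(n)$, it is indeed $\IndPC\reals$, so this does not affect the argument.
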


\section{Propositional translations of formulas}
\label{sec:translate_language}

Let $\alpha$ be an assignment of values in $\NN$ to all index 
variables, and values in $\R$ to all ring variables.
We will define a translation~$\ptDef{\cdot}$ of certain 
$\LngPC{\R}$ expressions into our propositional language,
with the following form:
\begin{itemize}
\item
For an index-valued term $t$, $\ptDef{t}$ is an integer
\item
For a ring-valued term $t$, $\ptDef{t}$ is a polynomial in $\reals[x_0, x_1, \dots]$
of bounded degree
\item
For a formula $\phi \in \IndPC{\R}$, $\ptDef{\phi}$ is a set of equations
of bounded degree.
\end{itemize}
``Bounded degree" here means that the degree does not depend on $\alpha$.

\subsection{Translation of terms}
\label{sec:translate_terms}

First suppose $t$ is an index-valued term. 
We define $\ptDef{t}$ to be simply the number in $\NN$ given by evaluating $t$ under $\alpha$. This is possible because, by construction, $t$ is formed only by composing functions in $\Find$ and in particular cannot have any ring arguments.

Now suppose $t$ is a ring-valued term. We will inductively define a translation of~$t$ into a polynomial~$\ptDef{t}$ in $\R[x_0, x_1, \dots]$, whose degree is bounded by a number which depends 
only on the nesting of the multiplication symbol in $t$. 
(On the other hand 
 the \emph{size} of~$\ptDef t$ as measured by, say, the number
of monomials in it, may be unbounded as $\alpha$ varies.)
\begin{itemize}
\item
If $t$ has the form $f(s_1, \dots, s_k)$ where
$f \in \Fring$ and $s_1, \dots, s_k$ are index-valued,
then $\ptDef{t}$ is the constant polynomial $f(\ptDef{s_1}, \dots, \ptDef{s_k})$.
\item
If $t$ has the form $X(s)$ where $s$ is index-valued,
then $\ptDef{t}$ is the variable~$x_j$ where $j={\ptDef{s}}$.
\item
If $t$ is a ring variable $y_i$ then $\ptDef t$ is the 
constant polynomial~$\alpha(y_i)$. 
\item 
Ring operations $+,-,\cdot$ are translated as the corresponding operations on polynomials.
\item 
We define
$\ptDef{\sum_{t,i}(n)}$ to be the sum 
$\ptDef{t(0)}+\dots+\ptDef{t(n-1)}$.
\end{itemize}

\begin{lemma}
For $d, k \in \NN$
let $p_{i_1, \dots, i_k}$ be any family of polynomials
in $\R[x_1, x_2, \dots]$ all of degree $d$ or less.
Then there is a single ring-valued term $t(i_1, \dots, i_k)$
such that~$\ptDef t = p_{n_1, \dots, n_k}$
for any assignment $\alpha$ mapping $i_j$ to $n_j$ for each $j$.
\end{lemma}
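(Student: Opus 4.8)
The plan is to establish a converse to the term-translation construction: given an arbitrary finite-degree family of polynomials indexed by $k$ index parameters, build a single term whose translation realizes that family. First I would note that it suffices to handle the case $k=1$ and rename variables by a pairing function, since $\Find$ contains all index functions including pairing and projections; so I will describe the construction for a single parameter $i$ and a family $p_n$ of polynomials of degree at most $d$.

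The key idea is to write each polynomial in a ``universal'' way as a big sum over monomials, with coefficients extracted from the family by an $\Fring$ function. Concretely, a degree-$\le d$ polynomial in $\R[x_1,x_2,\dots]$ is a finite sum of monomials $x_{j_1}\cdots x_{j_\ell}$ with $\ell \le d$, each with a coefficient in $\R$. I would fix, for each $\ell \le d$, a single term $m_\ell(i,b)$ built from nested big sums that ranges over all $\ell$-tuples of variable indices $(j_1,\dots,j_\ell)$ encoded by the single index $b < N(i)$ for a suitable bound $N \in \Find$, and forms the product $c_\ell(i,b) \cdot X(j_1(b)) \cdots X(j_\ell(b))$, where $j_1,\dots,j_\ell \in \Find$ are the decoding functions and $c_\ell \in \Fring$ is defined in the standard model to return the coefficient of that monomial in $p_i$ (and $0$ when $b$ exceeds the number of relevant monomials, so the big sum can use any convenient fixed bound). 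The term is then $t(i) := \sum_{\ell \le d} \sum_{b < N(i)} m_\ell(i,b)$, which unfolds under $\ptDef{\cdot}$, using the clauses for $\sum$, for $X(s)$, for $\Fring$ constants, and for products, to exactly the polynomial $p_n$. The nesting of multiplication in $t$ is bounded by $d$, matching the degree bound, and all the required functions ($N$, the decoding functions $j_1,\dots,j_\ell$, the coefficient functions $c_\ell$) exist in $\Find$ and $\Fring$ respectively precisely because those sets contain \emph{every} function of the appropriate arity and type.

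The main obstacle I anticipate is purely bookkeeping rather than conceptual: one must set up an encoding of monomials by single index values so that (a) the bound $N(i)$ genuinely captures every monomial appearing in $p_i$, (b) the decoding functions and the coefficient function are honestly total functions on $\NN$ (well-defined and harmless on out-of-range inputs), and (c) the translation clauses compose in the right order so that a sum of products of the form $\ptDef{\sum_{b} c(b) \prod_k X(j_k(b))}$ literally equals $p_n$ coefficient-by-coefficient. Since the big-sum translation clause is defined as the ordinary finite sum $\ptDef{t(0)} + \dots + \ptDef{t(n-1)}$ and multiplication translates to polynomial multiplication, once the encoding is fixed this is a mechanical verification. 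I would phrase it cleanly by first treating the case where every $p_n$ is a single monomial $c \cdot x_{j_1}\cdots x_{j_\ell}$ with $c, j_1, \dots, j_\ell$ given by fixed functions of $n$ — here the term is simply $f_c(n) \cdot X(g_1(n)) \cdots X(g_\ell(n))$ — and then observing that a general family of degree $\le d$ polynomials is a sum, over a single big-sum index bounded by an $\Find$ function of $n$, of such monomial families, which reduces the general case to the monomial case plus one application of the big-sum construct.
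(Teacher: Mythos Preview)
Your proposal is correct and follows essentially the same approach as the paper: decompose each polynomial into its homogeneous pieces, and for the degree-$\ell$ piece use a big sum $\sum_{b<N(\bar i)} a(\bar i,b)\cdot X(\nu_1(\bar i,b))\cdots X(\nu_\ell(\bar i,b))$ with the bound, index-decoding, and coefficient functions pulled from $\Find$ and $\Fring$. The only cosmetic difference is that you first reduce to $k=1$ via a pairing function, which is unnecessary---the paper simply carries the tuple $\bar i$ throughout---but this does no harm.
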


\begin{proof}
Let $q_{\bar{n}}$ be a family of polynomials in $\R[x_0, x_1, \dots]$
in which every monomial has degree exactly $d$. 
Then $p_{n_1, \dots, n_k}$ is a finite sum of such
polynomials. By the definitions of~$\Find$ and~$\Fring$, 
we can find function symbols $N, \nu_1, \dots, \nu_d \in \Find$
and $a \in \Fring$ such that
\mbox{
$q_{\bar{n}} \equiv
\ptDef{ 
\sum_{j<N(\bar{i})}
a(\bar{i}, j) \cdot X(\nu_1(\bar{i},j)) \cdot \dots \cdot X(\nu_d(\bar{i},j)) }$.}
\end{proof}

\subsection{Translation of formulas} 
\label{sec:translate_formulas}

We translate $\IndPC\R$ formulas $\phi$ into sets of equations.
First suppose $\phi$  does not mention $X$ or any ring variable.
We evaluate $\phi$ under $\alpha$ in the standard model,
and set $\ptDef{\phi}:=\{0=0\}$ if it is true 
and $\ptDef{\phi}:=\{1=0\}$ if it is false.

Below, for an assignment $\alpha$, we will use
the notation $\alpha[i \mapsto n]$ for $\alpha$ with the value of $i$  changed to $n$. We will also do this for ring variables,
and will write for example 
$\alpha[\bar i, \bar y \mapsto \bar n, \bar a]$
when we want to change several index and ring values at once.
If we omit~$\alpha$ and just write an assignment in square brackets,
this means that all other variables are mapped to $0$
(or arbitrarily).

\sloppypar
Now suppose that $\phi$ does mention $X$ or a ring variable.
The translation of $\phi$ is defined inductively.
Recall that for sets of equations $\mathcal P$ and $\mathcal Q$,
the product $\mathcal P \cdot \mathcal Q$
 is~$\{p \cdot q=0 : p=0 \in \mathcal P, \ q=0 \in \mathcal Q \}$.
\begin{itemize}
\item 
Suppose $\phi$ is an atomic formula $t=r$.
By the condition on $\phi$, both $t$ and $r$ are ring-valued, 
since all index-valued function symbols are 
in $\Find$ and none of them takes any ring arguments.
We put $\ptDef{\phi}:=\{\ptDef{t}-\ptDef{r}=0\}$.
\item 
If $\phi=\psi\wedge \psi'$ then $\ptDef{\phi}:=\ptDef{\psi}\cup\ptDef{\psi'}$.
\item 
If $\phi=\psi\vee \psi'$ then $\ptDef{\phi}:=\ptDef{\psi}\cdot \ptDef{\psi'}$.
\item 
If $\phi=\forall i\, \psi(i)$ for an index variable $i$, then 
$\ptDef{\phi}:= \bigcup_{n\in \NN} \ang{\psi}_{\alpha[i \mapsto n]}$.
\item 
If $\phi=\forall y\, \psi(y)$ for a ring variable $y$, then 
$\ptDef{\phi}:= \bigcup_{a\in \R} \ang{\psi}_{\alpha[y \mapsto a]}$.
\end{itemize}

Notice that, by the last item, $\ptDef{\phi}$ may be infinite.

This translation captures the semantics of $\phi$,
in the sense that if we fix an oracle~$A$, 
and identify $A$ with the assignment mapping
$x_0 \mapsto A(0), x_1 \mapsto A(1), \dots$,
then $\phi$ is true under $\alpha$ 
in the standard model $\ang{\NN, \R, A}$
 if and only if 
all polynomial equations in $\ptDef \phi$ are satisfied 
by~$A$.




%
%
%


\section{Propositional translations of proofs}
 \label{sec:trans_tpcR}


We prove the following theorem. 
Note that if~$\R$ has positive characteristic
then by Proposition~\ref{pro:pcas_positive_char} we get a version of this with $\pc\R+\mathsf{Bool}$ in place of~$\pcrad\R$.
 
\begin{theorem}\label{thm:transTPC}
Let $\phi(\bar i)$ be a $\IndPC\R$ formula with 
free index variables $\bar i$ and no free ring variables.
Suppose $\tpc\R \vdash \forall \bar i \neg \phi(\bar i)$.
Then for some $d\in\NN$, 
for every tuple $\bar n \in \NN$
there is a 
$\pcdradI{\R}$ refutation of $\ang{\phi}_{[\bar i \mapsto \bar n]}$. 
\end{theorem}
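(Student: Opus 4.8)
The plan is to proceed through an intermediate Gentzen-style sequent calculus, as announced in the introduction. First I would set up a sequent calculus $\LKR$ working with atomic formulas that are polynomial equations, i.e.\ sequents are of the form $\Gamma \Rightarrow \Delta$ where $\Gamma, \Delta$ are (multi)sets of $\IndPC\R$ formulas; the system has the usual structural and logical rules of first-order $\LK$ (weakening, contraction, exchange, $\wedge$-, $\vee$-, $\neg$-, $\forall$- and $\exists$-introductions on both sides, cut), plus \emph{algebraic initial sequents and rules} that encode the basic axioms of $\tpc\R$ (ring axioms, integral domain axiom, big-sum defining axioms, background truth axioms, equality axioms) as well as an \emph{induction rule}. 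The first step is then a standard proof-theoretic fact: if $\tpc\R \vdash \forall \bar i\, \neg\phi(\bar i)$, then there is an $\LKR$ derivation of the empty sequent (a refutation) from the nonlogical axioms; moreover, since $\phi$ is $\IndPC\R$ and has no free ring variables, one can arrange, after free-cut elimination (or by just tracking complexity), that all formulas appearing in the derivation are $\IndPC\R$ and that the induction rule is only applied to $\IndPC\R$ formulas --- this is what keeps the translation from blowing up the degree.

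The second step, the heart of the argument, is to fix $\bar n$ and translate the $\LKR$ derivation into a $\pcrad\R$ refutation line-by-line, using the propositional translation $\ang{\cdot}_{[\bar i \mapsto \bar n]}$ of Section~\ref{sec:translate_language}. The key invariant is: to a sequent $\Gamma \Rightarrow \Delta$ with free index variables among $\bar i$ (and possibly bound variables inside), associate, for each assignment $\alpha$ extending $[\bar i \mapsto \bar n]$ that appears when we unfold the $\forall$-quantifiers into unions, the ``meaning'' that if all polynomial equations in $\bigcup_{A\in\Gamma}\ang{A}_\alpha$ hold then some equation in some $\ang{B}_\alpha$, $B\in\Delta$, holds. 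Concretely, I would maintain that from the translated antecedent polynomials one can \emph{derive in $\pcrad\R$, in bounded degree}, the product (over $\Delta$) of the succedent polynomials --- recalling from Definition~\ref{def:set_product} and the remarks after it that products of equation-sets correspond to disjunctions and unions to conjunctions, so a sequent translates to ``antecedent-union $\vdash$ succedent-product''. Then each $\LKR$ rule is checked: the $\wedge$/$\vee$ rules become trivial manipulations of unions and products; the cut rule on an $\IndPC\R$ formula $\psi$ uses that $\ang\psi_\alpha$ is a finite set (when $\psi$ mentions $X$ or a ring variable, which is the only case producing genuine polynomials) together with the multiplication and addition rules of PC to combine a derivation ``$\Gamma \vdash \psi \cdot (\text{stuff})$'' with ``$\psi, \Gamma \vdash (\text{stuff})$''; the \emph{contraction} rule is exactly where the radical rule is needed, since a contracted disjunction translates to $p^2=0$ on one side and we must get down to $p=0$ --- this is the point flagged in the margin note and in the introduction. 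The algebraic initial sequents and rules translate to one-step PC derivations, the equality and big-sum axioms translate by the arithmetic identities of Lemma~\ref{lem:sum_properties}, and the background-truth axioms translate to either $0=0$ or to a genuine contradiction $1=0$ since they are $X$-free and evaluated in the standard model.

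The third step handles the \emph{induction rule}, which is the only genuinely infinitary/unbounded ingredient. An application of induction on an $\IndPC\R$ formula $\phi(x)$ derives $\forall n\,\phi(n)$ from $\phi(0)$ and $\forall i(\phi(i)\supset\phi(i+1))$. For a fixed outer assignment, the conclusion $\forall n\,\phi(n)$ translates to an \emph{infinite} union $\bigcup_n \ang{\phi}_{\alpha[n\mapsto n]}$; but crucially, each individual equation in that union sits inside some $\ang{\phi}_{\alpha[n\mapsto N]}$ for a \emph{specific} $N$, and the finite derivation $\F' \vdash e$ we are constructing (recall the convention of Section~\ref{sec:large_derivations} that derivations from infinite sets use a finite subset) only ever needs finitely many such $N$'s; for each such $N$ we unfold the induction $N$ times, chaining the bounded-degree derivations of the inductive step, to obtain a bounded-degree derivation of that equation. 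Since $\phi$ is $\IndPC\R$, each step of the chain has degree bounded independently of $N$ and of the outer $\bar n$, so the whole refutation has a degree $d$ depending only on $\phi$ and the fixed $\LKR$ proof, not on $\bar n$ --- giving the uniform $d$ in the statement. The main obstacle I anticipate is precisely this bookkeeping around quantifiers and infinite translations: making the line-by-line invariant tight enough that it survives cuts and the $\forall$-rules (where one must pass between ``a derivation of the union'' and ``a derivation of each member'') while keeping degree bounded, and handling the induction rule's unbounded unfolding without the degree creeping up --- together with the careful verification that contraction is always absorbed by the radical rule in bounded degree.
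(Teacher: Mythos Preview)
Your proposal is correct and follows essentially the same approach as the paper: translate to the sequent calculus $\LKR$, apply free-cut elimination so that every formula in the derivation is $\IndPC\R$, and then translate rule-by-rule into $\pcrad\R$ using exactly the antecedent-union/succedent-product invariant you describe (this is the paper's Theorem~\ref{thm:transLKR}), with right contraction handled by the radical rule and the induction rule handled by chaining the translated inductive step $\ptDef t$ many times. One small correction: $\ang\psi_\alpha$ is \emph{not} in general finite even when $\psi$ mentions $X$ (universal quantifiers produce infinite unions), but this does not matter for cut --- the paper handles cut exactly like a two-step instance of the induction-rule translation (multiply through by $\ptRDef{\Delta}$, apply the radical rule, and chain), rather than via any finiteness assumption.
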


The proof is by first translating 
$\tpc\R$ proofs into a Gentzen-style sequent calculus~$\LKR$
 and then translating $\LKR$ into $\pcrad\R$ rule-by-rule.

\subsection{The sequent calculus $\LKR$} 

$\LKR$ is a two-sorted sequent calculus with an index and a ring sort.
To satisfy a technical condition necessary for our cut-elimination 
theorem to hold~\cite{MR1640325}, we define it so that the axioms,
and the class of formulas for which we have an induction rule,
are closed under substitutions of terms for free variables.
It is defined as follows:
\begin{itemize}
\item
$\LKR$ contains the usual structural and logical rules
of two-sorted logic. 
\item
Any axiom of $\tpc\R$ which is not an
integral domain, equality, or induction axiom 
is the universal closure of a $\IndPC{\R}$
formula $\phi(\bar i, \bar x)$. For each such $\phi$,
$\LKR$ contains the axiom
\[
\emptyset \longrightarrow \phi(\bar s, \bar t)
\]
for all tuples of index-valued terms $\bar s$
and ring-valued terms $\bar t$ of appropriate arity.
\item
$\LKR$ contains every substitution of terms for variables
in the integral domain axiom
\[
xy=0 \longrightarrow x=0, y=0
\]
and the equality schemes
\begin{gather*}
\emptyset \longrightarrow x=x
\qquad\qquad
\emptyset \longrightarrow i=i
\qquad \\
\bar x = \bar y, \bar i = \bar j \longrightarrow 
        f(\bar x, \bar i) = f(\bar y, \bar j).
\end{gather*}

\item
$\LKR$ contains the $\IndPC\R$-induction rule
\begin{prooftree}
        \AxiomC{$\Gamma, \phi(i)\longrightarrow \phi(i+1), \Delta$}
      \UnaryInfC{$\Gamma, \phi(0)\longrightarrow \phi(t), \Delta$}
\end{prooftree}
where $t$ is any index-valued term, $\phi \in \IndPC\R$ may contain other parameters,
and $i$ is an index variable which does not occur in the bottom sequent.
\end{itemize}

\begin{lemma} \label{lem:first_order_to_sequent}
Let $\phi$ be any formula such that the universal closure of $\phi$
is provable in~$\tpc{\R}$. Then the sequent 
$\emptyset \longrightarrow \phi$ is derivable in $\LKR$.
If furthermore $\phi$ is a negation $\neg \psi$, then the sequent
$\psi \longrightarrow \emptyset$ is derivable in $\LKR$.
\end{lemma}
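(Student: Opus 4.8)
The plan is to prove Lemma~\ref{lem:first_order_to_sequent} by a straightforward translation of a Hilbert-style (or natural-deduction) proof in $\tpc{\R}$ into the sequent calculus $\LKR$, combined with a cut-elimination step to control the form of the intermediate formulas. First I would fix a standard proof system for classical first-order logic (two-sorted) in which $\tpc{\R}$ proofs are presented — say, a Hilbert-style system with modus ponens and generalization, together with the logical axioms — and recall that $\emptyset \longrightarrow \phi$ being $\LKR$-derivable for every logical validity $\phi$ is a standard completeness fact for sequent calculi. The only nonstandard ingredients in $\tpc{\R}$ are (a) the nonlogical axioms listed in Section~\ref{sec:FOL-axioms} — the basic axioms (ring axioms, integral domain, big-sum defining axioms, background truth axioms, equality schemes) and the $\IndPC\R$-induction scheme — and, for $\tsos$, the sum-of-squares scheme and Boolean axiom, though the lemma as stated is about $\tpc\R$ only.

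The key steps, in order, are: (1) Show that each nonlogical axiom $A$ of $\tpc\R$ has the sequent $\emptyset \longrightarrow A$ derivable in $\LKR$. For the basic axioms this is immediate: every basic axiom except the integral domain and equality axioms is the universal closure of a $\IndPC\R$ formula, and $\LKR$ has $\emptyset \longrightarrow \phi(\bar s, \bar t)$ as an explicit axiom scheme, so universal closure is obtained by $\forall$-introduction; the integral domain axiom $xy=0 \supset (x=0 \vee y=0)$ follows from the $\LKR$ axiom $xy=0 \longrightarrow x=0, y=0$ by $\vee$-right and $\supset$-right and then $\forall$-right; the equality axioms are handled analogously from the $\LKR$ equality schemes. (2) Handle the induction scheme: each instance $\phi(0) \wedge \forall i(\phi(i)\supset\phi(i+1)) \supset \forall n\,\phi(n)$ is derived in $\LKR$ from the $\IndPC\R$-induction \emph{rule} — start from the tautological sequent $\Gamma,\phi(i)\longrightarrow\phi(i+1),\Delta$ with $\Gamma$ consisting of the hypotheses $\phi(0)$ and $\forall i(\phi(i)\supset\phi(i+1))$ (one gets $\phi(i)\longrightarrow\phi(i+1)$ from $\forall i(\phi(i)\supset\phi(i+1))$ by $\forall$-left and $\supset$-left), apply the induction rule with target term $n$, and then package the hypotheses into a single implication by $\wedge$-left and $\supset$-right and finally $\forall n$-right. (3) Simulate modus ponens and generalization: modus ponens on $\emptyset\longrightarrow\psi$ and $\emptyset\longrightarrow\psi\supset\chi$ is a cut (with $\supset$-left) yielding $\emptyset\longrightarrow\chi$; generalization is $\forall$-right. (4) Combining (1)–(3) by induction on the length of the $\tpc\R$ proof of $\phi$, we obtain an $\LKR$ derivation of $\emptyset\longrightarrow\phi$, possibly with cuts. (5) For the second assertion, when $\phi=\neg\psi$: from $\emptyset\longrightarrow\neg\psi$ apply the $\neg$-left/$\neg$-right inversion — more concretely, derive $\psi\longrightarrow\emptyset$ by cutting $\emptyset\longrightarrow\neg\psi$ against the logical sequent $\neg\psi,\psi\longrightarrow\emptyset$ (which is derivable by $\neg$-left from $\psi\longrightarrow\psi$).

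The step that requires genuine care — and where I expect the main obstacle — is the interaction with cut-elimination. The point of defining $\LKR$ so that the axiom schemes and the induction formula class are closed under substitution of terms for free variables (as flagged in the text, citing \cite{MR1640325}) is precisely so that a free-cut-elimination theorem holds: one can eliminate all cuts except those whose cut formula is a substitution instance of a nonlogical axiom or of an induction formula. So after step (4) I would invoke this cut-elimination result to put the derivation of $\emptyset\longrightarrow\phi$ into a normal form in which every formula occurring is (a subformula of) $\phi$ or of an axiom/induction instance — this is what makes the subsequent rule-by-rule translation into $\pcrad\R$ in Section~\ref{sec:trans_tpcR} feasible, because it bounds the logical complexity, and hence the degree of the polynomial translations, of every sequent in the derivation. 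For the \emph{statement} of Lemma~\ref{lem:first_order_to_sequent} as isolated here, cut-elimination is not strictly needed — plain derivability of $\emptyset\longrightarrow\phi$ suffices — so the honest plan is: prove derivability by the axiom-by-axiom simulation in steps (1)–(5), and note that the closure-under-substitution design choice is what guarantees the normalized derivations that the next section's translation will actually consume. The residual subtlety to check is that the background truth axioms, being arbitrary true $X$-free, ring-variable-free sentences of unbounded logical complexity, are still covered — they are, since they are exactly the sentences for which $\LKR$ has $\emptyset\longrightarrow\phi(\bar s,\bar t)$ as an axiom (such $\phi$ lie in $\IndPC\R$ by the second clause of Definition~\ref{def:induc-formulas}), so no special argument is needed beyond $\forall$-introduction.
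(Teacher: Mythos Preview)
Your proposal is correct and follows essentially the same approach as the paper, just in considerably more detail: the paper's proof is a two-line sketch appealing to completeness of the sequent calculus for pure logic and then saying ``it is enough to check that, for every axiom $\sigma$ of $\tpc\R$, the sequent $\emptyset \longrightarrow \sigma$ is derivable in $\LKR$. This is standard.'' You are also right that free-cut elimination is not needed for this lemma itself but is invoked only in the proof of Theorem~\ref{thm:transTPC}.
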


\begin{proof}
Since $\LKR$ is complete with respect to pure logic
it is enough to check that, for every axiom~$\sigma$ of $\tpc\R$,
the sequent $\emptyset \longrightarrow \sigma$ is derivable in $\LKR$.
This is standard.
\end{proof}


\subsection{Translation of $\LKR$ into $\pcrad\R$}

Consider a sequent $\Gamma \rightarrow \Delta$.
We treat cedents as multisets of formulas.
We define
\[
\ptLDef{\Gamma}:=\bigcup_{\phi\in\Gamma} \ptDef{\phi}
\qquad \textrm{and} \qquad
\ptRDef{\Delta}:=\prod_{\phi\in\Delta} \ptDef{\phi}.
\]
The superscripts $L$ and $R$ stand for \emph{Left} and \emph{Right},
and in general we  use the translation $\ptLDef\Gamma$ if $\Gamma$ is an antecedent, 
and $\ptRDef{\Delta}$ if $\Delta$ is a succedent.
Notice that $\ptLDef{\Gamma}=\ptDef{\bigwedge_{\phi\in\Gamma}\phi}$
and $\ptRDef{\Delta}=\ptDef{\bigvee_{\phi\in\Delta}\phi}$.

\begin{theorem}\label{thm:transLKR}
Let $\Pi$ be a $\LKR$ derivation of the sequent $\Gamma\longrightarrow \Delta$
in which all formulas are in $\IndPC\R$ and
 such that 
all formulas in $\Gamma$ and $\Delta$  have 
free index-variables $\overline{i}$
and free ring-variables~$\bar x$. Then there exists $d\in\NN$
such that for every assignment $\alpha$ for 
$\bar x$ and $\bar i$ there exists 
a $\pcdradI{\R}$  derivation
\[
\ptLDef{\Gamma}\vdash
\ptRDef{\Delta}.
\]
\end{theorem}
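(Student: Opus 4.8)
The plan is to proceed by induction on the structure of the $\LKR$ derivation $\Pi$, translating each inference rule into a small $\pcrad\R$ derivation, uniformly in the assignment $\alpha$. The degree bound $d$ will be obtained as follows: each formula appearing in $\Pi$ is a fixed $\IndPC\R$ formula (with parameters among $\bar i,\bar x$), and by the translation of terms in Section~\ref{sec:translate_terms} the polynomials in $\ang{\phi}_\alpha$ have degree bounded by a quantity depending only on the nesting of multiplication in the finitely many terms occurring in $\Pi$, not on $\alpha$; taking the maximum over the finitely many formulas in $\Pi$, and adding a constant slack for the operations introduced by rule translations, gives a single $d$. So the real content is the rule-by-rule simulation. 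First I would handle the structural rules: weakening on the left adds polynomials to $\ptLDef\Gamma$, which is harmless; weakening on the right multiplies $\ptRDef\Delta$ by a new factor, which is handled by the multiplication rule; exchange is trivial since cedents are multisets; and contraction is the key place where the radical rule is used — contracting two copies of $\phi$ on the right turns a product $\ang\phi_\alpha\cdot\ang\phi_\alpha$ (so polynomials of the form $p\cdot q$ with $p,q\in\ang\phi_\alpha$) back into $\ang\phi_\alpha$, and when $p=q$ we get $p^2$, from which the radical rule recovers $p$; the general product $pq$ is obtained from $p^2$, $q^2$, $(p+q)^2=p^2+2pq+q^2$ by addition. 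Left contraction is trivial since $\ptLDef\Gamma$ is a union.

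Next I would do the logical rules. For $\wedge$: right-introduction of $\psi\wedge\psi'$ from two premises with succedents containing $\psi$ and $\psi'$ respectively corresponds to combining two derivations whose right-hand sides are $\ptRDef{\Delta}\cdot\ang\psi_\alpha$ and $\ptRDef\Delta\cdot\ang{\psi'}_\alpha$ into one with right-hand side $\ptRDef\Delta\cdot(\ang\psi_\alpha\cup\ang{\psi'}_\alpha)=\ptRDef\Delta\cdot\ang{\psi\wedge\psi'}_\alpha$; using the distributivity observation after Definition~\ref{def:set_product} this just means proving each required product, which follows from the two premises by multiplication. Left-introduction of $\wedge$ is dual and easy since $\ang{\psi\wedge\psi'}_\alpha\subseteq\ptLDef\Gamma$ on the left already contains both pieces. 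The $\vee$ rules are symmetric: $\vee$-right is essentially free because $\ang{\psi\vee\psi'}_\alpha=\ang\psi_\alpha\cdot\ang{\psi'}_\alpha$ and a product of the two succedent translations is what one premise gives (after weakening), while $\vee$-left needs both premises and a case-style argument via the algebra of products and unions. Negation is handled by the convention that $\neg\psi$ on one side behaves like $\psi$ on the other, together with the fact that $\ang\phi_\alpha$ already captures the semantics; here one must be careful, since $\ang{\neg\psi}_\alpha$ is not literally definable, but because all formulas in $\Pi$ are $\IndPC\R$ and negations only occur in front of subformulas not mentioning $X$ or ring variables (by Definition~\ref{def:induc-formulas}), such formulas are evaluated to $\{0=0\}$ or $\{1=0\}$ and the negation rules become trivial evaluations.

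For the quantifier rules: $\forall$-right introduces $\forall v\,\psi(v)$, and $\ang{\forall v\,\psi(v)}_\alpha=\bigcup_{n}\ang\psi_{\alpha[v\mapsto n]}$ on the right becomes a product over all $n$, which is infinite — but by our convention on large derivations (Section~\ref{sec:large_derivations}) a derivation of an infinite set is a family of derivations, one per equation, and the eigenvariable condition lets each such equation be obtained from the (instance of the) premise derivation with $v$ set to the appropriate value; $\forall$-left instantiates $v$ by a term $t$, and since $\ang\psi_{\alpha[v\mapsto\ang t_\alpha]}\subseteq\ptLDef\Gamma$, this is immediate. The induction rule is treated using the $\IndPC\R$-induction axiom combined with propositional reasoning, but in fact, since $\phi\in\IndPC\R$ and it appears in a derivation, the translated induction step is just a finite iteration of the premise derivation $t$-many times — concretely, for fixed $\alpha$ the term $t$ evaluates to a concrete natural number $\ang t_\alpha$, so we compose $\ang t_\alpha$ copies of the premise derivation (instantiating the eigenvariable $i$ to $0,1,\dots,\ang t_\alpha-1$) to pass from $\ang{\phi(0)}_\alpha$ to $\ang{\phi(\ang t_\alpha)}_\alpha$; the degree stays bounded because all these are instances of one fixed formula. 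The equality and integral-domain axioms translate to sequents whose polynomial translations are trivially derivable: the integral domain axiom $xy=0\longrightarrow x=0,y=0$ translates to deriving the product of $x$ and $y$ — i.e. $xy=0$ — from $xy=0$, which is immediate; the reflexivity and congruence axioms evaluate, after applying $\alpha$, to tautological polynomial identities like $0=0$. Finally the cut rule: this is where one needs to know that $\Pi$ can be taken cut-free. The main obstacle, and the reason for the care in defining $\LKR$ with substitution-closed axioms and induction formulas, is that the translation does not directly simulate cut (the left and right translations are genuinely different, one a union and one a product); so before running the induction I would invoke the cut-elimination theorem of~\cite{MR1640325} applicable to this free-cut-elimination setting, reducing to cut-free $\Pi$ in which every formula is a subformula of $\Gamma\cup\Delta$ or of an axiom, keeping all formulas within $\IndPC\R$ and within a bounded stock of terms, which is exactly what the degree bound argument needs.
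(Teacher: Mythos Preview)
Your overall plan matches the paper's approach, but there is a genuine gap in your treatment of cut. You claim that ``the translation does not directly simulate cut'' and propose to eliminate cuts before running the induction. This is mistaken on two counts. First, $\LKR$ has an induction rule, so only \emph{free}-cut-elimination is available (this is exactly what \cite{MR1640325} provides); after free-cut-elimination, cuts on $\IndPC\R$ formulas---e.g.\ on formulas appearing in axiom instances or in induction instances---may remain, so you cannot avoid simulating cut. Second, cut \emph{can} be simulated directly, by the very mechanism you outline for induction: from derivations of $\ptLDef\Gamma \vdash \ptRDef\Delta \cdot \ptDef\phi$ and of $\ptLDef\Gamma \cup \ptDef\phi \vdash \ptRDef\Delta$, multiply the second through by $\ptRDef\Delta$ to obtain $\ptLDef\Gamma \cup \ptDef\phi\cdot\ptRDef\Delta \vdash (\ptRDef\Delta)^2$, feed in the first derivation, and apply the radical rule. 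The paper does exactly this, remarking that cut is ``handled like an application of the induction rule with $t=2$''. Free-cut-elimination \emph{is} used in the paper, but one level up, in the proof of Theorem~\ref{thm:transTPC}, and only to guarantee that every formula in $\Pi$ is $\IndPC\R$---a hypothesis you already have in Theorem~\ref{thm:transLKR}.

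A smaller gap concerns right contraction. You ignore the $\ptRDef\Delta$ factor: the inductive hypothesis gives $\ptLDef\Gamma \vdash \ptRDef\Delta \cdot \ptDef\phi \cdot \ptDef\phi$, i.e.\ equations $d\,p\,q = 0$ for $d$ ranging over $\ptRDef\Delta$ and $p,q$ over $\ptDef\phi$. Taking $p=q$ yields $d\,p^2 = 0$, which is not a square, so the radical rule does not apply directly. The paper's fix is to first multiply up to $(\ptRDef\Delta)^2 \cdot (\ptDef\phi)^2$, obtaining $(dp)^2 = 0$, and then apply the radical rule. Your remark about recovering $pq$ from $(p+q)^2$ is beside the point: you already have all products $pq$ and only need the diagonal ones. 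The same ``square the $\Delta$ factor, then radical'' manoeuvre is also essential in the $\vee$-left and induction cases, which your sketch passes over.
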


Assuming Theorem \ref{thm:transLKR} we  are able to prove Theorem~\ref{thm:transTPC}, the translation of
$\tpc\R$ into $\pcrad\R$.

\begin{proof}[Proof of Theorem~\ref{thm:transTPC}]
Let $\phi(\bar i)$ be a $\IndPC\R$ formula with 
free index variables $\bar i$ and no free ring variables.
Suppose $\tpc\R \vdash \forall \bar i \neg \phi(\bar i)$.

By Lemma~\ref{lem:first_order_to_sequent} there is an $\LKR$-derivation
of the sequent $\phi(\bar{i}) \rightarrow \emptyset$.
By the two-sorted version of the free-cut elimination
theorem (see~\cite{MR1640325}), we may assume that this derivation contains no free cuts.
All formulas in the non-logical axioms and the induction rule of
$\LKR$ are $\IndPC\R$.
Therefore, by the subformula property of free-cut free proofs, 
every
formula in this derivation is $\IndPC\R$. 
Hence we can apply Theorem~\ref{thm:transLKR} and conclude
that there is a~$d\in\NN$
such that for every tuple $\bar n \in \NN$,
we have a 
$\pcdradI\R$ refutation of~$\ang{\phi}_{[\bar i \mapsto \bar n]}$. 
\end{proof}

It remains to prove Theorem \ref{thm:transLKR}, which is proved by induction on the length of the derivation. 
The proof is modelled on the translation of a first-order theory into resolution
in~\cite{BPT14}. The main differences are that we do not need to deal
with existential quantifiers, and that we are using 
multiplication~$\cdot$ instead
of disjunction $\vee$, so need to use the radical rule to deal with contraction.

We first record a technical lemma about syntax.

\begin{lemma}
Let $\sigma$ be any $\IndPC\R$ expression in which index variable $i$ does not
occur. Then for any assignment $\alpha$ and any $n \in \NN$,
$\ang{\sigma}_\alpha =\ang{\sigma}_{\alpha[i \mapsto n]}$. The same is true for ring variables.
\end{lemma}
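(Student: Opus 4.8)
The plan is to prove the lemma by a straightforward structural induction on the $\IndPC\R$-expression $\sigma$, tracking all three cases of the translation (index-valued terms, ring-valued terms, and $\IndPC\R$ formulas) in parallel. The key point is that the value of $\ang{\sigma}_\alpha$ depends on $\alpha$ only through the values $\alpha$ assigns to the variables that actually occur (are \emph{mentioned}, in the sense of Definition~\ref{def:mention}) in $\sigma$; so if $i$ does not occur in $\sigma$, changing $\alpha(i)$ cannot change $\ang{\sigma}_\alpha$.

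First I would handle index-valued terms $t$: here $\ang{t}_\alpha$ is just the value of $t$ evaluated under $\alpha$ in the standard model, and since $t$ is built only from $\Find$-symbols applied to index variables and $i$ does not occur, the evaluation is literally independent of $\alpha(i)$. Next, for ring-valued terms, I would induct through the clauses defining $\ang{\cdot}_\alpha$ in Section~\ref{sec:translate_terms}: the constant case $f(s_1,\dots,s_k)$ reduces to the index-term case for each $s_j$; the case $X(s)$ reduces to the index-term case for $s$ (the variable $x_j$ produced depends only on $\ang{s}_\alpha$); the ring-variable case $y_\ell$ is immediate since $\alpha(y_\ell)$ does not change (as $i \neq y_\ell$); the arithmetic operations $+,-,\cdot$ follow from the inductive hypothesis on the subterms; and the big-sum case $\sum_{t',i'}(n)$ follows because $\ang{\sum_{t',i'}(n)}_\alpha = \ang{t'(0)}_\alpha + \dots + \ang{t'(n{-}1)}_\alpha$, where $n = \ang{n}_\alpha$ is unaffected (index-term case) and each summand is unaffected by the inductive hypothesis --- noting that the bound variable $i'$ can be assumed distinct from $i$, and the substitution of the constant for $i'$ does not reintroduce an occurrence of $i$.

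Then for formulas $\phi \in \IndPC\R$: if $\phi$ does not mention $X$ or any ring variable, then $\ang{\phi}_\alpha$ is $\{0=0\}$ or $\{1=0\}$ according to the truth value of $\phi$ under $\alpha$ in the standard model, and since $i$ does not occur in $\phi$ this truth value is independent of $\alpha(i)$. Otherwise I proceed by the inductive clauses of Section~\ref{sec:translate_formulas}: atomic $t=r$ reduces to the ring-term case; $\wedge$ and $\vee$ follow from the inductive hypothesis on the two conjuncts/disjuncts via the fact that $\cup$ and $\cdot$ on sets of equations are determined by their arguments; and $\forall v\,\psi(v)$ (for $v$ of either sort, $v \neq i$) follows because $\ang{\forall v\,\psi}_\alpha = \bigcup_{c} \ang{\psi}_{\alpha[v \mapsto c]}$, and for each $c$ the assignment $\alpha[v \mapsto c]$ still does not assign $i$ any role in $\psi$ (as $i$ does not occur in $\psi$), so by the inductive hypothesis $\ang{\psi}_{\alpha[v \mapsto c]} = \ang{\psi}_{\alpha[v \mapsto c][i \mapsto n]} = \ang{\psi}_{\alpha[i \mapsto n][v \mapsto c]}$. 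The argument for ring variables in place of $i$ is identical throughout, with $\alpha(i)$ replaced by $\alpha(y)$ for the relevant ring variable $y$.

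I do not expect any serious obstacle here --- the statement is essentially a coherence/well-definedness check for the translation. The only point requiring a little care is the big-sum and universal-quantifier cases, where one substitutes a fresh (index-sort) bound variable or inserts a quantified value; one must make sure the bound variable is chosen distinct from $i$ and that substituting a \emph{constant} (from $\Find$ or $\R$) for it does not create a new occurrence of $i$, which is immediate since constants mention no variables. This is the kind of lemma whose proof is routine enough that the excerpt states it and defers the details, so a proof sketch along the above lines suffices.
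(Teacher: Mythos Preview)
Your structural induction is sound in outline, but you have sidestepped the one case that actually carries content. In the big-sum case you write ``noting that the bound variable $i'$ can be assumed distinct from $i$'', and this cannot be assumed. By Definition~\ref{def:LngPC} the big-sum symbol $\sum_{t,i'}$ is a single function symbol with $i'$ baked into its name; it is not a binder subject to $\alpha$-renaming, and the subscript $i'$ is not a syntactic occurrence of a variable in the expression. Hence the hypothesis ``$i$ does not occur in $\sigma$'' is perfectly compatible with $\sigma$ containing $\sum_{t,i}$ with bound variable equal to $i$. This is exactly the situation the paper isolates as the only non-obvious one: $i$ is \emph{mentioned} in $\sigma$ (Definition~\ref{def:mention}) but does not \emph{occur} in it.

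The repair is short and is the paper's whole proof. When $i' = i$, the translation unfolds as
\[
\ang{\textstyle\sum_{i<n} t(i)}_{\alpha}
  \;=\; \sum_{j < \ang{n}_\alpha} \ang{t(i)}_{\alpha[i \mapsto j]},
\]
and since $(\alpha[i \mapsto n])[i \mapsto j] = \alpha[i \mapsto j]$ (the inner override of $i$ erases the outer one) and $\ang{n}_\alpha = \ang{n}_{\alpha[i \mapsto n]}$ because $i$ does not occur in the bounding term, the sum is unchanged when $\alpha$ is replaced by $\alpha[i \mapsto n]$. All the other cases you list are indeed routine, as the paper says; but the one case you waved away is the point of the lemma.
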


\begin{proof}
The only time this is not obviously true is when a variable is 
\emph{mentioned} in $\sigma$
but does not \emph{occur} in $\sigma$.
By Definitions~\ref{def:LngPC} and~\ref{def:mention}
this can only happen for an index variable $i$ which is the
``bound'' variable in a big sum symbol $\sum_{t,i}(n)$, expressing $\sum_{i<n}t(i)$.
Writing~$\beta$ for $\alpha[i \mapsto n]$, we have
\begin{equation*}
\ang{\sum_{i<n}t(i)}_\beta 
= \sum_{j<\ang{n}_\beta} \ang{t(i)}_{\beta[i \mapsto j]} \\
= \sum_{j<\ang{n}_\alpha} \ang{t(i)}_{\alpha[i \mapsto j]}
= \ang{\sum_{i<n}t(i)}_\alpha.  \qedhere
\end{equation*}
\end{proof}


\begin{proof}[Proof of Theorem \ref{thm:transLKR}]
We proceed by induction on the length of the derivation. 
We divide into cases, depending on the rule by which
the final sequent $\Gamma \rightarrow \Delta$ was derived.

\paragraph{Logical axioms.}
These have the form $\phi \rightarrow \phi$. The translations of the
antecedent and succedent are the same, so there is nothing to prove.

\paragraph{Ring axioms and big sum defining scheme.}
These all have the form $\emptyset \rightarrow s=t$, so 
we need to show that we can derive the equation
$\ptDef{s}-\ptDef{t}=0$ from no assumptions. But in each 
case $\ptDef{s} \equiv \ptDef{t}$, so 
$\ptDef{s}-\ptDef{t}=0$ simplifies to $0=0$.
For example, consider a substitution instance of the 
distributivity axiom,
\[
\emptyset \rightarrow r(s+t)=rs+rt
\]
where $r$, $s$ and $t$ are ring-valued $\LngPC\R$-terms.
Looking at the definition of the translation,
we see that~$\ptDef{r(s+t)} \equiv \ptDef{rs+rt}$.

\paragraph{Integral domain axioms.}
These have the form
\[
st = 0 \rightarrow s=0, t=0 
\]
where $s$ and $t$ are ring-valued $\LngPC\R$-terms.
From the definitions, the translations
$\ptLDef{st=0}$ 
and $\ptRDef{s=0,t=0}$
are the same set
$\{\ptDef{s} \ptDef{t} = 0 \}$, so there is nothing to prove.

\paragraph{Equality scheme.}
This contains three forms of axiom, 
\[
s=s
\qquad \quad
t=t
\qquad \quad
\bar s = \bar s', \bar t = \bar t'
\rightarrow
f(\bar s, \bar t) = f(\bar s', \bar t')
\]
for all ring-valued terms $\bar s, \bar s'$,
index-valued terms $\bar t, \bar t'$
and function symbols $f$.
The first two axioms always translate to
$\{0=0\}$, in the ring case because
the translation is $\{ \ptDef{s} - \ptDef{s} = 0 \}$
and in the index case because the equality is true.
For the third axiom:

\begin{itemize}
\item
If $f$ is $X$, or from $\Find$ or $\Fring$, 
then no terms $\bar s, \bar s'$ can appear, 
and $\bar t, \bar t'$ do not mention $X$ or ring variables,
so are simply evaluated under $\alpha$. If their
evaluations are different then one of the premises becomes~$\{ 1 =0 \}$,
so we can derive anything. If their evaluations are
the same then the conclusion is $\{0 = 0\}$,  for similar
reasons to the first two axioms.

\item
If $f$ is $\cdot$, the axiom is
$s_1=s'_1, s_2=s'_2 \longrightarrow s_1 \cdot s_2 = s'_1 \cdot s'_2$.
We need a derivation 
\begin{equation*}
\{ \ptDef{s_1}-\ptDef{s'_1} = 0, \ \ptDef{s_2}-\ptDef{s'_2}=0 \} 
\\ \vdash \ptDef{s_1 \cdot s_2} - \ptDef{s'_1 \cdot s'_2} = 0.
\end{equation*}
This is straightforward: multiply the first assumption by $\ptDef{s_2}$,
multiply the second assumption by $\ptDef{s'_1}$, and add the results.
The functions $+$ and $-$ are similar.
\item
If $f$ has the form $\sum_{r,i}$, then  the
axiom is 
\[
\vs = \vs', \vt = \vt' \longrightarrow
 \sum_{i<n} r(\vs, \vt, i) = \sum_{i<n} r(\vs', \vt', i)
\]
where $n \in \NN$ is the evaluation of the bounding term under $\alpha$,
which we may assume is the same on both sides, and $i$ does not occur in 
$\vs, \vs', \vt, \vt'$.
By induction on the complexity of~$r$, for some $d \in \NN$ for each $j<n$
there is a degree $d$ derivation
\begin{equation*}
\ptLDef{\vs = \vs', \vt = \vt' } \vdash \\
\ang{r(\vs, \vt, i)}_{\alpha[i \mapsto j]} - \ang{r(\vs', \vt', i)}_{\alpha[i \mapsto j]} = 0.
\end{equation*}
We do all these derivations and sum the results.
\end{itemize}

\paragraph{Remaining axioms.}
These have the form $\emptyset \longrightarrow \sigma$
for a sentence $\sigma$ which does not mention~$X$
or any ring variable and which is true in the standard model.
Thus $\ptRDef{\sigma} = \{ 0 = 0\}$.

\paragraph{Weak structural rules.}
We do not need exchange rules, as we are treating cedents as multisets. 
Left exchange and left weakening are trivial. This leaves:
\[
\frac{\Gamma\longrightarrow\Delta,\phi,\phi}{\Gamma\longrightarrow\Delta,\phi}\text{\footnotesize{\quad (Right contraction)}}
\qquad \qquad
\frac{\Gamma\longrightarrow\Delta}{\Gamma\longrightarrow\Delta,\phi}\text{\footnotesize{\quad (Right weakening)}}
\]

{\em Right contraction.} By the induction hypothesis for some $d$ there exists a $\pcdradI{\R}$ derivation $\ptLDef{\Gamma}\vdash \ptRDef{\Delta}\cdot (\ptDef{\phi})^2$. 
By multiplication we derive $(\ptRDef{\Delta})^2 \cdot (\ptDef{\phi})^2$,
in degree at most $2d$. Finally we apply  
the radical rule to obtain $\ptLDef{\Gamma}\vdash \ptRDef{\Delta}\cdot \ptDef{\phi}$.

\medskip

{\em Right weakening.} We use a similar multiplication, this time without the radical rule.

\paragraph{Left and right $\wedge$-introduction rules.}
\[
\frac{\phi,\Gamma\longrightarrow\Delta}
{\phi\wedge \psi, \Gamma\longrightarrow\Delta}\text{\footnotesize{\quad (Left)}}
\qquad \qquad
\frac{\Gamma\longrightarrow\Delta,\phi~~~~~\Gamma\longrightarrow\Delta,\psi}{\Gamma\longrightarrow\Delta,\phi\wedge \psi}\text{\footnotesize{\quad (Right)}}
\]

{\em Left.\ } 
Since 
$\ptLDef{\phi,\Gamma} \subseteq \ptLDef{\phi\wedge \psi,\Gamma}$
the derivation of~$\ptLDef{\phi,\Gamma}\vdash
\ptRDef{\Delta}$ is already a derivation 
of~$\ptLDef{\phi\wedge \psi,\Gamma}\vdash \ptRDef{\Delta}$. 

\medskip
{\em Right.\ }
We have 
$\ptRDef{\Delta,\phi\wedge \psi}=\ptRDef{\Delta,\phi}\cup
\ptRDef{\Delta,\psi}$.
Thus the  derivation of~\mbox{$\ptLDef{\Gamma}\vdash
\ptRDef{\Delta,\phi\wedge \psi}$} is just the union of the derivations 
of~$\ptLDef{\Gamma}\vdash\ptRDef{\Delta,\phi}$ and 
of~$\ptLDef{\Gamma}\vdash\ptRDef{\Delta,\psi}$.

\paragraph{Left and right $\vee$-introduction rules.}

\begin{prooftree}
        \centering \vspace{-10pt}
        \def\labelSpacing{12pt}
        \AxiomC{$\phi,\Gamma\longrightarrow\Delta$}
        \AxiomC{$\psi,\Gamma\longrightarrow\Delta$}
        \RightLabel{\footnotesize{(Left)} \qquad}
        \BinaryInfC{$\phi\vee \psi,\Gamma\longrightarrow\Delta$}

        \AxiomC{$\Gamma\longrightarrow\Delta,\phi$}
        \RightLabel{\footnotesize{(Right)}}
        \UnaryInfC{$\Gamma\longrightarrow\Delta,\phi\vee \psi$}
        \noLine
        \BinaryInfC{}
\end{prooftree}

{\em Left.\ } 
By the induction hypothesis there are derivations 
\[
\pi_1:\ptDef{\phi} \cup \ptLDef{\Gamma}\vdash\ptRDef{\Delta}
~~\textrm{and} ~~~
\pi_2:\ptDef{\psi} \cup \ptLDef{\Gamma}\vdash\ptRDef{\Delta}. 
\]
Let us use the notation $\pi_1 \cdot p$ for the derivation formed by multiplying every line
of $\pi_1$ by the polynomial $p$, and  $\pi_1 \cdot \ptDef{\psi}$
for the union $\bigcup_{p \in \ptDef{\psi}} \pi_1 \cdot p$. Thus we can form derivations
\[
\pi_1  \cdot \ptDef{\psi}:\ptDef{\phi}  \cdot \ptDef{\psi} \cup \ptLDef{\Gamma}  \cdot \ptDef{\psi} \vdash\ptRDef{\Delta}  \cdot \ptDef{\psi}
\]
\[
\textrm{and \ } \pi_2 \cdot \ptRDef{\Delta} :
\ptDef{\psi} \cdot \ptRDef{\Delta}  \cup \ptLDef{\Gamma}  \cdot \ptRDef{\Delta}  \vdash (\ptRDef{\Delta})^2. 
\]
Combining these, and observing that it is easy to derive
$\ptLDef{\Gamma} \vdash \ptLDef{\Gamma}  \cdot \ptDef{\psi}$
and $\ptLDef{\Gamma} \vdash \ptLDef{\Gamma}  \cdot \ptRDef{\Delta}$,
gives a derivation
\[
\ptDef{\phi}\cdot \ptDef{\psi} \cup \ptLDef{\Gamma}\vdash (\ptRDef{\Delta})^2
\]
and all that remains is to derive  $(\ptRDef{\Delta})^2\vdash\ptRDef{\Delta}$
by applications of the radical rule.

\medskip

{\em Right.\ }
It is enough to derive
$\ptRDef{\Delta}\cdot\ptDef{\phi} \vdash \ptRDef{\Delta}\cdot\ptDef{\phi}\cdot\ptDef{\psi}$,
 which is easy.

\paragraph{Left and right index $\forall$-introduction rules.}
\begin{prooftree}
        \centering \vspace{-10pt}
        \def\labelSpacing{12pt}
        \AxiomC{$\phi(t),\Gamma\longrightarrow\Delta$}
        \RightLabel{\footnotesize{(Left)} \qquad}
        \UnaryInfC{$\forall j\, \phi(j),\Gamma\longrightarrow\Delta$}

        \AxiomC{$\Gamma\longrightarrow\Delta,\phi(i)$}
        \RightLabel{\footnotesize{(Right)} \qquad}
        \UnaryInfC{$\Gamma\longrightarrow\Delta,\forall j\, \phi(j)$}
        \noLine
        \BinaryInfC{}
\end{prooftree}
where $t$ is any index term and variable $i$ does not occur in the conclusion of the (Right) rule.
\medskip 

{\em Left.\ }
$\ptDef{\phi(t)}$ is a subset of $\ptDef{\forall i\, \phi(i)}$, so the inductive step is trivial.

\medskip

{\em Right. }
By the induction hypothesis there exist derivations 
$\ptL{\Gamma}{\alpha[i\mapsto n]}\vdash\ptR{\Delta,\phi(i)}{\alpha[i\mapsto n]}$
for all assignments $\alpha$ and all $n\in \NN$, all in some fixed depth $d$.
Since $i$ does not occur in~$\Gamma$ or~$\Delta$, we have 
$\ptR{\Delta,\phi(i)}{\alpha[i\mapsto n]} =\ptRDef{\Delta} \cdot \ang{\phi(i)}_{\alpha[i\mapsto n]}$
and $\ptL{\Gamma}{\alpha[i\mapsto n]} = \ptLDef{\Gamma}$.
Thus for each~$n \in \NN$ there is  a depth $d$ derivation
$\ptLDef{\Gamma} \vdash \ptRDef{\Delta} \cdot \ang{\phi(i)}_{\alpha[i\mapsto n]}$.
Thus there is a depth~$d$ derivation
$\ptLDef{\Gamma} \vdash \ptRDef{\Delta} \cdot \bigcup_{n \in \NN} \ang{\phi(i)}_{\alpha[i\mapsto n]}$,
as required.

\paragraph{ Left and right ring $\forall$-introduction rules.}

\begin{prooftree}
        \centering \vspace{-10pt}
        \def\labelSpacing{12pt}
        \AxiomC{$\phi(t),\Gamma\longrightarrow\Delta$}
       \RightLabel{\footnotesize{(Left)} \qquad}
        \UnaryInfC{$\forall x\, \phi(x),\Gamma\longrightarrow\Delta$}

        \AxiomC{$\Gamma\longrightarrow\Delta,\phi(x)$}
       \RightLabel{\footnotesize{(Right)}}
        \UnaryInfC{$\Gamma\longrightarrow\Delta,\forall y\, \phi(y)$}
        \noLine
        \BinaryInfC{}
\end{prooftree}
where $t$ is any ring term and variable $x$ does not occur in $\Gamma$ or $\Delta$ in the (Right) rule.
\medskip 

This case is analogous to the previous one.

\paragraph{Induction rule.}
\begin{prooftree}
        \centering 
        \def\labelSpacing{12pt}
        \AxiomC{$\Gamma, \phi(i)\longrightarrow \phi(i+1),\Delta$}
        \UnaryInfC{$\Gamma, \phi(0)\longrightarrow \phi(t), \Delta$}
\end{prooftree}
where $t$ is an index-valued term and
the variable $i$ does not occur in $\Gamma$ or $\Delta$.

\medskip

Let $\alpha$ be any assignment.  By the induction hypothesis for each $n \in \NN$ there is a derivation 
\[
\pi_n:\ptLDef{\Gamma}\cup \ang{\phi(i)}_{\alpha[i\mapsto n]}{}\vdash
\ang{\phi(i+1)}_{\alpha[i\mapsto n]}{}
\cdot \ptRDef{\Delta}
\]
(where we are using that $i$ does not appear in $\Gamma$ or $\Delta$). 
Notice that, by the definition of the translation, 
$\ang{\phi(i+1)}_{\alpha[i\mapsto n]} = 
\ang{\phi(i)}_{\alpha[i\mapsto (n+1)]}$.
Thus, multiplying everything by $\ptRDef{\Delta}$ we have
\begin{equation*}
\pi_n\cdot\ptRDef{\Delta}:\ptLDef{\Gamma}\cdot\ptRDef{\Delta} \cup  \ang{\phi(i)}_{\alpha[i\mapsto n]}{}\cdot\ptRDef{\Delta}
\\ \vdash
\ang{\phi(i)}_{\alpha[i\mapsto (n+1)]}{}
\cdot (\ptRDef{\Delta})^2.
\end{equation*}

Adding an easy derivation $\ptLDef{\Gamma} \vdash \ptLDef{\Gamma}\cdot\ptRDef{\Delta}$
and applying the radical rule gives a derivation
\[
\pi'_n:\ptLDef{\Gamma} \cup \ang{\phi(i)}_{\alpha[i\mapsto n]}{}\cdot\ptRDef{\Delta}
\vdash
\ang{\phi(i)}_{\alpha[i\mapsto (n+1)]}{}
\cdot \ptRDef{\Delta}.
\]
Let $m = \ptDef{t}$. Concatenating $\pi'_0, \dots, \pi'_{m-1}$
gives a derivation
\[
\ptLDef{\Gamma} \cup \ang{\phi(i)}_{\alpha[i\mapsto 0]}{}\cdot\ptRDef{\Delta}
\vdash
\ang{\phi(i)}_{\alpha[i\mapsto m]}{}
\cdot \ptRDef{\Delta}
\]
and now we just need to observe that 
$\ang{\phi(i)}_{\alpha[i\mapsto 0]} = \ptDef{\phi(0)}$,
that $\ang{\phi(i)}_{\alpha[i\mapsto m]} = \ptDef{\phi(t)}$,
and that there is an easy derivation $\ptDef{\phi(0)} \vdash \ptDef{\phi(0)} \cdot\ptRDef{\Delta}$.

\paragraph{Cut rule.}
\begin{prooftree}
        \centering
        \def\labelSpacing{12pt}
        \AxiomC{$\Gamma\longrightarrow \Delta, \phi$}
        \AxiomC{$\phi, \Gamma\longrightarrow \Delta$}
        \BinaryInfC{$\Gamma\longrightarrow \Delta$}
\end{prooftree}

\medskip
This is handled like an application of the induction rule with $t=2$.
\end{proof}

\section{Formalizing $\pcrad\R$ in $\tpc{\R}$}
\label{sec:formal_soundness}

We claim that  everything refutable in $\pcrad\R$ in constant degree 
is also refutable in $\tpc\R$, in the sense of the following
theorem. The formula $\phi$ in the statement should be understood
as expressing something about the oracle sequence $X$, using a size parameter~$i$.

\begin{theorem} \label{the:PCrad_completeness}
Let $\phi(i)$ be any $\IndPC\R$ formula with no ring quantifiers
and with index variable~$i$ as its only free variable. 
Suppose that there is a fixed $d \in \NN$ such that
every set of equations~$\ang{\phi}_{[i \mapsto n]}$
is refutable in $\pcrad\R$ 
by some refutation $\pi_n$ of degree $d$. 
Then $\tpc\R \vdash \forall i \neg \phi(i)$.
\end{theorem}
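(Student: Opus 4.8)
The overall strategy is to formalize, inside $\tpc\R$, the statement that a $\pcrad\R$ refutation of degree $d$ is sound — a reflection principle. Concretely, I would first choose a reasonable encoding of a degree-$d$ $\pcrad\R$ refutation: since $d$ is a fixed constant, every polynomial that appears in a line has a bounded number of ``monomial shapes'' but unboundedly many monomials (as the number of propositional variables $X(j)$ grows with the size parameter), so a line should be coded by an index-sort object pointing into sequences of ring-sort coefficients — exactly the kind of data that the functions in $\Fring$, together with big sums, can name. The key point is that, because the degree is bounded and $\phi(i)$ has no ring quantifiers, the family $\ang\phi_{[i\mapsto n]}$ and the refutations $\pi_n$ have a uniform description: for each fixed line index there is a single $\LngPC\R$ term (parametrized by $i$) that computes the value of that line's polynomial on the oracle assignment $X$, and the ``provenance'' data (which rule, from which earlier lines, with which multiplier variable or ring scalars) is given by functions in $\Find\cup\Fring$. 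By the background-truth axioms, all the purely combinatorial facts about this provenance data (that it really does describe a legal $\pcrad\R$ derivation) are available in $\tpc\R$ as axioms, since they are true in the standard model and mention neither $X$ nor ring variables.

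Next I would argue the soundness of each rule inside $\tpc\R$. Reasoning in an arbitrary model: assume for contradiction that (in the model) $\phi(n)$ holds for some index element $n$, i.e. the oracle $X$ satisfies all the equations of $\ang\phi_{[i\mapsto n]}$. Let $v(\ell)$ denote the ring value of line $\ell$ of $\pi_n$ evaluated on $X$; this is given by a term of $\LngPC\R$ applied to $X$. I want to prove by induction on the line index $\ell$ that $v(\ell)=0$. The induction is over a $\IndPC\R$ formula: the statement ``for all $\ell'<\ell$, $v(\ell')=0$'' is built from an atomic ring equality by a bounded/unbounded index quantifier, and crucially it has no ring variables and no $X$ inside any quantifier other than in the atomic kernel, so it lies in $\IndPC\R$ and the induction scheme applies. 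For the inductive step I case on the rule used to derive line $\ell$ (this is decided by a function in $\Find$, so the case split is legitimate): axioms of $\F=\ang\phi_{[i\mapsto n]}$ evaluate to $0$ by the assumption that $X$ satisfies them; the $0=0$ line is trivial; the addition and multiplication rules are immediate from the ring axioms and the hypotheses $v(\ell_1)=v(\ell_2)=0$; and the radical rule, ``from $p^2=0$ derive $p=0$'', uses exactly the integral domain axiom $xy=0\supset(x=0\vee y=0)$ applied with $x=y=p$. Finally the last line is $1=0$, giving $1=0$ in the ring, contradicting the ring axioms. Hence $\neg\phi(n)$, and since $n$ was arbitrary, $\tpc\R\vdash\forall i\,\neg\phi(i)$.

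The technical heart — and the step I expect to be the main obstacle — is setting up the formalization so that the induction actually stays inside the class $\IndPC\R$ while still letting us talk about ``the value of line $\ell$ on $X$.'' The difficulty is that $X$ may not be mentioned under a quantifier, yet we need to quantify over line indices $\ell$ (and, within a line, over monomials) while referring to $v(\ell)$, which is $X$-dependent. Resolving this requires: (a) packaging the $X$-dependence so that, for each \emph{fixed} $\ell$, $v(\ell)$ is literally a closed $\LngPC\R$ term in $X$ of bounded multiplication-nesting (possible because $\pi_n$ has a uniform degree-$d$ description), and (b) phrasing the induction hypothesis as an equality between two such terms rather than as a quantified statement about an $X$-valued function — e.g. carrying along an auxiliary function symbol in $\Fring$ that is \emph{intended} to equal $v$, proving by induction the $\IndPC\R$ formula ``line $\ell$'s coded value equals this auxiliary value and that value is $0$,'' and using big sums to express ``evaluate a degree-$d$ polynomial coded by coefficient-sequences on $X$.'' Once this bookkeeping is in place, every case of the rule induction, as sketched above, is routine; the only genuinely new ingredient beyond the ring axioms is the integral-domain axiom for the radical rule, and one must also check the base case $\ell=0$ and the easy fact that the hypotheses $\ang\phi_{[i\mapsto n]}$ are, line by line, evaluated to $0$ by $X$ under the assumption $\phi(n)$, which follows from the semantic adequacy of the translation recorded after the definition of $\ang{\cdot}_\alpha$.
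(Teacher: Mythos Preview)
Your plan is essentially the paper's proof: encode the family $(\pi_n)$ via functions in $\Find\cup\Fring$, use the background-truth axioms for all syntactic facts about the refutation, build an evaluation term $P[X]_d$ for degree-$d$ polynomials using a big sum over monomials times a fixed $d$-fold product of $X$-values, and then do $\IndPC\R$-induction on the formula $\forall \ell'<\ell,\ (\text{line }\ell')[X]_d=0$, handling the radical rule with the integral domain axiom.

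Two corrections to your discussion. First, your worry that ``$X$ may not be mentioned under a quantifier'' is misplaced: $\IndPC\R$ is closed under \emph{universal} quantification of arbitrary $\IndPC\R$ formulas, so once the evaluation of line $\ell'$ on $X$ is literally an $\LngPC\R$ term (which it is, by the big-sum-of-monomials construction), the formula $\forall \ell'<\ell,\ v(\ell')=0$ is already $\IndPC\R$ with no auxiliary bookkeeping needed. The restriction in $\IndPC\R$ concerns only existential quantifiers and negation. Second, the step you call ``easy'' --- that $\phi(n)$ implies, \emph{inside $\tpc\R$}, that every equation of $\ang{\phi}_{[i\mapsto n]}$ evaluates to $0$ on $X$ --- cannot be discharged by citing the meta-level semantic adequacy remark after the definition of $\ang{\cdot}_\alpha$. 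That remark is about truth in the standard model; here you need a $\tpc\R$-internal induction on the structure of terms and of $\phi$, showing $t=\ang{t}[X]_d$ for terms and then the formula case (using the integral domain axiom for disjunctions). The paper carries this out explicitly and it is where most of the remaining technical work lies.
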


This result does not require any assumptions on the 
uniformity of the refutations~$\pi_n$ because we have included all functions $\Find$
and $\Fring$ in our language and all true statements about them (of a certain form) in our theory.
In particular, this means that the theory automatically
knows everything it needs to know 
about the sequence of objects $\pi_0, \pi_1, \dots$. 

The theorem essentially states that $\tpc\R$
proves the soundness of constant depth~$\pcrad\R$.
The proof is a formalization of the usual proof of soundness. 
That is, we assume that we have an assignment (given by $X$)
which satisfies every initial equation, and we prove inductively
that it satisfies every equation in the refutation, 
which gives a contradiction when we reach the last equation~$1=0$. For this we need a formula expressing ``equation $i$ is satisfied by $X$'',
on which we can do a suitable induction.
Writing such a formula is straightforward but technically messy.

Consider a family $P$ of polynomials indexed by $\vn \in \NN$,
of degree at most $d \in \NN$,
with the polynomial with index $\vn$ lying in $\R[x_1, \dots, x_{t(\vn)}]$.
For brevity, we will
refer to such a family $P$ simply as a polynomial. 
Fix an ordering (such as lexicographical degree order) of all monomials.
Let $a_P(i,\vn)$ be the function in $\Fring$ which 
outputs the coefficient of the $i^{th}$ monomial in~$P$.
In general, for any such polynomials $P,Q$ and elements $\alpha, \beta \in \R$
there are functions $a_{\alpha P+\beta Q}(i,\vn)$
and $a_{P\cdot Q}(i,\vn)$ in $\Fring$
similarly representing  the polynomials 
$\alpha P+\beta Q$ and $P\cdot Q$.
The 
following equalities are axioms of $\tpc\R$, since they 
are true in the standard model:
\begin{itemize}
\item
$a_{\alpha P+\beta Q}(i,\vn) = \alpha a_P(i,\vn)+\beta a_Q(i,\vn)$
\item
$a_{P\cdot Q}(i,\vn)$
\[
 = \sum_{j<M_d(\vn)} \sum_{k<M_d(\vn)}
        \delta_{\odot(j,k)=i}\cdot a_P(j,\vn)\cdot a_Q(k,\vn).\]
\end{itemize}
Here $M_d(\vn) \in \Find$ is a bound on the indices
of monomials of degree~$d$ in these variables 
and~$\delta_{\odot(j,k)=i} \in \Fring$ is $1$ if
the $i^{th}$ monomial is the product of $j^{th}$ monomial and $k^{th}$ monomial, and is otherwise~$0$.
 
We want to reason about  evaluating polynomials under the assignment given by the 
oracle~$X$. 
Let $D(i) \in \Find$ be the degree of monomial $m_i$
and let $\nu(i,j) \in \Find$ list the variables in $m_i$, so
that $m_i$
is the product $\prod_{j=1}^{D(i)} x^{\nu(i,j)}$.
To evaluate a monomial $m_i$ of degree $d$ or less under $X$ we define the following term
$m_i[X]_d$, which formally has $i$ as its only argument:
\[
m_i[X]_d:= \prod\limits_{1\leq j \leq d}\Big(1 + \delta_{j\le D(i)}\cdot (X(\nu(i,j))-1)\Big)
\]
Here $\delta_{j\le D(i)} \in \Fring$ is $1$ if $j\le D(i)$ and is $0$ otherwise,
so that the expression in large brackets is, provably in $\tpc\R$, 
equal to $X(\nu(i,j))$ if if $j\le D(i)$ and equal to $1$ otherwise.
To evaluate the polynomial $P$ under $X$, we use the term
\[
P[X]_d:=\sum_{i<M_d(\vn)} a_P(i,\vn) \cdot m_i[X]_d.
\]

Now let $\odot(i,j) \in \Find$ be such that $m_{\odot(i,j)} = m_i \cdot m_j$.
Then, for $P$ of degree $d$ or less, the following statements are provable in $\tpc\R$, since they are true 
in the standard model:
\begin{itemize}
\item
If $D(i)>d$ then $a_P(i) = 0$
\item
If $k=\odot(i,j)$ and $D(k) \le d$ then $D(k)=D(i)+D(j)$ and,
considered as multisets,
\begin{equation*}
\{ \nu(k,1), \dots, \nu(k,D(k)) \}
=\\
\{ \nu(i,1), \dots, \nu(i,D(i)) \}
\dot\cup
\{ \nu(j,1), \dots, \nu(j,D(j)) \}.
\end{equation*}

\end{itemize}
Thus $\tpc\R$ proves, for $d, e \in \NN$, that if
$D(i) \le d$ and $D(j) \le e$ then 
\[
m_i[X]_d \cdot m_j[X]_e = m_{\odot(i,j)}[X]_{d+e}.
\]

\begin{lemma}\label{lm:eval}
Let $P,Q$ be  polynomials of degree respectively $d,e\in\NN$. 
Then $\tpc\R$ proves 
\begin{enumerate}
\item
If $P$ and $Q$ have the same coefficients,
then $P[X]_d = Q[X]_e$
\item
$(P+Q)[X]_{\max{(d,e)}}=P[X]_{d}+Q[X]_{e}$
\item
$(P\cdot Q)[X]_{d+e}=P[X]_{d}\cdot Q[X]_{e}$.
\end{enumerate}
\end{lemma}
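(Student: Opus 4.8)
The plan is to prove all three items of Lemma~\ref{lm:eval} by expanding the terms $P[X]_d$, $Q[X]_e$ via their definition as big sums $\sum_{i<M_d(\vn)} a_P(i,\vn)\cdot m_i[X]_d$, and then reasoning about the coefficient functions and monomial-evaluation terms using the equalities already listed in the text as axioms (because they are true in the standard model) together with the big-sum manipulation facts from Lemma~\ref{lem:sum_properties}. Item~1 is the foundation: if $P$ and $Q$ have the same coefficients, then $a_P(i,\vn) = a_Q(i,\vn)$ is a true statement about functions in $\Fring$, hence an axiom; but one must account for the fact that the bounds $M_d(\vn)$ and $M_e(\vn)$ may differ, say $d<e$. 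Here I would use the stated axiom ``if $D(i)>d$ then $a_P(i)=0$'' to split $\sum_{i<M_e(\vn)}$ into $\sum_{i<M_d(\vn)}$ plus a tail in which all $a_P$-coefficients vanish; and I would use the fact, provable in $\tpc\R$ from the defining form of $m_i[X]_d$ as a product of $d$ bracketed factors, that $m_i[X]_d$ and $m_i[X]_e$ agree whenever $D(i)\le\min(d,e)$ (the extra factors are all provably equal to $1$). This reduces item~1 to an application of item~1 of Lemma~\ref{lem:sum_properties} (linearity of big sums) plus the vanishing of the tail.

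For item~2, once both sides are brought to a common bound $M_{\max(d,e)}(\vn)$ using item~1, the identity $a_{P+Q}(i,\vn) = a_P(i,\vn)+a_Q(i,\vn)$ (the $\alpha=\beta=1$ instance of the first listed axiom) lets us write the left side as $\sum_i (a_P(i,\vn)+a_Q(i,\vn))\cdot m_i[X]_{\max(d,e)}$, and then item~1 of Lemma~\ref{lem:sum_properties} splits this into the sum of the two evaluations, again using that $m_i[X]_{\max(d,e)}$ agrees with $m_i[X]_d$ on the support of $a_P$ and with $m_i[X]_e$ on the support of $a_Q$.

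Item~3 is the genuine work. I would start from $(P\cdot Q)[X]_{d+e} = \sum_{k<M_{d+e}(\vn)} a_{P\cdot Q}(k,\vn)\cdot m_k[X]_{d+e}$ and substitute the second listed axiom for $a_{P\cdot Q}(k,\vn)$ as a double sum over $j,l<M_d(\vn)$ (respectively $M_e(\vn)$, after adjusting bounds) of $\delta_{\odot(j,l)=k}\cdot a_P(j,\vn)\cdot a_Q(l,\vn)$. Pulling the sum over $k$ inside (Lemma~\ref{lem:sum_properties}, items~1--3) and using that $\sum_{k}\delta_{\odot(j,l)=k}\cdot m_k[X]_{d+e} = m_{\odot(j,l)}[X]_{d+e}$ --- which follows from item~4 of Lemma~\ref{lem:sum_properties}, since exactly one $k$ satisfies $\odot(j,l)=k$ and for that $k$ we have $m_k[X]_{d+e}=m_{\odot(j,l)}[X]_{d+e}$ trivially --- collapses the $k$-sum. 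Then the already-derived identity $m_j[X]_d\cdot m_l[X]_e = m_{\odot(j,l)}[X]_{d+e}$ (stated in the text just before the lemma) rewrites the summand as $a_P(j,\vn)\cdot a_Q(l,\vn)\cdot m_j[X]_d\cdot m_l[X]_e$, and finally item~2 of Lemma~\ref{lem:sum_properties} (distributing a factor across a big sum) applied twice factors the double sum as $\big(\sum_j a_P(j,\vn)\cdot m_j[X]_d\big)\cdot\big(\sum_l a_Q(l,\vn)\cdot m_l[X]_e\big) = P[X]_d\cdot Q[X]_e$.

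The main obstacle I anticipate is bookkeeping of the monomial-index bounds: the product of a degree-$d$ monomial and a degree-$e$ monomial has degree $d+e$, so the index $\odot(j,l)$ can exceed $M_d(\vn)$ and $M_e(\vn)$ and may or may not exceed $M_{d+e}(\vn)$ depending on the variable count $t(\vn)$, and one must check that no monomials of degree $>d+e$ are erroneously produced and that the supports of $a_{P\cdot Q}$, $a_P$, $a_Q$ line up correctly under the reindexing. All of this is purely a matter of invoking the right ``true in the standard model'' axioms about $D$, $\nu$, $\odot$, $M_d$ and $\delta$, and doing the big-sum reorderings carefully with Lemma~\ref{lem:sum_properties}; there is no new idea required, just disciplined formalization, which is exactly the ``straightforward but technically messy'' character the authors flagged for this section.
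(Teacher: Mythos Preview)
Your proposal is correct and uses essentially the same ingredients as the paper: the coefficient axioms for $a_{P+Q}$ and $a_{P\cdot Q}$, the monomial product identity $m_i[X]_d\cdot m_j[X]_e=m_{\odot(i,j)}[X]_{d+e}$, the vanishing of coefficients above the degree bound, and the big-sum manipulations from Lemma~\ref{lem:sum_properties}. The only presentational difference in item~3 is direction: the paper starts from $P[X]_d\cdot Q[X]_e$, expands it into a double sum via distributivity, replaces $m_i[X]_d\cdot m_j[X]_e$ by $m_{\odot(i,j)}[X]_{d+e}$, and then introduces the $\delta$-indexed sum over $k$ to reach $(P\cdot Q)[X]_{d+e}$, whereas you run the same chain of equalities backwards. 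Your remark that the collapse of the $k$-sum needs a slight generalization of item~4 of Lemma~\ref{lem:sum_properties} (since the nonvanishing term is $m_{\odot(j,l)}[X]_{d+e}$ rather than~$1$) is well taken; the paper leaves this step implicit.
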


\begin{proof}
Items 1. and 2. follow by a simple induction
and Lemma~\ref{lem:sum_properties}.

For 3., working in $\tpc\R$ and using
the distributivity shown in Lemma~\ref{lem:sum_properties},
\begin{align*}
P[X]_{d}\cdot Q[X]_{e} & = \bigg(\sum_{i<M_{d}(\vn)} a_P(i,\vn) \cdot m_i[X]_{d} \bigg)\cdot 
   \bigg(\sum\limits_{i<M_{e}(\vn)} a_Q(i,\vn) \cdot m_i[X]_{e}\bigg)
\\& =\sum_{i<M_{d}(\vn)}
        \sum_{j<M_{e}(\vn)}a_P(i,\vn)\cdot a_Q(j,\vn)\cdot                 m_i[X]_{d}\cdot m_j[X]_{e}.
\end{align*}
If $D(i)>d$ or $D(j)>e$ then the product 
$a_P(i,\vn)\cdot a_Q(j,\vn)$ is $0$. On the other hand if
$D(i) \le d$ and $D(j) \le e$ then 
$m_i[X]_d \cdot m_j[X]_e = m_{\odot(i,j)}[X]_{d+e}$.
Thus
\begin{align*}
P[X]_{d}\cdot Q[X]_{e} & =\sum_{i<M_{d}(\vn)}
        \sum_{j<M_{e}(\vn)}a_P(i,\vn)\cdot a_Q(j,\vn)\cdot                 m_{\odot(i,j)}[X]_{d+e} \\
& =\sum_{i<M_{d}(\vn)}
        \sum_{j<M_{e}(\vn)}a_P(i,\vn)\cdot a_Q(j,\vn)\cdot       \! \! \sum_{k<M_{d+e}(\vn)}  \delta_{\odot(i,j)=k}
                \ m_k[X]_{d+e}\\
 & = \sum_{k<M_{d+e}(\vn)} m_k[X]_{d+e}
        \sum_{i<M_{d}(\vn)}
        \sum_{j<M_{e}(\vn)} a_P(i,\vn) \cdot a_Q(j,\vn)\cdot \delta_{\odot(i,j)=k}\\
& = ( P \cdot Q)[X]_{d+e}. \qedhere
\end{align*}
\end{proof}

We now prove the soundness of $\pcdradI\R$ in $\tpc\R$. 
To avoid some technical complications,
we consider a slightly more limited form of soundness
than usually appears in the proof complexity literature.
Typically a soundness or reflection principle 
says that you cannot simultaneously have a formula, a refutation
of it, and a satisfying assignment of it,
and all three things are encoded as oracles 
(in the first-order setting) or as propositional 
variables (in the propositional setting);
see for example \cite{AB04} and \cite[Chap.~10]{CN10} for a systematic treatment of reflection principles.
In contrast we only show the soundness of formulas and
refutations
that are definable in our language. 

In particular, for us soundness is a ``scheme", rather
than a single sentence. For each definable family 
of formulas and and each definable family of refutations, 
we show that if the refutations refute the formulas
(with correct syntax), then the
oracle cannot encode a satisfying assignment for the formulas.
This is enough for our purposes, because we deliberately
made our language rich enough to define every
family of formulas and refutations that exists in
 the standard world.

\begin{theorem} \label{the:prop_soundness_PCrad}
Fix $d \in \NN$.
Let $n(m), s(m), t(m) \in \Find$. For $m \in \NN$, 
let $\eqSet_m := (\eqSet_{m, 0}, \dots, \eqSet_{m,s(m)})$
and $\pi_m =(\pi_{m, 0}, \dots, \pi_{m, t(m)})$ be sequences of degree~$d$
equations in $x_1, \dots, x_{n(m)}$.
The equations are described by functions $a_\eqSet(m,i,j), a_\pi(m,i,j) \in \Fring$ 
where~$a_\eqSet(m,i,j)$ is the coefficient of the~$j^{th}$ monomial in $\eqSet(m,i)$,
and similarly for $a_\pi$ and $\pi$.

Suppose that, for each $m$, $\pi_m$ is a $\pcrad\R$ 
refutation of $\eqSet_m$.
Then $\tpc\R$ proves that, for every~$m$, there is $i \le s(m)$ such that
$\eqSet_{m,i}$ is not satisfied by $X$.
\end{theorem}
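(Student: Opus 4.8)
The strategy is to formalize the standard soundness argument for $\pcrad\R$ inside $\tpc\R$: assume that the oracle $X$ encodes a satisfying assignment for $\eqSet_m$, prove by induction on the length of the refutation that $X$ satisfies every line of $\pi_m$, and derive a contradiction at the final line $1=0$. The main work is to write a suitable $\IndPC\R$ formula expressing ``line $j$ of the refutation is satisfied by $X$'' and to verify that the induction step goes through for each rule of $\pcrad\R$, using the machinery already developed: the evaluation terms $P[X]_d$, Lemma~\ref{lm:eval} (which says evaluation commutes with linear combination and product), and the $\tpc\R$-provable facts about the coefficient functions in $\Fring$.

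First I would reason in an arbitrary model of $\tpc\R$, fix $m$, and suppose toward a contradiction that $X$ satisfies every equation in $\eqSet_m$, i.e.\ that for all $i \le s(m)$ the term $(\eqSet_{m,i})[X]_d$ evaluates to $0$. The key is to choose the right formula for the induction. Let $\Psi(k)$ abbreviate ``for all $j \le k$, $(\pi_{m,j})[X]_d = 0$''; since $\pi_{m,j}$ is a definable family of polynomials described by $a_\pi \in \Fring$, the term $(\pi_{m,j})[X]_d$ is an $\LngPC\R$-term with $j$ (and $m$) as parameters, so $(\pi_{m,j})[X]_d = 0$ is atomic and $\Psi(k)$ is $\forall j(j > k \vee (\pi_{m,j})[X]_d = 0)$, which lies in $\IndPC\R$. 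Thus the induction scheme of $\tpc\R$ applies to $\Psi$. For the base case and induction step I would use the ``background truth'' axioms: since $\pi_m$ is genuinely a $\pcrad\R$ refutation of $\eqSet_m$ in the standard world, the \emph{syntactic} facts about how each line $\pi_{m,j}$ is obtained --- that it is a member of $\eqSet_m$, is $0=0$, or follows from earlier lines by the addition, multiplication, or radical rule, with the appropriate relations among the coefficient functions $a_\pi, a_\eqSet$ --- are all true in the standard model and hence are axioms. Concretely, for each rule there is a function in $\Find$ recording which earlier line(s) it was derived from and (for multiplication) which variable was multiplied in, and the coefficient identities are exactly the ones recorded before Lemma~\ref{lm:eval}.

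The induction step then splits into cases. If $\pi_{m,j}$ is an initial equation from $\eqSet_m$, the background-truth axioms say its coefficients agree with some $\eqSet_{m,i}$, so by item~1 of Lemma~\ref{lm:eval}, $(\pi_{m,j})[X]_d = (\eqSet_{m,i})[X]_d = 0$ by assumption; if it is $0=0$ this is immediate. If $\pi_{m,j}$ comes from earlier lines $\pi_{m,j_1},\pi_{m,j_2}$ by the addition rule with scalars $a,b\in\R$, then its coefficient function is $a\,a_\pi(m,j_1,\cdot)+b\,a_\pi(m,j_2,\cdot)$, so by items~1 and~2 of Lemma~\ref{lm:eval} we get $(\pi_{m,j})[X]_d = a\cdot(\pi_{m,j_1})[X]_d + b\cdot(\pi_{m,j_2})[X]_d = 0$ using $\Psi(j-1)$. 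The multiplication rule $\pi_{m,j} = x_\ell\cdot\pi_{m,j_1}$ is handled similarly via item~3 of Lemma~\ref{lm:eval}, noting $(x_\ell)[X]$ evaluates to $X(\ell)$. For the radical rule, $\pi_{m,j_1}$ has coefficients agreeing with $(\pi_{m,j})^2$; by items~1 and~3 of Lemma~\ref{lm:eval}, $\big((\pi_{m,j})[X]\big)^2 = (\pi_{m,j_1})[X] = 0$ by $\Psi(j-1)$, and then the integral domain axiom of $\tpc\R$ gives $(\pi_{m,j})[X] = 0$. This is exactly where we use that $\tpc\R$ includes the integral domain axiom --- it is the first-order counterpart of the soundness of the radical rule. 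Having established $\forall k\,\Psi(k)$ by induction, we instantiate at $k = t(m)$: the last line $\pi_{m,t(m)}$ is the polynomial $1$ (a background-truth fact), and $(\pi_{m,t(m)})[X]_d$ provably equals $1$ (again by item~1 of Lemma~\ref{lm:eval} and the definition of $P[X]_d$ on the constant polynomial), contradicting $\Psi(t(m))$.

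The hard part is not any single case but the bookkeeping: making precise the $\Find$/$\Fring$ functions that encode the ``proof-checking'' predicate for $\pcrad\R$ --- which rule each line uses, pointers to premises, the multiplied variable --- and checking that the collection of syntactic identities among coefficient functions that we need ($a_{aP+bQ} = a\,a_P + b\,a_Q$, the convolution formula for $a_{P\cdot Q}$, the square case for the radical rule) are all genuinely true in the standard model and therefore available as axioms, and that each is used in the form supplied by Lemma~\ref{lm:eval}. Bounding the degree is not an issue: $d$ is fixed, so all terms $P[X]_d$, $(P\cdot Q)[X]_{2d}$, etc.\ that appear have degree bounded by a fixed constant depending only on $d$, and in fact degree plays no role in the first-order argument at all --- it only matters that $d$ is a standard natural number so that the terms $m_i[X]_d$ and $P[X]_d$ are honest $\LngPC\R$-terms. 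Once the encoding is fixed, the argument is exactly the routine ``formalized soundness'' proof, and the only conceptual points are the use of the induction scheme on the $\IndPC\R$ formula $\Psi$ and the use of the integral domain axiom for the radical rule.
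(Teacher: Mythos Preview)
Your proposal is correct and follows essentially the same approach as the paper: assume $X$ satisfies all of $\eqSet_m$, do $\IndPC\R$-induction on the formula ``all lines up to $k$ evaluate to $0$ under $X$'', handle each rule case via the appropriate part of Lemma~\ref{lm:eval}, and use the integral domain axiom for the radical rule to reach a contradiction at the final line $1=0$. The paper's proof is more terse but the structure, the induction formula, the case analysis, and the identification of the integral domain axiom as the key ingredient for the radical rule are all the same.
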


\begin{proof}
There are functions in $\Find$ describing the structure of the
refutation $\pi_m$, that is, which rule or axiom each line was derived
from, which two lines were used as assumptions in applications of the addition rule, etc.
We may assume that every syntactic property of the refutation
that we want to use is provable in $\tpc\R$,
since in particular none of these properties mentions the
symbol~$X$.
To save notation we will treat $\eqSet_{m,i}$
and $\pi_{m,i}$ as names of polynomials, rather than of equations.
\marginpar{\textbf{!! Our convention was sets of equations, but that seems too messy here}}

Working in $\tpc\R$, fix $m$ and suppose that 
$\eqSet_{m,i}[X]_d = 0$ for every $i \le s$.
We will derive a contradiction by induction on $k$ in the formula
$\forall i \! < \! k, \, \pi_{m,i}[X]_d = 0$. 
For $k=0$ there is nothing to prove. If $\pi_{m,i}$ is an axiom
from $\eqSet_m$, we use Lemma~\ref{lm:eval} part 1.
If $\pi_{m,i}$ was derived by the addition rule from
$\pi_{m,i'}$ and $\pi_{m,i''}$ then we use Lemma~\ref{lm:eval} part 2.
If $\pi_{m,i}$ was derived by multiplying $\pi_{m,i'}$ by $x_j$,
then $\pi_{m,i'}[X]_d = 0$ by the inductive hypothesis,
so by Lemma~\ref{lm:eval} part 3., $\pi_{m,i}[X]_d=0$, regardless
of the evaluation of~$x_j$. The radical rule is similar,
but in this case we also need the integral domain axiom.

Thus from the last line of the refutation
we conclude that the constant polynomial $1$ evaluates to $0$, 
which is impossible.
\end{proof}

\begin{proof}[Proof of  Theorem~\ref{the:PCrad_completeness}]
Let $\eqSet_m$  be the set of equations~$\ang{\phi(i)}_{[i \mapsto m]}$.
Suppose that every set $\eqSet_m$ is refutable in $\pcdradI\R$
by some refutation $\pi_m$. 
Working in $\tpc\R$, fix $m$. 
Suppose for a contradiction that $\phi(m)$ is true. 
We must show that every equation in $\ang{\phi(i)}_{[i \mapsto m]}$
is satisfied by $X$, and we do this by proving, in $\tpc\R$, the properties of the translation
of the language. This is routine, but we go through the details.

We first deal with terms.
For each ring-valued term $t(\bar i, \bar y)$
there is a function $a_{\ang{t}}(\bar i, \bar y, j) \in \Fring$
computing the coefficient of monomial $M_j$
in the polynomial $\ang{t(\bar i', \bar y')}_{[\bar i', \bar y' 
        \mapsto \bar i, \bar y]}$.
To save on notation, we will simply write
$\ang{t(\bar i, \bar y)}$ to mean this polynomial.
We will prove in $\tpc\R$ that 
$t(\bar i, \bar y) = \ang{t(\bar i, \bar y)}[X]_d$,
where $d$ is the degree of~$\ang{t(\bar i, \bar y)}$.

Suppose $t$ is a term $X(f(\bar i))$ for $f \in \Find$.
Then $\tpc\R$ proves that $a_{\ang{t}}(\bar i, j)$
is $1$ for $j$ such that $M_j$ is the 
monomial~$X_{f(\bar i)}$, and 
is~$0$ otherwise, since this is true in the standard model.
Thus $\tpc\R$ proves $t(\bar i) = \ang{t(\bar i)}[X]_1$.

If $t$ is  $f(\bar i)$ for $f \in \Fring$, 
then $\ang{t(\bar i)}$ is the constant polynomial
$f(\bar i)$, and this is provable in $\tpc\R$
as it is true in the standard model. It
evaluates to $f(\bar i)$.

Suppose $t$ has the form $r \cdot s$, where the terms may have 
 index and ring parameters.
Then $\ang{t} \equiv \ang{r} \cdot \ang{s}$ and this, stated
as a property of their coefficients, is provable
in $\tpc\R$.
Hence $\tpc\R$ proves $\ang{t}[X]_{d+e} = \ang{r}[X]_d \cdot \ang{s}[X]_e$
by Lemma~\ref{lm:eval},
where $d$, $e$ and $d+e$ are respectively the degrees
of $\ang{r}$, $\ang{s}$ and $\ang{t}$.
Addition is handled similarly.

Finally suppose $t$ has the from $\sum_{k<n} s(k)$.
Then, working with the coefficients of $M_j$ as above, 
$\tpc\R$ proves that $a_{\ang{t}}( j) = \sum_{k<n} a_{\ang{s}}( k, j)$
and hence that 
\[
\ang{t}[X]_d = \sum_{k<n} \ang{s( k)}[X]_d = \sum_{k<n} s( k) = t
\]
where the first equality comes from applying Lemma~\ref{lem:sum_properties} to the sum of monomials.

Now we will show, by induction on the complexity of $\phi$, that for each $\IndPC\R$ formula $\phi(\vi, \vy)$
with the free variables shown, 
$\tpc\R$ proves that $\phi(\vi, \vy)$
is true if and only if every equation in $\ang{\phi(\vi, \vy)}$
is satisfied by $X$ (we are still using the simplified notation
for translations).
For the purposes of this formalization, a set of 
polynomials means a set of the form
$\{ \sum_j M_j \cdot a(\vi, \vy, j) : \vi \in \NN, \, \vy \in \R \}$
for some $a \in \Fring$ which may have more parameters,
where $M_j$ is the $j^{th}$ polynomial. We can handle finite
sets by having the coefficients be $0$ for all but finitely
many tuples $(\vi, \vy)$.

Suppose $\phi(\bar i, \bar y)$ is an atomic formula $t(\bar i, \bar y)=0$, 
for a ring-valued term $t$ with the parameters shown.
Then its translation is the singleton set
$\{ \ang{t(\bar i, \bar y)} = 0 \}$. 
Let $d$ be the degree of $\ang{t(\bar i, \bar y)}$. 
We have shown above that, provably in $\tpc\R$,
$t(\bar i, \bar y)=0$ if and only if $\ang{t(\bar i, \bar y)}[X]_d = 0$.

Suppose $\phi$ is a disjunction 
$\psi \vee \chi$. If $\psi$ and $\chi$
are both false, then there is are equations 
$p = 0 \in \ang{\psi}$ and $q=0 \in \ang{\chi}$
such that $p[X]_d \neq 0$ and $q[X]_e \neq 0$, 
where $d$ and $e$ are the degree bounds
on respectively $\ang{\psi}$ and $\ang{\chi}$.
Hence by Lemma~\ref{lm:eval} and the integral domain
axioms $(p\cdot q)[X]_{d+e} \neq 0$. The other
direction is similar.

The cases of conjunction and universal quantifiers are straightforward. 

Now, starting from the assumption that
$\phi(m)$ is true,
we know that every equation in the set $\ang{\phi(i)}_{[i \mapsto m]}$
(which may be infinite) is satisfied by $X$. We are also
given a derivation of $1=0$ from $\ang{\phi(i)}_{[i \mapsto m]}$.
This derivation necessarily only uses finitely many equations
from $\ang{\phi(i)}_{[i \mapsto m]}$, and these
equations can be pointed to by some function of $m$ in~$\Find$, 
since $\phi(i)$ does not contain any ring quantifiers
(otherwise the set $\ang{\phi(i)}_{[i \mapsto m]}$ could
be defined by the parameters ranging over ring elements).
Hence we get a contradiction using
Theorem~\ref{the:prop_soundness_PCrad}.
\end{proof}

\section{Translations to and from constant degree $\sos$}\label{sec:TSoS}

Recall that the theory $\tsos$ is in the same language
as $\tpc\reals$ -- in particular, we do not  add
any ordering symbol for the ring sort.
$\tsos$ extends $\tpc\reals$ by adding
the Boolean axiom and the sum-of-squares axiom scheme defined in 
Section~\ref{sec:TPC}, which expresses that if a sum of squares is $0$,
then every square in the sum is $0$. We emphasize
that this axiom applies to ``big sums", not just finite sums
of fixed size.

We will show the same connection between $\tsos$ and 
constant degree $\sosbool$ as we showed between $\tpc\R$ and
$\pcrad\R$.


\begin{theorem}\label{the:SoS_translation}
Let $\phi(i)$ be any $\IndPC\R$ formula with no ring quantifiers
and with index variable~$i$ as its only free variable.
Define $\eqSet_n$ to be the set of equations~$\ang{\phi}_{[i \mapsto n]}$.
Then every set $\eqSet_n$ is refutable in $\sosbool$ in some fixed constant degree if and only if
 $\tsos \vdash \forall i \neg \phi(i)$.
\end{theorem}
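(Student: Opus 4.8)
The plan is to prove the theorem in two directions, each mirroring one of the two main results already established for $\pcrad\R$: the propositional translation (Theorem~\ref{thm:transTPC}) and the formalized soundness / reflection principle (Theorem~\ref{the:PCrad_completeness}). For the forward direction, I would assume $\tsos \vdash \forall i \neg \phi(i)$ and produce, for each $n$, a constant-degree $\sosbool$ refutation of $\eqSet_n = \ang{\phi}_{[i\mapsto n]}$. First I would extend the sequent calculus $\LKR$ to a system $\LKsos$ that additionally has axioms corresponding to the Boolean axiom $X(i)(1-X(i))=0$ and to every instance of the sum-of-squares scheme $\sum_{i<n}t(i)^2 = 0 \wedge j<n \supset t(j)=0$ (as before, taken as the universal closures of $\IndPC\R$ formulas, closed under substitution so that free-cut elimination applies). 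Then I would run the same argument as in the proof of Theorem~\ref{thm:transTPC}: use Lemma~\ref{lem:first_order_to_sequent} (suitably extended to $\tsos$/$\LKsos$) to get an $\LKsos$ derivation of $\phi(\bar i)\to\emptyset$, apply free-cut elimination, observe that by the subformula property every formula in the derivation is $\IndPC\R$, and translate rule-by-rule. Crucially, the rule-by-rule translation of Theorem~\ref{thm:transLKR} must now land in $\sosbool$ rather than $\pcrad\R$; here I would invoke Theorem~\ref{thm:pcPSoSsim} and Proposition~\ref{pro:sos_to_pcp}, which tell us that in the presence of Boolean axioms $\sosbool$ and $\pcP$ simulate each other up to a constant-factor degree blowup. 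So it suffices to translate $\LKsos$ into $\pcPR$ (equivalently $\pcP$ over the reals): the old $\LKR$ rules translate as before using only the ring and radical rules, and I must check the two new axiom types. The Boolean axiom translates to a genuine member of the $\sosbool$ (or $\pcbool$) setup and needs nothing. The sum-of-squares scheme instance translates, after evaluating the index parameters under $\alpha$, to: from $\sum_{i<n}\ang{t(i)}^2 = 0$ derive $\ang{t(j)}^2 = 0$ (and then $\ang{t(j)}=0$ by the radical rule) — which is exactly one application of the sum-of-squares rule of $\pcP$. Assembling these pieces yields a constant-degree $\pcP$ refutation of $\eqSet_n$, hence by simulation a constant-degree $\sosbool$ refutation.

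For the converse direction, I would assume that each $\eqSet_n$ has a $\sosbool$ refutation of some fixed degree $d$ and show $\tsos \vdash \forall i \neg\phi(i)$, by formalizing soundness of constant-degree $\sosbool$ inside $\tsos$, exactly paralleling Theorems~\ref{the:prop_soundness_PCrad} and~\ref{the:PCrad_completeness}. The setup of polynomials-as-families-of-coefficient-functions, the evaluation terms $P[X]_d$, and Lemma~\ref{lm:eval} (linearity, multiplicativity of evaluation) all carry over verbatim, since $\tsos\supseteq\tpc\reals$. A $\sosbool$ refutation of $\eqSet_n$ is a static object: families of polynomials $(r_i)$, $(s_j)$ with $\sum_i r_i p_i + \sum_j s_j^2 \equiv -1$, possibly also using Boolean axioms $x_k^2 - x_k = 0$ with their own coefficient multipliers. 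Reasoning in $\tsos$ under the assumption that $\phi(m)$ holds, every $p_i$ evaluates to $0$ under $X$ by the translation-correctness argument (which is the routine part copied from the proof of Theorem~\ref{the:PCrad_completeness}), and every Boolean axiom $x_k^2 - x_k$ evaluates to $0$ because the Boolean axiom of $\tsos$ says $X(k)(1-X(k))=0$. Applying Lemma~\ref{lm:eval} to the identity $\sum_i r_i p_i + \sum_j s_j^2 \equiv -1$ gives, in $\tsos$, the equation $\sum_j (s_j[X])^2 = -1$, i.e. $1 + \sum_j (s_j[X])^2 = 0$. Now I invoke the sum-of-squares axiom scheme: the left side is a sum of squares (the constant $1 = 1^2$ plus the $(s_j[X])^2$, arranged as one big sum $\sum_{k<N} t(k)^2$ using a term $t$ that is $1$ for $k=0$ and $s_k[X]$ otherwise), equal to $0$, so every summand is $0$, in particular $1 = 0$, a contradiction. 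Hence $\neg\phi(m)$, and since $m$ was arbitrary, $\tsos\vdash\forall i\,\neg\phi(i)$.

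I expect the main obstacle to be bookkeeping rather than a conceptual difficulty: one has to be careful that the sum-of-squares axiom scheme of $\tsos$ is stated for a single ring-valued term $t(i)$ with a big-sum index, so when we have finitely many explicitly-written squares $(s_j[X])^2$ coming from a $\sosbool$ refutation we need to repackage them as the instantiation of such a scheme — choosing the bound $N = t(m)+1$ (with $t\in\Find$ describing the number of square terms $s_j$ in $\pi_m$) and a term realizing the sequence $(1, s_0[X], \dots)$, using a case function from $\Fring$. This is entirely analogous to how the proof of Proposition~\ref{pro:fPHP_TSOS} turns the explicit finite sum $1 + \dots + 1 + A(0)^2 + \dots + A(n-1)^2$ into a single application of the sum-of-squares axiom. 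A secondary point of care is, in the forward direction, making sure the degree blowups stay bounded: the free-cut-free $\LKsos$ proof translates in some fixed degree $d'$ depending only on the proof (not on $n$), the translation into $\pcP$ preserves this up to a constant as in Theorem~\ref{thm:transLKR}, and the $\pcP\to\sosbool$ simulation of Theorem~\ref{thm:pcPSoSsim} at most doubles it, so the final $\sosbool$ degree is a fixed constant independent of $n$, as required.
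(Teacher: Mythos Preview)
Your proposal is correct and follows essentially the same approach as the paper: for the forward direction you extend the sequent calculus with the new axioms, translate into $\pcP$ using the sum-of-squares and radical rules, and appeal to the $\pcP$/$\sosbool$ simulation, exactly as the paper does. For the converse you formalize soundness directly for the static $\sosbool$ identity, whereas the paper phrases it as extending Theorem~\ref{the:prop_soundness_PCrad} to cover the dynamic sum-of-squares rule (implicitly passing through $\pcP$ via Proposition~\ref{pro:sos_to_pcp}); this is a cosmetic difference, and your repackaging of the squares into a single big-sum term so that the sum-of-squares scheme applies is the right way to handle it.
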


\begin{proof}
Suppose $\tsos \vdash \forall i \neg \phi(i)$.
We extend the proof of Theorem~\ref{thm:transTPC} to deal with the
the sum-of-squares scheme and the Boolean axiom. 
For the sum-of-squares scheme, we
extend the sequent calculus $\LK_\reals$ by
adding the sequents
\[
\sum_{i<r} t(i)^2=0, s<r 
\longrightarrow 
t(s)=0
\]
as axioms,
for all ring-valued terms $t$ and index-valued terms~$r,s$,
where all these terms may have other parameters.
We must then show that, given such an axiom, there is $d \in \NN$
such that for every assignment $\alpha$ there
is a depth~$d$ $\sos$ derivation of
$\ang{ \sum_{i<r} t(i)^2=0 }_{\alpha}
\cup \ang{s<r}_\alpha \vdash \ang{t(s)=0}_\alpha$.
If~$\ang{s}_\alpha \ge \ang{r}_\alpha$ in the standard
model, then $\ang{s<r}_\alpha$
is $\{ 1 = 0 \}$ and the derivation is trivial.
Otherwise, working through the translations, 
we need derivations
$\sum_{i<n} \ang{t(i)}_\alpha^2 = 0
\vdash \ang{t(m)}_\alpha = 0$
for some $m<n \in \NN$, which can be done using 
the sum-of-squares rule and the radical rule.
For the Boolean axiom, we further extend $\LK_\reals$ by
adding the sequent
\[
\emptyset \longrightarrow X(r)(1-X(r))=0
\]
for every index-valued term $r$. This straightforwardly
translates into a propositional Boolean axiom.

For the other direction, we need to extend the corresponding proof of PC soundness in the theory 
by showing that $\tsos$ can prove the soundness of the
sum-of-squares rule and the propositional Boolean axioms. 
This is straightforward.
\end{proof}

%
%
  
\section{Theories that reason directly about inequalities}\label{sec:beyond-theories}

We have developed a first-order theory, $\tsos$, with the property that the sentences 
about~$X$ (of a suitable form) which are refutable in $\tsos$ are precisely the principles that
are refutable in constant depth $\sos$. This gives us a new way of constructing $\tsos$ refutations.
But this theory has the disadvantage of being somewhat unnatural, as intuitively a natural theory for $\sos$ would allow us
to reason directly about inequalities on the ring sort. This is something $\tsos$ obviously
cannot do, as it does not
even have an inequality symbol in its language. Instead we have to reason explicitly about sums
of squares, as was illustrated by the proof of the functional pigeonhole principle in 
Section~\ref{sec:tsos_fphp}, and in this sense we have not gained much from working 
in~$\sos$.

In this section we sketch some approaches for getting a more ``usable" theory than $\tsos$.
Our goal is to construct a theory $T$ which extends $\tsos$ but has a richer language
with in particular some kind of ring-inequality symbol $\le$ 
which allows us to talk explicitly about inequalities between ring terms. We should be able to reason
robustly about inequalities, meaning that there should be natural ordering axioms for $\le$
and we should be able to do induction on formulas 
nontrivially involving $\le$.
The expanded theory~$T$ should preserve the property of $\tsos$ that every sentence refutable
in $T$ (of a suitable form) translates into a principle with constant degree $\sosbool$ refutations. 

We do not take this approach here, but a natural way to achieve this would be
 for~$T$ to be conservative over $\tsos$, that is, for
every relevant sentence in the language of $\tsos$ that is provable in $T$ to be already provable in $\tsos$.
A suggestive model is the Artin-Schreier Theorem, which in particular
shows that a formally real field (that is,
one in which $-1$ is not a sum of squares) can be ordered; but the presence of big sums and 
the oracle $X$ are obstacles to adapting this to our theories.

\subsection{$\tsoso$ - unrestricted use of ordering} \label{sec:tsos_strong}

We first consider what happens if we introduce ordering in a naive way.
We define a language $\LngSoS$ by taking $\LngPC\reals$ and 
adding a binary relation symbol $\ge$ for an partial order on the ring sort.
We define $\IndSoS$ in the same was as $\IndPC\reals$ except that we also
allow the $\geq$ symbol in all places that $\IndPC\reals$ allows the ring equality symbol~$=_\mathrm{ring}$.
In particular, formulas made from atomic formulas of the form $s \ge t$, for ring terms $s, t$,
and closed under $\wedge$, $\vee$ and $\forall$ are $\IndSoS$ formulas.

The theory $\tsoso$ is $\tsos$ with the addition of
\begin{itemize}
\item
Axioms for a partially ordered ring, namely
\begin{enumerate}
\item[i.]
$\ge$ is a partial order
\item[ii.]
$x \ge y \, \supset \, x + z \ge y +z$
\item[iii.]
$x \ge 0 \wedge y \ge 0 \, \supset \, x \cdot y \ge 0$
\item[iv.]
$x^2 \ge 0$
\end{enumerate}
\item
\emph{Background truth} axioms in the new language, that is, every sentence which does
not mention the oracle symbol $X$ or any ring variable and which is true in the standard model
\item
Induction for every formula $\phi(i)$ in $\IndSoS$ (with other parameters allowed).
\end{itemize}

The next proposition shows that $\tsoso$ is too strong, because it proves the soundness of resolution.
Since resolution is complete (and we do not care about proof size) this means that it proves that 
every unsatisfiable set of clauses is not satisfied by $X$. In particular, this means that if $\S$
is any constant-degree set of polynomial equations that are unsatisfiable over $0/1$ assignments,
then $\tsoso$ proves that $\S$ is not satisfied by~$X$. Hence if a
version of our translation theorem for $\tsos$,
Theorem~\ref{the:SoS_translation}, held for $\tsoso$, it would imply that $\S$ has a constant-degree
refutation in $\sosbool$, which is not in general true.

This theory seems rather to correspond to the fully dynamic version of constant-degree~$\sos$, 
which is a very strong system. For example, the (complete) Lovasz-Schrijver proof
system is the degree 2 fragment of it~\cite{GHP02}.

\begin{proposition} \label{pro:simulate_resolution}
Let $C_1, \dots, C_m$ be a sequence of clauses in variables
$x_1, \dots, x_n$ which are refutable in resolution
(we assume that the structure of these clauses, and of the resolution refutation, 
is naturally described by functions in $\Find$ which take $n$ as a parameter).
Then $\tsos$ refutes the statement that all clauses $C_1, \dots, C_m$ 
are satisfied by the assignment given by $X$.
\end{proposition}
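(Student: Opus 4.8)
The plan is to formalize inside $\tsos$ the textbook soundness proof for resolution, in the spirit of the formalization of $\pcrad\R$-soundness in Theorem~\ref{the:prop_soundness_PCrad}: assuming $X$ satisfies every input clause $C_1,\dots,C_m$, show along the refutation that every derived clause is satisfied, reaching a contradiction at the empty clause. Since all functions describing the clauses and the refutation lie in $\Find$ and all true statements about them are background-truth axioms, $\tsos$ automatically ``knows'' the syntactic structure of the refutation, so the only mathematical content is the preservation of satisfaction, which is where the Boolean and sum-of-squares axioms are used.

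Fix the data. There are $\Find$-functions giving, as functions of $n$: the number $m$ of input clauses, the width $w(j)$ of $C_j$ and, for each of its literals, its variable $\mathrm{var}_C(j,k)$ and polarity $\mathrm{sign}_C(j,k)$; and likewise, for the refutation lines $R_1,\dots,R_T$, their number $T$, each width $L(t)$ and the literals of $R_t$, the rule deriving $R_t$ (either $R_t$ is an input clause $C_{j(t)}$, or $R_t=\mathrm{Res}_{x_{v(t)}}(R_{t'(t)},R_{t''(t)})$ with $t'(t),t''(t)<t$), and the fact that $R_T$ is empty. Every syntactic fact we use --- that in a resolution step $R_t$ consists of the literals of $R_{t'}$ other than $x_{v(t)}$ together with the literals of $R_{t''}$ other than $\neg x_{v(t)}$, with the resolved literal occurring in $R_{t'},R_{t''}$ with the stated polarities --- does not mention $X$, is true in the standard model, and so is an axiom of $\tsos$.

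Using the Boolean axiom, $X$ is $0/1$-valued, so the ring term $\ell(t,k):=(X(\mathrm{var}_R(t,k))-\mathrm{sign}_R(t,k))^2$ provably satisfies $\ell(t,k)=\ell(t,k)^2$ and, for $k<L(t)$, equals $0$ exactly when literal $k$ of $R_t$ is satisfied by $X$ and $1$ when it is falsified (define $\ell_C(j,k)$ similarly). Encode ``$R_t$ is falsified by $X$'' by the atomic --- hence $\IndPC\reals$ --- formula $\mathrm{Fls}_R(t):=\big(\sum_{k<L(t)}(1-\ell(t,k))=0\big)$, and likewise $\mathrm{Fls}_C(j)$. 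Then $\tsos$ proves: (i) since each $1-\ell(t,k)$ is its own square, the sum-of-squares axiom gives $\mathrm{Fls}_R(t)\to\forall k<L(t)\,\ell(t,k)=1$; (ii) a short induction on partial sums gives the converse, so $\mathrm{Fls}_R(t)\leftrightarrow\forall k<L(t)\,\ell(t,k)=1$; and (iii) $\mathrm{Fls}_R(T)$, since $R_T$ is empty and $\sum_{k<0}=0$. Next comes the single-step lemma: $\tsos$ proves that for an input line $\mathrm{Fls}_R(t)\leftrightarrow\mathrm{Fls}_C(j(t))$, and that for a resolution line $R_t=\mathrm{Res}_{x_{v(t)}}(R_{t'},R_{t''})$, if $\neg\mathrm{Fls}_R(t')$ and $\neg\mathrm{Fls}_R(t'')$ then $\neg\mathrm{Fls}_R(t)$. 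The latter is by cases on $X(v(t))\in\{0,1\}$: if it is $1$, then $\neg x_{v(t)}$ is a falsified literal of $R_{t''}$, so a literal of $R_{t''}$ satisfied by $X$ (one exists by $\neg\mathrm{Fls}_R(t'')$ and (ii)) must be one of the literals of $R_{t''}$ other than $\neg x_{v(t)}$, hence a literal of $R_t$, whence $\neg\mathrm{Fls}_R(t)$ by (i); the case $X(v(t))=0$ is symmetric via $R_{t'}$ and $x_{v(t)}$.

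It then remains to propagate ``not falsified'' up the refutation: working in an arbitrary model of $\tsos$ with $n$ fixed, assume for contradiction $\neg\mathrm{Fls}_C(j)$ for all $j\le m$; by induction on the line index $t$ and the single-step lemma, conclude $\neg\mathrm{Fls}_R(T)$, contradicting (iii). I expect this last induction to be the main obstacle, because it must be carried out within the induction scheme of $\tsos$, i.e.\ on an $\IndPC\reals$ formula. The obvious choice ``$R_s$ is not falsified for every $s\le t$'' is not $\IndPC\reals$: ``$R_s$ is not falsified'' is $\neg\mathrm{Fls}_R(s)$, equivalently an existential quantifier over a formula mentioning $X$, and since the widths $L(s)$ are not uniformly bounded this is not even a bounded disjunction. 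The fix is to recast the iteration so that the quantity inducted on is atomic/universal: one works with $\mathrm{Fls}_R$ directly, uses the single-step lemma read contrapositively (a falsified resolution line forces a falsified line among the ones it was derived from) together with the equivalence (ii), and pushes negations outward as far as the syntax permits, so that the induction formula is a universally quantified conjunction of equations rather than an existential statement about $X$; the hypothesis $\neg\mathrm{Fls}_C(j)$ for $j\le m$ then kills the base cases, and instantiating at $u=T$ yields the contradiction. Everything else --- the $0/1$ arithmetic of the $\ell(t,k)$, the syntactic axioms, and the partial-sum inductions --- is routine once the Boolean and sum-of-squares axioms are available.
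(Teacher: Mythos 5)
Your approach cannot succeed as stated, because you try to carry out the whole argument in $\tsos$ rather than in $\tsoso$. The proposition (modulo a typo in its statement --- it says $\tsos$, but the preceding sentence, the subsection heading, and the proof all make clear the intended theory is $\tsoso$, with the ordering symbol $\ge$) is there precisely to demonstrate that $\tsoso$ is \emph{too strong} for a translation theorem like Theorem~\ref{the:SoS_translation}. Indeed, if the proposition held for $\tsos$ itself, then Theorem~\ref{the:SoS_translation} would produce constant-degree $\sosbool$ refutations of the translated statement ``all clauses are satisfied by~$X$'' for \emph{every} resolution-refutable CNF family --- including Tseitin contradictions on expanders, which by Grigoriev require linear degree in $\sosbool$. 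So the statement you are trying to prove for $\tsos$ is false.

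The concrete gap is the one you flag yourself at the end, and your proposed ``fix'' does not repair it. Your single-step lemma, read contrapositively, gives $\mathrm{Fls}_R(t) \supset \mathrm{Fls}_R(t') \vee \mathrm{Fls}_R(t'')$. To pass from $\mathrm{Fls}_R(T)$ to ``some input clause is falsified'' you need to propagate a witness \emph{down} the refutation DAG, but at each step the choice between the parents depends on~$X(v(t))$. Since symbols in $\Find$ (and $\Fring$) cannot depend on~$X$, no term can name the path, so the obvious induction formulas --- ``$R_s$ is not falsified for all $s\le t$'', or ``for all $s\le t$, if $R_s$ is falsified then some input clause is falsified'' --- involve a negation or implication applied to a formula mentioning~$X$ and hence lie outside~$\IndPC\reals$. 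There is no ``pushing negations outward'' that turns this into a universally quantified conjunction of equations, because the inductive content is genuinely disjunctive: the step must \emph{choose} a parent, and that choice is $X$-dependent.

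The paper's proof avoids the issue entirely by using the ordering of $\tsoso$. It encodes ``clause $i$ is satisfied'' by the \emph{atomic} $\IndSoS$ inequality $\gamma(i)\ge 1$, where $\gamma(i):=\sum_{x_j\in C_i}X(j)+\sum_{\bar x_j\in C_i}(1-X(j))$, and then inducts on the $\IndSoS$ formula $\phi(k):=\forall i\le k\,(\gamma(i)\ge 1)$. The ordering lets one assert that a clause of unbounded width is satisfied \emph{without} witnessing which literal is satisfied; the inductive step then splits on $X(v(t))\in\{0,1\}$ and uses monotonicity of $\ge$ together with $X(j)\ge 0$ and $1-X(j)\ge 0$. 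This is exactly the expressive power $\tsos$ lacks: without $\ge$, the only way to say ``some literal of an unboundedly wide clause is satisfied'' is via an existential quantifier (or negation) over a formula mentioning~$X$, which is not $\IndPC\reals$, and your $\mathrm{Fls}_R$ trick only captures the negation (``all literals falsified'') as an atomic formula, not its complement.
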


\begin{proof} 
Suppose the resolution refutation is a sequence of clauses $C_1, \dots, C_t$.
Using functions available in $\Find$, we can construct a ring-valued term 
\[
\gamma(i) := \sum_{x_j \in C_i}  X(j) + \sum_{\bar{x}_j \in C_i} (1-X(j))
\] 
where the first sum is for variables appearing positively in $C_i$ and the second is for
variables appearing negatively.

By the integral domain and Boolean axioms, for each $j$ we have $X(j) \ge 0$ and $1-X(j) \ge 0$.
Let~$C_j$ be an initial clause. From the assumption, $X(i)=1$ for some
variable~$x_i$ appearing positively in $C_j$ (or we argue similarly if it is a negative literal that is satisfied).
Using induction and the ordering axioms, we can conclude that $\gamma(j) \ge 1$.

Now we do induction on $k$ the formula $\phi(k) := \forall i \! \le \! k \, (\gamma(i) \ge 1)$.
This formula is~$\IndSoS$, as we can handle bounded index quantifiers the same way
as we did in Section~\ref{sec:bphp_proof}. The formula is true for all
$k \le m$, as already shown. The inductive step
comes down to showing that $\phi(k)$ implies $\gamma(k+1) \ge 1$, and this can be shown
by arguing
by cases on the value of~$X(j)$, where $x_j$ is the variable resolved on to derive $C_{k+1}$,
 just as in the usual proof of the soundness of resolution.
From~$\phi(t)$ we conclude that~$\gamma(t) \ge 1$, and thus that~$0 \ge 1$ since $C_t$ is empty.
This is a contradiction, since $1 \ge 0$ and~$1 \neq 0$.
\end{proof}

\subsection{Other theories}

In this section we discuss some ongoing work on how to weaken a theory like $\tsoso$
described above, into something which (a) still allows robust reasoning about orderings;
(b) still proves the soundness of $\sos$; but (c) admits a translation into constant degree $\sos$,
similar to Theorem~\ref{the:SoS_translation}.

A first observation is that the integral domain axiom plays a big role in the proof of
Proposition~\ref{pro:simulate_resolution}, but we do not seem to need it for task (b). 
In particular if we replace it with the \emph{radical axiom} $x^2=0 \supset x=0$
we still seem to be able to do the important parts of the soundness proof in 
Section~\ref{sec:formal_soundness}. Furthermore the integral domain axiom
is the only place in which a disjunction explicitly appears in our sequent calculus
(it is implicitly allowed in $\IndPC\reals$ formulas) and removing nontrivial disjunctions
makes the theory more constructive, which is useful for task (c).
However the theory still seems to be too strong with just this change, since
it is possible to prove Proposition~\ref{pro:simulate_resolution} in a constructive way,
replacing the argument by cases in the inductive step with an algebraic manipulation.

Another, extreme change is to replace the single ordering symbol $\ge$ with a family
 $\{ \ge_d : d \in \NN \}$ of symbols, each one labelled with a degree $d$.
The intuitive meaning of $s \ge_d t$ is that $s-t$ is a sum of squares of degree $d$ or less,
and we take axioms reflecting this:
\begin{enumerate}
\item[i.]
$\ge_d$ is a partial order and
$x \ge_d y \, \supset \, x \ge_e y$ for each $e>d$
\item[ii.]
$x \ge_d y \, \supset \, x + z \ge_d y +z$
\item[iii.]
$x \ge_d 0 \wedge y \ge_d 0 \, \supset \, x \cdot y \ge_{d+e} 0$
\item[iv.]
$t^2 \ge_{2d} 0$ for terms $t$ of degree $d$.
\end{enumerate}
Our general approach to task (c) is 
to be able to translate inequalities $s \le_d t$ in the first-order proof
as equations $\ang{s} - \ang{t} - U = 0$, where the polynomial $U$ is an explicit sum of squares,
constructed from the proof.
The index $d$ tells us that we should be able to do this with $U$ of degree~$d$ -- in particular 
we never need degree higher than the maximum~$d$ appearing in this way in the first-order proof.
The disadvantage is that this is not at all a natural way to think about orderings; and also 
the (non-constructive) proof of Proposition~\ref{pro:simulate_resolution} still goes through 
if we replace $\ge$ there with $\ge_2$.

We can also limit how orderings can appear in induction formulas,
for example, adding a constraint that in an induction formula, in any subformula of the form $\phi \vee \psi$
at most one of~$\phi$ and~$\psi$ can contain an inequality (in fact 
we may need the stronger condition that at most one of~$\phi$ and~$\psi$
can mention ring variables or the oracle~$X$).

We believe that weakening $\tsoso$ along these lines gives a theory with (b) and (c)
(that is, with the strength of constant-degree $\sos$).
Furthermore there is a promising approach to get  closer to (a) (robust
reasoning about inequalities), which is to only allowing reasoning in intuitionistic, rather than classical, logic.
Briefly, this is helpful because
our basic problem is how to witness inequalities with explicit sums of squares, and more 
constructive first-order proofs make this easier.
We expect that moving to a fully intuitionistic setting would allow induction for a more robust class of formulas,
involving the $\supset$ and $\neg$ connectives (but so far still requiring the ``levelled'' orderings~$\ge_d$).

Let us call this formula-class $\IndSoS^{(i)}$ and the theory $\tsos_{\geq}^{(i)}$.
The translations of $\IndSoS^{(i)}$ formulas and $\tsos_{\geq}^{(i)}$ proofs substantially differ from the translations
we dealt with in the classical case. These translations are done according to the Brouwer-Heyting-Kolmogorov interpretation
of intuitionistic logic \cite{TD88-book} and can be described informally as follows. The translation $\ptDef{\phi}(\omega)$ of a formula is
parameterized by what we call a \emph{realizing function}~$\omega$. The translation theorem for $\tsos_{\geq}^{(i)}$ proofs
then says that, for example, if there is a $\tsos_{\geq}^{(i)}$ proof of $\phi_1\supset\phi_2$, where $\phi_1, \phi_2$
are $\IndSoS^{(i)}$ formulas not containing~$\supset$, then for every $\alpha$ and~$\omega_1$ there exists $\omega_2$ and a
constant degree $\pcP$ derivation of~$\ptDef{\phi_2}(\omega_2)$ from~$\ptDef{\phi_1}(\omega_1)$. It gets a  more complicated
with nested $\supset$ symbols: for example, in case~$(\phi_1\supset\phi_2)\supset(\phi_3\supset\phi_4)$. We believe that a certain  generalization of $\pcP$ derivations would work for this, however the detailed exposition of this is technical and is beyond the scope of this paper.

\paragraph*{Acknowledgments}
The authors would like to thank Leszek Ko\l odziejczyk for helpful discussions during the preliminary stages of this work.

%
\bibliographystyle{plain}
\bibliography{PrfCmplx-Bakoma}

%
%
%

\end{document}